\date{}
\newcommand{\var}{\mathsf{var}}
\newtheorem{theorem}{Theorem}
\newtheorem{proposition}{Proposition}
\newtheorem{conjecture}{Conjecture}
\newtheorem{definition}{Definition}
\newtheorem{corollary}{Corollary}
\newtheorem{lemma}[theorem]{Lemma}
\newtheorem{remark}{Remark}
\newtheorem{openq}{Open question}
\newtheorem{claim}{Claim}
\newtheorem{example}{Example}
\title{On complexity of restricted fragments of Decision DNNF}
\author[1]{Andrea Cal\'{i}}
\author[2]{Igor Razgon}
\affil[1]{Universit\`a degli Stud\^\i{} di Napoli "Federico II", Italy, andrea.cali@unina.it}
\affil[2]{Durham University, igor.razgon@durham.ac.uk}
\begin{document}
\maketitle
\begin{abstract}
Decision \textsc{dnnf} (a.k.a. $\wedge_d$-\textsc{fbdd}) 
is an important special case of Decomposable Negation
Normal Form (\textsc{dnnf}), a landmark knowledge compilation model. 
Like other known \textsc{dnnf} restrictions, Decision \textsc{dnnf} admits 
\textsc{fpt} sized representation of \textsc{cnf}s of bounded \emph{primal}
treewidth. However, unlike other restrictions, the complexity of representation
for \textsc{cnf}s of bounded \emph{incidence} treewidth is wide open.

In \cite{dnnf2017}, we resolved this question for two restricted 
classes of Decision \textsc{dnnf} that we name
$\wedge_d$-\textsc{obdd} and Structured Decision \textsc{dnnf}. 
In particular, we demonstrated that,
while both these classes have \textsc{fpt}-sized 
representations for 
\textsc{cnf}s of bounded primal treewidth, 
they need \textsc{xp}-size for representation of \textsc{cnf}s
of bounded incidence treewidth. 

In the main part of this paper we carry out an in-depth study 
of the $\wedge_d$-\textsc{obdd} model. 
We formulate a generic methodology for proving lower 
bounds for the model. Using this methodology, 
we reestablish the \textsc{xp} lower bound provided in  \cite{dnnf2017}. 
We also provide exponential separations between \textsc{fbdd} and
$\wedge_d$-\textsc{obdd} and between $\wedge_d$-\textsc{obdd} 
and an ordinary \textsc{obdd}. The last separation is somewhat surprising 
since $\wedge_d$-\textsc{fbdd}
can be quasipolynomially simulated by \textsc{fbdd}.

We also study the  complexity of Apply operation for $\wedge_d$-\textsc{obdd}.
We demonstrate that, in general, the Apply operation leads to exponential blow
up of the resulting model. We identify a special restricted case where 
the Apply operation can be carried out efficiently. In order to define the special
case, we consider a subclass of $\wedge_d$-\textsc{obdd} \emph{embeddable}
into a linear order and introduce a novel concept of irregularity index measuring
how far the model from an ordinary \textsc{obdd}. We demonstrate that the 
efficiency of Apply for the cases where two models are embeddable into the same order
with low irregularity indices.

In the remaining part of the paper, we introduce a relaxed version
of Structured Decision \textsc{dnnf} that we name
Structured $\wedge_d$-\textsc{fbdd}. In
particular, we explain why Decision \textsc{dnnf} is equivalent
to $\wedge_d$-\textsc{fbdd} but their structured versions are distinct.
We demonstrate that this model is quite powerful for \textsc{cnf}s
of bounded incidence treewidth: it has an \textsc{fpt} representation
for \textsc{cnf}s that can be turned into ones of bounded primal treewidth
by removal of a constant number of clauses (while for both $\wedge_d$-\textsc{obdd}
and Structured Decision \textsc{dnnf} an \textsc{xp} lower bound is triggered
by just two long clauses). 

We conclude the paper with an extensive list of open questions.
\end{abstract}

\section{Introduction}

\subsection{The main objective}
Decomposable negation normal forms (\textsc{dnnf}s) \cite{DarwicheJACM}
is a model for representation of Boolean functions that is of a landmark
importance in the area of (propositional) Knowledge compilation (\textsc{kc}). 
The difference of the model from deMorgan circuits is that all the $\wedge$-gates
are \emph{decomposable} meaning, informally, that two distinct inputs of the gate cannot be
reached by the same variable. We will refer to such gates as $\wedge_d$-\emph{gates}. 
Another way to define \textsc{dnnf}s is as a 
Non-deterministic read-once branching program ($1$-\textsc{nbp}) equipped 
with $\wedge_d$-nodes (in the context of branching
programs, we refer to gates as nodes). This latter view is useful for understanding that
(from the \emph{univariate} perspective, that is when the size bounds
depend on the number $n$ of input variables) 
\textsc{dnnf}s are not very strong compared to $1$-\textsc{nbp}s since the 
latter can quasipolynomially simulate the former \cite{RazgonCP15}. 
In particular, a class of Boolean functions requiring 
for its presentation $1$-\textsc{nbp}s exponentially large in the number of variables,  
also requires exponentially large \textsc{dnnf}s.

The strength of $\wedge_d$-gates becomes \emph{distinctive} when we consider 
\textsc{dnnf}s from the \emph{multivariate} perspective. 
In particular, \textsc{cnf}s of primal treewidth at most $k$  have \textsc{fpt}
presentation as a \textsc{dnnf} \cite{DarwicheJACM}  but, in general require an \textsc{xp}-sized
representation as a $1$-\textsc{nbp}. \cite{RazgonAlgo}. 
The \textsc{fpt} upper bound as above turned out to be robust in the sense
that it is preserved by \emph{all} known restrictions of \textsc{dnnf}s that
retain $\wedge_d$-gates. In light of this, it is natural to investigate if \textsc{fpt}
upper bounds hold for the \emph{incidence} treewidth. 

Among all known \textsc{dnnf} restrictions using $\wedge_d$-gates,
there are two that are \emph{minimal} in the sense that every other restriction
is a generalization of one of them. 
One such minimal restriction is a \emph{Sentential Decision Diagram} \cite{SDD}
that admits \textsc{fpt} presentation for a much more general class than
\textsc{cnf}s of bounded incidence treewidth \cite{BovaPods}.  
The other restriction is \emph{Decision} \textsc{dnnf} \cite{HuangJair,DesDNNF} where the parameterized  complexity
of representation of \textsc{cnf}s of bounded incidence treewidth is an open question
formally stated below. 

\begin{openq} \label{mainopen1}
Are there a function $f$ and a constant $a$ so that 
a \textsc{cnf} of incidence treewidth at most $k$ can be
represented as a Decision \textsc{dnnf} of size at 
most $f(k) \cdot n^a$?
\end{openq}

To understand the reason of difficulty of Open Question \ref{mainopen1}, it is convenient to view Decision 
\textsc{dnnf}s as \emph{Free Binary Decision Diagrams} (\textsc{fbdd}s) equipped
with $\wedge_d$-nodes \cite{BeameDNNF}. Throughout this paper, we will refer to this equivalent 
representation as $\wedge_d$-\textsc{fbdd}. 
This representation clearly illustrates that the $\vee$-gates of Decision \textsc{dnnf}
are only capable to `guess' the value of variables associated with them. 
However, it is not clear how to handle the guessing of satisfaction/non-satisfaction of clauses. 

The \emph{main objective} of the research presented in this paper 
is to make a progress towards resolution of Open Question \ref{mainopen1}. 
In the next subsection of the Introduction, we state our results, 
the motivation is discussed in Subsection \ref{subsec:motiv}, 
the structure of the rest of the paper is discussed in Subsection \ref{subsec:struc}.

\subsection{Our results}
In this paper we tackle the Open Question \ref{mainopen1} by restricting it to special cases of $\wedge_d$-\textsc{fbdd}.
We consider two restrictions: obliviosness and structuredness.

Arguably, the best known restricted class of \textsc{fbdd} is Ordered Binary Decision Diagram (\textsc{obdd}). 
Applying this restriction to $\wedge_d$-\textsc{fbdd}s, we obtain $\wedge_d$-\textsc{obdd}s or,
to put it differently, \textsc{obdd}s equipped with $\wedge_d$-nodes.
Arguing as in \cite{HuangJair}, we conclude that 
$\wedge_d$-\textsc{obdd} represents the trace of a \textsc{dpll} model counter with
component analysis and \emph{fixed variable ordering}. 

To the best of our knowledge the $\wedge_d$-\textsc{obdd}
was first introduced in \cite{dnnf2017}. 
In \cite{ANDOBDDuneven} the model appears under the name $OBDD[\wedge]_{\prec}$
where $\prec$ is an order over a chain.

While $\wedge_d$-\textsc{obdd} are known to 
have \textsc{fpt} representation \cite{OztokCP} parameterized by the primal treewidth
of the input \textsc{cnf},
(as noted in \cite{dnnf2017} the construction in \cite{OztokCP} 
is, in fact, structured $\wedge_d$-\textsc{obdd}),
it is shown in \cite{dnnf2017} that representation of \textsc{cnf}s
parameterized by the bounded \emph{incidence} treewidth requires an
\textsc{xp}-sized representation as $\wedge_d$-\textsc{obdd}s. 

In this paper we introduce a generic methodology for proving 
lower bounds for $\wedge_d$-\textsc{obdd}s. 
The key part  of this methodology is a novel concept of
an ($\wedge_d$-\textsc{obdd}-) \emph{fooling set} $\mathcal{F}$ of assignments defined
for a Boolean function $f$ and a permutation $\pi$ of $\var(f)$, the set 
of variables of $f$. We demonstrate that if $B$ is a $\wedge_d$-\textsc{obdd} 
representing function $f$ and respecting the order $\pi$ then $|B| \geq |\mathcal{F}|$. 
Thus proving large lower bounds reduces to establishing existence of a large fooling set. 

We use the above methodology to obtain two lower bounds 
for $\wedge_d$-\textsc{obdds}. 
First, we restate the \textsc{xp} lower bound proof of \cite{dnnf2017}. 
Second, we demonstrate an 
exponential separation between \textsc{fbdd} and $\wedge_d$-\textsc{obdd}. In particular, 
we introduce a class of \textsc{cnfs} representable by poly-sized \textsc{fbdd}
and prove an exponential lower bound for representation of this class of \textsc{cnf}s as
$\wedge_d$-\textsc{obdd}s.

We also demonstrate an exponential separation
between $\wedge_d$-\textsc{obdd}s and \textsc{obdd}s.


An important property of \textsc{obdd} is an efficient implementation
of Apply (conjunction) operation. In particular, let $B_1$ and $B_2$ be two
\textsc{obdd}s over the same set of variables and respecting the same order that represent functions $f(B_1)$ and $f(B_2)$
respectively. Then the function $f(B_1) \wedge f(B_2)$ can be represented by an \textsc{obdd} of size $|B_1| \times |B_2|$. 
It is thus natural to study whether $\wedge_d$-\textsc{obdd} allows efficient Apply operation. 
We demonstrate that in general this is not the case. 
In particular, we show that vertical and horizontal lines of an $n \times n$ grid can be represented
as linear sized $\wedge_d$-\textsc{obdd} respecting the same `dictionary' order. 
On the other hand, their conjunction is a `fully-fledged' grid and it requires an exponential-sied representation
even as a \textsc{dnnf}, let alone an $\wedge_d$-\textsc{obdd}. 

In light of the above, it is natural to study a restriction  of $\wedge_d$-\textsc{obdd} that 
enables efficient implimentation of Apply
We introduce a notion of \emph{embeddability} of a model into a linear order that can be seen 
as a weak form of structuredness. For Boolean circuits, this notion is known under the name
of compatible orders \cite{CompatOrder}, for probabilistic circuits, a similar idea of contiguity was
used in \cite{PCrestruct} for a similar purpose of circuits multiplication.  

We introduce a novel notion of an irregularity index that, intutively 
speaking, shows how far an $\wedge_d$-\textsc{obdd} respecting an order $\pi$ is from an \textsc{obdd}
embeddable in the same order. We demonstrate that if $B_1$ and $B_2$ are $\wedge_d$-\textsc{obdd}s respecting 
the same order with the respective irregularity indices $k_1$ and $k_2$ then $f(B_1) \wedge f(B_2)$
can be represented as a $\wedge_d$-\textsc{obdd} of size $(|B_1| \times |B_2|)^{O(k_1+k_2)}$. 
We believe that the notion of irregularity idex gives rise to a new research direction of general interest
in the area of \textsc{kc}. We provide the related discussion in the Conclusion section
posing a number of conjectures and open questions.

As we stated above, the second restriction on
$\wedge_d$-\textsc{fbdd}s considered in this paper is 
\emph{structuredness}. 
In the context of \textsc{cnf}s of bounded incidence treewidth,
structured Decision \textsc{dnnf} is a rather restrictive 
model. Indeed, it is shown in \cite{dnnf2017} that taking a monotone
$2$-\textsc{cnf} of bounded primal treewidth and adding one 
clause with negative literals of all the variables requires 
structured Decision \textsc{dnnf}s representing such 
\textsc{cnf}s to be \textsc{xp}-sized in terms of their incidence 
treewidth. By weakening the structurality restriction, we introduce
a more powerful model that we name \emph{structured} 
$\wedge_d$-\textsc{fbdd}s. 

The starting point of the construction is that transformation
from Decision \textsc{dnnf} into $\wedge_d$-\textsc{fbbd}
leads to an encapsulation effect: a $\vee$-gate and two $\wedge_d$-gates 
implementing Shannon decomposition are replaced by a single
decision node. The structured $\wedge_d$-\textsc{fbdd} requires 
that only the $\wedge_d$-nodes explicitly present in the model respect
a \emph{vtree} or to put it differently, no constraint is imposed on the 
$\wedge_d$-gates `hidden' inside the decision nodes. 

We demonstrate that structured $\wedge_d$-\textsc{fbdd} is a rather 
strong model. In particular, unlike $\wedge_d$-\textsc{obdd}
and Structured Decision \textsc{dnnf}, it has \textsc{fpt}-sized representation
for \textsc{cnf}s that can be turned into \textsc{cnf}s of bounded primal
treewidth by removal of a constant number of clauses. 
We leave it open as to whether Structured $\wedge_d$-\textsc{fbdd}s admit
an \textsc{fpt} representation of \textsc{cnf}s of bounded incidence treewidth.

\subsection{Motivation} \label{subsec:motiv}
The motivation of the proposed results is double-fold.
First, we make a progress towards resolution of Open Question \ref{mainopen1}.
In this context, we demonstrate that $\wedge_d$-\textsc{fbdd} respecting
a specific order cannot efficiently represent \textsc{cnf}s of bounded incidence
treewidth. To put it differently, if $\wedge_d$-\textsc{fbdd} admits an 
\textsc{fpt} representation of \textsc{cnf}s of bounded incidence treewidth
then this representation must heavily depend on absence of any static order
of querying of variables. 

We also introduce a novel concept of structured $\wedge_d$-\textsc{fbdd}, demonstrate 
its power as compared to the known models and propose understanding the complexity
of its representation of \textsc{cnf}s of bounded incidence treewidth as the 
next step towards resolution of Open Question \ref{mainopen1}.

The second aspect of motivation is that our results for $\wedge_d$-\textsc{obdd}
are interesting in their own right. 
Indeed, in the context of tracing of model counters, the model can be seen as tracing
model counters with component analysis and a \emph{static ordering} of variables. 
Our results lead to two conclusions about such solvers. First, the \textsc{xp} lower bound 
for \textsc{cnf}s of bounded incidence treewidth implies that the static ordering \emph{does not}
allow efficient handling of CNFs of bounded incidence treewidth. 

Second, it is interesting to see the exponential separation between $\wedge_d$-\textsc{obdd}
in contrast with quasi-polynomial simulation  of $\wedge_d$-\textsc{fbdd} by \textsc{fbdd} \cite{BeameDNNF}.
This discrepancy demonstrates that, in presence of static ordering,  the component analysis
is way more important for efficiency of the solver than in case of a dynamical ordering. 
From the \emph{practical perspective}, this conclusion means that, \emph{in presence of a static ordering},
the time overhead spent to component analysis is highly worthwhile.

Furthermore, our results related to Apply operation for $\wedge_d$-\textsc{obdd} may be of interest for the area
of \textsc{kc} in general beyond the formalisms considered in this paper. 
For example, the novel  notion of irregularity index provides an \emph{efficient} way of turning
a restricted \textsc{dnnf} into a model of the corresponding class of read-once branching programs. 
Similar transformations are known for Decision \textsc{dnff}s and unrestricted \textsc{dnnf}s
and result in a quasi-polynomial blow-up in the size of the resulting read-once branching programs \cite{DNNWlowertw1,DNNWlowertw2}.
Moreover, as we show in this paper,  the transformation from $\wedge_d$-\textsc{obdd} to \textsc{obdd}
may even result in an exponential blow-up.
However when the transformation is controlled by the irregularity index parameter,
it becomes efficient if the value of the parameter is \emph{small}.

\subsection{Structure of the paper} \label{subsec:struc}
The rest of the paper is organized as  follows. 
The necessary terminology and the related statements are provided in 
Section \ref{sec:prelim}. 
The \textsc{xp} lower bound and the separations for $\wedge_d$-\textsc{obdd}
are provided in Section \ref{sec:obddsepmain}.
The complexity of Apply operation for $\wedge_d$-\textsc{obdd} is discussed
in Section \ref{sec:apply}. 
The structuredness applied to  $\wedge_d$-\textsc{fbdd} is discussed in Section \ref{sec:struct}.
The concluding discussion and the list of related open questions are provided in Section \ref{sec:concl}.

\section{Preliminaries} \label{sec:prelim}
\subsection{Assignments, rectangles, Boolean functions}
Let $X$ be a finite set of variables. A {\em (truth) assignment} on $X$ is a mapping from $X$ to $\{0,1\}$. The set of truth assignments on $X$ is denoted by $\{0,1\}^X$. 
We reserve lowercase boldface English letters to denote the assignments. 

In this paper we will consider several objects for which the set of variables is naturally defined:
assignments, Boolean functions, \textsc{cnf}s, and several knowledge compilation models. 
We use the unified notation $\var(\mathcal{O})$ where $\mathcal{O}$ is the considered object.
In particular, the set of variables (that is, the domain) of an assignment ${\bf a}$ is denoted 
by $\var({\bf a})$. 

We observe that if ${\bf a}$ and ${\bf b}$ are assignments and for each $x \in \var({\bf a}) \cap \var({\bf b})$,
${\bf a}(x)={\bf b}(x)$ then ${\bf a} \cup {\bf b}$ is an assignment to $\var({\bf a}) \cup \var({\bf b})$.

Let $\mathcal{H}$ be a set of assignments, not necessarily over the
same set of variables. 
Let $\var(\mathcal{H})=\bigcup_{{\bf a} \in \mathcal{H}} \var({\bf a})$. 
We say that $\mathcal{H}$ is \emph{uniform} 
if for each ${\bf a} \in \mathcal{H}$, $\var({\bf a})=\var(\mathcal{H})$. 
For example the set $\mathcal{H}_1=\{\{(x_1,0),(x_2,1)\}\{(x_2,0),(x_3,1)\}\}$
is not uniform and $\var(\mathcal{H}_1)=\{x_1,x_2,x_3\}$. 
On the other hand, $\mathcal{H}_2=\{\{(x_0,1),(x_2,0),(x_3,1)\}, \\
\{(x_1,1),(x_2,1),(x_3,0)\}\}$ is uniform but, again,
$\mathcal{H}_2=\{x_1,x_2,x_3\}$. 

We are now going to introduce an operation that is, conceptually,
very similar to the Cartesian product. We thus slightly abuse the notation
by referring to this operation as the Cartesian product. 
So, let $\mathcal{H}_1$ and $\mathcal{H}_2$ be sets of assignments with
$\var(\mathcal{H}_1) \cap \var(\mathcal{H}_2)=\emptyset$.
We denote $\{{\bf a} \cup {\bf b}| {\bf a} \in \mathcal{H}_1, {\bf b} \in \mathcal{H}_2\}$
by $\mathcal{H}_1 \times \mathcal{H}_2$. 
For example, suppose that
$\mathcal{H}_1=\{\{(x_1,1),(x_2,0)\}, \{(x_2,1)\}\}$ and
$\mathcal{H}_2=\{\{(x_3,1)\},\{(x_3,0),(x_4,0)\}\}$. 
Then $\mathcal{H}_1 \times \mathcal{H}_2=
\{\{(x_1,1),(x_2,0),(x_3,1)\}, \\ \{(x_1,1),(x_2,0),(x_3,0),(x_4,0)\},
\{(x_2,1),(x_3,1)\}, \{(x_2,1),(x_3,0),(x_4,0)\}\}$. 
We note that for any set $\mathcal{H}$ of assignments, 
$\mathcal{H} \times \{\emptyset\}=\mathcal{H}$.

If both $\mathcal{H}_1$ and $\mathcal{H}_2$ are uniform and both $\var(\mathcal{H}_1)$
and $\var(\mathcal{H}_2)$ are non-empty
then $\mathcal{H}_1 \times \mathcal{H}_2$ is called a \emph{rectangle}.
More generally, we say that a uniform set $\mathcal{H}$ of assignments 
is a \emph{rectangle} if there are uniform sets $\mathcal{H}_1$ and $\mathcal{H}_2$
of assignments such that $\var(\mathcal{H}_1)$ and $\var(\mathcal{H}_2)$ partition
$\var(\mathcal{H})$ such that $\mathcal{H}=\mathcal{H}_1 \times \mathcal{H}_2$. 
We say that $\mathcal{H}_1$ and $\mathcal{H}_2$ are \emph{witnesses} for 
$\mathcal{H}$. Note that a rectangle can have many possible pairs of witnesses,
consider, for example $\{0,1\}^X$. 

The following notion is very important for proving our results.

\begin{definition}
Let $\mathcal{H}$ be a rectangle and let $Y \subseteq \var(\mathcal{H})$. 
We say that $\mathcal{H}$ \emph{breaks} $Y$ if there is a pair $\mathcal{H}_1$
and $\mathcal{H}_2$ of \emph{witnesses} of $\mathcal{H}$ such that 
both $Y \cap \var(\mathcal{H}_1)$ and $Y \cap \var(\mathcal{H}_2)$ are nonempty. 
\end{definition}

For example $\{0,1\}^X$ breaks any disjoint non-empty subset $Y$ of $X$
of size at least $2$. 
On the other hand, let $\mathcal{H}_1=\{\{(x_1,1),(x_2,0)\}\}$
and let $\mathcal{H}_2=\{\{(x_3,0),(x_4,0)\}, \{(x_3,0),(x_4,1)\}, \{(x_3,1),(x_4,0)\}\}$. 
Then $\mathcal{H}=\mathcal{H}_1 \times \mathcal{H}_2$ does not break
$\{x_3,x_4\}$. 

We can naturally extend $\mathcal{H}_1 \times \mathcal{H}_2$ to the case
of more than two sets of assignments, that is $\mathcal{H}_1 \times \dots \times \mathcal{H}_m$
as e.g. $(\mathcal{H}_1 \times \dots \times \mathcal{H}_{m-1}) \times \mathcal{H}_m$. 
The operation is clearly commutative that is we can permute the elements in any order. 
We also use the notation $\prod_{i \in [m]} \mathcal{H}_i$ (recall that $[m]=\{1,\dots, m\}$)
We notice that if $I_1,I_2$ is a partition of $[m]$ then 
$\prod_{i \in [m]} \mathcal{H}_i= \prod_{i \in I_1} \mathcal{H}_i \times \prod_{i \in I_2} \mathcal{H}_i$.
Therefore, we observe the following. 

\begin{proposition} \label{nobreak}
Suppose that $\mathcal{H}=\prod_{i \in [m]} \mathcal{H}_i$, $m \geq 2$
and $\mathcal{H}_i$ is uniform for each $i \in [m]$. 
Further on, assume that $\mathcal{H}$ is not a rectangle breaking a set $Y \subseteq \var(\mathcal{H})$
Then there is $i \in [m]$ such that $Y \subseteq \var(\mathcal{H}_i)$. 
\end{proposition}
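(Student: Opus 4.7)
The plan is to establish the contrapositive: assuming $Y$ is not contained in $\var(\mathcal{H}_i)$ for any single $i \in [m]$, I will exhibit a pair of witnesses for $\mathcal{H}$ that demonstrates it breaks $Y$, contradicting the hypothesis. Since the variable sets $\var(\mathcal{H}_1),\dots,\var(\mathcal{H}_m)$ are pairwise disjoint (by the side condition imposed in the definition of $\times$) and their union is $\var(\mathcal{H})$, every variable of $Y$ lies in exactly one $\var(\mathcal{H}_i)$. If no single $\var(\mathcal{H}_i)$ contains $Y$, then I can pick two distinct indices $i_1, i_2 \in [m]$ with $Y \cap \var(\mathcal{H}_{i_1}) \neq \emptyset$ and $Y \cap \var(\mathcal{H}_{i_2}) \neq \emptyset$.

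I would then fix any partition $[m] = I_1 \sqcup I_2$ with $i_1 \in I_1$ and $i_2 \in I_2$, and set $\mathcal{G}_j = \prod_{i \in I_j} \mathcal{H}_i$ for $j \in \{1,2\}$. The next step is to verify that $(\mathcal{G}_1, \mathcal{G}_2)$ is a legal witness pair for $\mathcal{H}$ in the sense of the rectangle definition. Each $\mathcal{G}_j$ is uniform with $\var(\mathcal{G}_j) = \bigcup_{i \in I_j} \var(\mathcal{H}_i)$; this is a routine induction on $|I_j|$ using the fact that a union $\mathbf{a} \cup \mathbf{b}$ of compatible assignments has domain $\var(\mathbf{a}) \cup \var(\mathbf{b})$. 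The two variable sets $\var(\mathcal{G}_1)$ and $\var(\mathcal{G}_2)$ are disjoint and partition $\var(\mathcal{H})$, and both are non-empty since $\var(\mathcal{H}_{i_j}) \subseteq \var(\mathcal{G}_j)$ already meets $Y$. Finally, the identity $\mathcal{H} = \mathcal{G}_1 \times \mathcal{G}_2$ is exactly the associativity/commutativity remark stated immediately before the proposition.

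With this witness pair in hand, the conclusion is immediate: $Y \cap \var(\mathcal{G}_1) \supseteq Y \cap \var(\mathcal{H}_{i_1}) \neq \emptyset$ and $Y \cap \var(\mathcal{G}_2) \supseteq Y \cap \var(\mathcal{H}_{i_2}) \neq \emptyset$, so $\mathcal{H}$ breaks $Y$ via the witnesses $\mathcal{G}_1, \mathcal{G}_2$, contradicting the assumption of the proposition. There is no substantial obstacle in the argument; the only step requiring a sentence of justification is the uniformity of the grouped products $\mathcal{G}_j$, which is exactly what allows an $m$-fold factorization to be collapsed into a $2$-fold factorization whose components still satisfy the non-emptiness and disjointness conditions demanded of rectangle witnesses.
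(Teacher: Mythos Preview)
Your proof is correct and follows exactly the approach the paper intends: the paper states the partition identity $\prod_{i\in[m]}\mathcal{H}_i=\prod_{i\in I_1}\mathcal{H}_i\times\prod_{i\in I_2}\mathcal{H}_i$ and then presents the proposition as an immediate observation, without spelling out the contrapositive argument. You have simply made that observation explicit, including the uniformity and non-emptiness checks needed to ensure the grouped products qualify as rectangle witnesses.
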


Next, we define the notions of projection and restriction. 

\begin{definition}
\begin{enumerate}
\item Let ${\bf g}$ be an assignment and let $U$ be a set of variables. 
The \emph{projection} $Proj({\bf g},U)$ is the assignment ${\bf h}$ such that 
$\var({\bf h})=\var({\bf g}) \cap U$ and for each $v \in \var({\bf h})$,
${\bf h}(v)={\bf g}(v)$. 
Let $\mathcal{H}$ be a set of assignments and let ${\bf a}$ be an assignment such
that $\var({\bf a}) \subseteq \var(\mathcal{H})$. 
Then $\mathcal{H}|_{\bf a}=\{{\bf b} \setminus {\bf a}| {\bf b} \in \mathcal{H}, {\bf a} \subseteq {\bf b}\}$.
We call $\mathcal{H}|_{\bf a}$ the \emph{restriction} of $\mathcal{H}$ by ${\bf a}$. 
To extend the notion of restriction to the case where $\var({\bf a})$ is not necessarily a subset 
of $\var(\mathcal{H})$, we set $\mathcal{H}|_{\bf a}=\mathcal{H}_{Proj({\bf a},\var(\mathcal{H}))}$.
\end{enumerate}
\end{definition}

For example, if $\mathcal{H}=\{\{(x_1,0), (x_2,0),(x_3,1)\},
\{(x_1,1),(x_2,0),(x_3,1)\}, \{(x_1,1),(x_2,1),(x_3,0\}\}$
and ${\bf a}=\{(x_1,1)\}$ then 
$\mathcal{H}|_{\bf a}=\{\{(x_2,0),(x_3,1)\}, \{(x_2,1),(x_3,0)\}\}$. 

Observe that if $\mathcal{H}$ is uniform then $\mathcal{H}_{\bf a}$
is also uniform. 

A \emph{literal} is a variable $x$ or its negation $\neg x$. 
For a literal $\ell$, we denote its variable by $\var(\ell)$. 
Naturally for a set $L$ of literals $\var(L)=\{\var(\ell)| \ell \in L\}$. 
$L$ is \emph{well formed} if for 
every two distinct literals $\ell_1,\ell_2 \in L$,
$\var(\ell_1) \neq \var(\ell_2)$. 

If $x \in \var(L)$, we say that $x$ \emph{occurs} in $L$. 
Moreover, $x$ occurs \emph{positively} in $L$ if $x \in L$
and \emph{negatively} if $\neg x \in L$. 

There is a natural correspondence between
assignments and well-formed sets of literals. 
In particular, an assignment ${\bf a}$ corresponds to a set
$L$ of literals over $\var(a)$ such that, for each $x \in \var({\bf a})$,
$x \in L$ if ${\bf a}(x)=1$, otherwise $\neg x \in L$. 
We will use this correspondence to define satisfying assignments for \textsc{cnf}s.

A {\em Boolean function} $f$ on variables $X$ is a mapping from $\{0,1\}^X$ to $\{0,1\}$.
As mentioned above, we denote $X$ by $\var(f)$. 
Let ${\bf a}$ be an assignment on $\var(f)$ such that $f({\bf a})=1$. 
We say that ${\bf a}$ is a \emph{satisfying} assignment for $f$. 

We denote by $\mathcal{S}(f)$ the set of satisfying assignments for $f$. 
Conversely, a uniform set $\mathcal{S}$ naturally gives rise to a function $f$ 
over $\var(f)$ whose set of satisfying assignments is precisely $\mathcal{S}$.
Finally, the set of satisfying assignments naturally leads to the definition
of a restriction of a function. 
In particular, for a function $f$ and an assignment ${\bf a}$,
$f|_{\bf a}$ is a function on $\var(f) \setminus \var({\bf a})$
whose set of satisfying assignments is precisely $\mathcal{S}(f)|_{\bf a}$.

\subsection{Models of Boolean functions}
We assume the reader to be familiar with the notion of Boolean formulas
and in particular, with the notion of \emph{conjunctive normal forms} (\textsc{cnf}s).
For a \textsc{cnf} $\varphi$, $\var(\varphi)$ is the set of variables whose literals occur in the
clauses of $\varphi$. For example, let $\varphi=(x_1 \vee x_2) \wedge (x_2 \vee \neg x_3 \vee \neg x_4)$. 
Then $\var(\varphi)=\{x_1,x_2,x_3,x_4\}$. 

In order to define the function represented by a \textsc{cnf}, it will be convenient to consider
a \textsc{cnf} as a set of its clauses and a clause as a set of literals. 
Let ${\bf a}$ be an assignment and let $L_{\bf a}$ be the well-formed set of literals over $\var({\bf a})$
corresponding to ${\bf a}$. Let $\varphi=\{C_1, \dots, C_q\}$ be a \textsc{cnf}.
Then ${\bf a}$ satisfies a clause $C_i$ if $L_{\bf a} \cap C_i \neq \emptyset$ and ${\bf a}$
satisfies $\varphi$ if ${\bf a}$ satisfies each clause of $\varphi$. 
The function $f(\varphi)$ represented by $\varphi$ has the set of variables $\var(\varphi)$
and $\mathcal{S}(f(\varphi))$ consists of exactly those assignments (over $\var(\varphi)$) that satisfy $\varphi$.
In what follows, we slightly abuse the notation and refer to the function represented by $\varphi$
also by $\varphi$. The correct use will always be clear from the context. 

\begin{remark}
Let us remark that, for a \textsc{cnf} $\varphi$, $\mathcal{S}(\varphi)$
consists of the satisfying assignment of $\varphi$ over all the variables of $\varphi$
(simplyt because, in this context we consider $\varphi$ a function).
On the other if $\varphi$ is viewed as a \textsc{cnf}, it can have much shorter satisfying
assignments. For example, $\varphi=(x_1 \vee x_2) \wedge (x_2 \vee \neg x_3 \vee \neg x_4)$
is satisfied with $\{(x_2,1)\}$. It is well  known that $\mathcal{S}(\varphi)$ 
is the set of all extensions (to the whole of $\var(\varphi)$) of all the minimal 
satisfying assignments of $\varphi$. 
\end{remark}

We are next going to define the model Decision \textsc{dnnf}, which throughout this paper 
is referred to as $\wedge_d$-\textsc{fbdd} (the reasons will become clear later). 
The model is based on a directed acyclic graph (\textsc{dag}) with a single source
with vertices and some edges possibly being labelled.
We will often use the following notation.
Let $D$ be a \textsc{dag}. Then $V(D)$ is the set of vertices (nodes) of $D$. 
Further on, let $u \in V(D)$.
Then we denote by $D_u$ the subgraph of $D$ induced by $u$ and all the vertices reachable
from $u$ in $D$. The labels on vertices and edges of $D_u$ are the same as the
labels of the respective vertices and edges of $D$. 

We define $\wedge_d$-\textsc{fbdd} in the following two definitions.
The first one will define $\wedge_d$-\textsc{fbdd} as a combinatorial object. 
In the second definition, we will define the semantics of the object 
that is the function represented by it. 

\begin{definition}
A $\wedge_d$-\textsc{fbdd} is a \textsc{dag} $B$ with a single source and two sinks. 
One sink is labelled by $True$, the other by $False$ (on the illustrations 
we denote the sinks by $T$ and $F$ respectively). 
Each non-sink node has two out-neighbours (\emph{children}). 
A non-sink node may be either a \emph{decomposable conjunction} node, labelled with $\wedge_d$ 
(we often refer to such nodes as $\wedge_d$-nodes) or
a \emph{decision} node labelled with a variable name. 
If $u$ is a decision node then one outgoing edge is labelled with $0$, the other with $1$. 

Let $u$ be a node of $B$.
Then $\var(B_u)$ is the set of variables labelling the decision nodes of $B_u$.
$B$ obeys two constraints: \emph{decomposability} and \emph{read-onceness}.
The decomposability constraint means that if $u$ is a $\wedge_d$-node and
$u_0$ and $u_1$ are the children of $u$ 
then $\var(B_{u_0}) \cap \var(B_{u_1})=\emptyset$. 
The read-onceness means that two  decision nodes labelled by the same variable 
cannot be connected by a directed path.

We define the \emph{size} of $B$ as the number 
of nodes and refer to it as $|B|$. 
\end{definition} 

We observe that if $B$ is a $\wedge_d$-\textsc{fbdd} and $u \in V(B)$
then $B_u$ is also a $\wedge_d$-\textsc{fbdd}.
This allows us to introduce recursive definition from the sinks up to the source of $B$. 

\begin{definition}
Let $B$ be a $\wedge_d$-\textsc{fbdd}.
We define the set $\mathcal{A}(B)$ of assignments \emph{accepted} by $B$ in the following recursive
way. If $B$ consists of a sink node only then
$\mathcal{A}(B)=\{\emptyset\}$ if the sink is labelled by $True$ and $\emptyset$ otherwise.

Otherwise, let $u$ be the source of $B$, let $u_0$ and $u_1$ be the children of $u$,
and let $\mathcal{A}_i=\mathcal{A}(B_{u_i})$ for each $i \in \{1,2\}$.  

Assume first that $u$ is a decision node and let $x$ be the variable labelling $u$.
We assume w.l.o.g that $(u,u_0)$ is labelled by $0$ and $(u,u_1)$ is labelled by $1$. 
Then $\mathcal{A}(B)=\mathcal{A}_0 \times \{(x,0)\} \cup \mathcal{A}_1 \times \{(x,1)\}$. 
If $u$ is a conjunction node then 
$\mathcal{A}(B)=\mathcal{A}_0 \times \mathcal{A}_1$.

The function $f(B)$ represented by $B$ is the function over $\var(B)$
whose satisfying assignments are all ${\bf b}$ such that there is ${\bf a} \in \mathcal{A}(B)$
such that ${\bf a} \subseteq {\bf b}$. 
We denote $\mathcal{S}(f(B))$ by $\mathcal{S}(B)$ and refer to this set as the set 
of \emph{satisfying} assignments of $B$. 
\end{definition}

\begin{example} \label{ex:andobb0}
Figure \ref{pic:andobbdexample} illustrates $\wedge_d$-\textsc{fbdd} $B$
representing function $(x \rightarrow (y_1 \wedge y_2)) \vee (\neg x \rightarrow (y_1 \wedge y_3))$.
In particular, 
$\mathcal{A}(B)=\{\{(x,1),(y_1,1),(y_2,1)\},
                  \{(x,0),(y_1,1),(y_3,1)\}\}$.
Clearly, $\mathcal{S}(B)$ are all possible supersets of
the assignments of $\mathcal{A}(B)$ assigning all of $x,y_1,y_2,y_3$. 
\end{example}

\begin{figure}[h]
\begin{tikzpicture}
\draw [fill=black]  (2,1) circle [radius=0.2];
\draw [fill=black]  (4,1) circle [radius=0.2];
\draw [fill=black]  (1,2) circle [radius=0.2];
\draw [fill=black]  (3,2) circle [radius=0.2];
\draw [fill=black]  (5,2) circle [radius=0.2];
\draw [fill=black]  (2,3) circle [radius=0.2];
\draw [fill=black]  (4,3) circle [radius=0.2];
\draw [fill=black]  (3,4) circle [radius=0.2];
\draw [-latex](3,4) --(2.1,3.1); 
\draw [-latex](3,4) --(3.9,3.1); 
\draw [-latex](2,3) --(1.1,2.1); 
\draw [-latex](2,3) --(2.9,2.1);
\draw [-latex](4,3) --(3.1,2.1); 
\draw [-latex](4,3) --(4.9,2.1);
\draw [-latex](3,2) --(2.1,1.1); 
\draw [-latex](3,2) --(3.9,1.1);
\draw [-latex](1,2) --(1.9,1.1);
\draw [-latex](5,2) --(4.1,1.1);
\draw [-latex](1,2) --(3.8,1);
\draw [-latex](5,2) --(2.2,1);

\node [above]  at (3,4.1)  {$x$};
\node [above]  at (2,3.1)  {$\wedge_d$};
\node [above]  at (4,3.1)  {$\wedge_d$};
\node [above]  at (1,2.1)  {$y_1$};
\node [above]  at (3,2.1)  {$y_2$};
\node [above]  at (5,2.1)  {$y_3$};
\node [above]  at (2,1.1)  {$T$};
\node [above]  at (4,1.1)  {$F$};
\node [above]  at (2.5,3.5)  {$1$};
\node [above]  at (3.5,3.5)  {$0$};
\node [above]  at (1.5,2.5)  {$1$};
\node [above]  at (2.5,2.5)  {$0$};
\node [above]  at (3.5,2.5)  {$1$};
\node [above]  at (4.5,2.5)  {$0$};
\node [above]  at (2.5,1.5)  {$1$};
\node [above]  at (3.5,1.5)  {$0$};
\node [below]  at (1.5,1.5)  {$1$};
\node [above]  at (1.9,1.6)  {$0$};
\node [below]  at (4.5,1.5)  {$0$};
\node [above]  at (4.1,1.6)  {$1$};
\end{tikzpicture}
\caption{An example of a $\wedge_d$-\textsc{obdd}}
\label{pic:andobbdexample}
\end{figure}

In read-once models, directed paths are naturally associated with 
assignments. A formal definition is provided below. 
\begin{definition} \label{def:assignpath}
Let $P$ be a directed path of a $\wedge_d$-\textsc{fbdd}. 
We denote by $\var(P)$ the set of variables labelling the nodes
of $P$ \emph{but the last one}. 
We denote by ${\bf a}(P)$ the assignment over $\var(P)$
where for each $x \in \var(P)$, ${\bf a}(x)$ is determined as follows. 
Let $u$ be the node of $P$ labelled by $x$. Then ${\bf a}(x)$ is the label
on the outgoing edge of $u$ in $P$. (Note that  even if the last node 
of $P$ is labelled by a variable, the variable is not included into $\var(P)$
exactly because the outgoing edge of the last node is not in $P$
and hence the assignment of the variable cannot be defined by the above
procedure.)
\end{definition}

We next define subclasses of $\wedge_d$-\textsc{fbdd}s.
Let $B$ be a $\wedge_d$-\textsc{fbdd} that does not have $\wedge_d$ nodes.
Then $B$ is a \textsc{fbdd}.
Next, let $B$ be a $\wedge_d$-\textsc{fbdd} such that there is a linear order $\pi$
over a \emph{superset} 
of $\var(B)$ so that the following holds. Let $P$ be a directed path of $B$ and
suppose that $P$ has two decision nodes $u_1$ and $u_2$ labelled by variables 
$x_1$ and $x_2$ and assume that $u_1$ occurs before $u_2$ in $P$.
Then $x_1<_{\pi} x_2$ (that is $x_1$ is smaller than $x_2$ according to $\pi$). 
Then $B$ is called an $\wedge_d$-\textsc{obdd} (\emph{respecting} $\pi$ if 
not clear from the context). 
Note that the $\wedge_d$-\textsc{fbdd} $B$ considered in Example \ref{ex:andobb0}
is in fact an $\wedge_d$-\textsc{obdd} respecting an arbitrary order over $\{x,y_1,y_2,y_3\}$
provided that $x$ is ordered first. Finally, an $\wedge_d$-\textsc{obdd} without $\wedge_d$ nodes is just an
\textsc{obdd}. 

We remark that allowing $\pi$ to be defined over a superset of $\var(B)$ 
will be convenient for our reasoning when we consider subgraphs of 
a $\wedge_d$-\textsc{obdd}.

We will need one more restricted class of $\wedge_d$-\textsc{fbdd},
a \emph{structured} one. We will define it in Section \ref{sec:struct}
where it will actually be used. 

Throughout the paper, when we refer 
to the size of a smallest model representing a function $f$,
we will use the notation $modelname(f)$. 
For instance $\wedge_d$-\textsc{obdd}$(f)$, 
\textsc{fbdd}$(f)$, \textsc{obdd}$(f)$ respectively refer
to the smallest sizes of $\wedge_d$-\textsc{obdd}, \textsc{fbdd},
and \textsc{obdd} representing $f$.

\subsection{Graphs, width parameters, and correspondence with CNF}
We use the standard terminology related to graph as specified in \cite{Diestel3}.
In particular for a graph $G$, $U \subseteq V(G)$ and $E' \subseteq E(G)$, 
$G[U]$ and $G[E']$ respectively denote the subgraphs of $G$ induced by $U$ and $E'$ 
Also, let us recall that a matching $M$ of a graph $G$ is a subset of its edges
such that for any distinct $e_1,e_2 \in M$, $e_1 \cap e_2=\emptyset$. 
The \emph{treewidth} of $G$, denoted by $tw(G)$ is arguably the best known 
graph parameter measuring closeness to a tree. To define $tw(G)$, we need 
the notion of a \emph{tree decomposition} of $G$ which is a pair $(T,{\bf B})$
where $T$ is a tree and ${\bf B}:V(T) \rightarrow 2^{V(G)}$. The set  ${\bf B}(t)$ 
for $t \in V(T)$ is called the \emph{bag} of $T$. 
The bags obey the rules of \emph{containement} meaning that for each $e \in V(G)$
there is $t \in V(T)$ such that $e \subseteq {\bf B}(t)$ and \emph{connectivity}.
For the latter, let $T[v]$ be the subgraph of $T$ induced by $t$ such that $v \in {\bf B}(t)$.
Then $T[v]$ is connected. For $t \in V(T)$, the width of ${\bf B}(t)$ is $|{\bf B}(t)|-1$
The width of $(T,{\bf B})$ is the largest width of a bag. Finally, $tw(G)$ is the \emph{smallest}
width of a tree decomposition.
The pathwidth of $G$, denoted by $pw(G)$,  is the \emph{smallest} width of a tree decomposition
$(T,{\bf B})$ of $G$ such that $T$ is a path. 

A central notion used throughout this paper is a 
matching crossing a sequence of vertices. This notion
is formally defined below. 

\begin{definition} \label{def:ordcross}
Let $G$ be a graph. Let $M$ be a matching of $G$
and let $U,V$ be a partition of $V(M)$ such that 
each edge of $M$ has one end in $U$, the other in $V$.
We then say that $M$ is a matching \emph{between} $U$ and $V$. 
Let $\pi$ be a linear order of $V(G)$. 
A matching $M$ of $G$ \emph{crosses} $\pi$ if there is a 
prefix $\pi_0$ of $\pi$ with the corresponding suffix $\pi_1$
so that $M$ s a matching \emph{between} $\pi_0 \cap V(M),\pi_1 \cap V(M)$ 
We call $\pi_0$ a \emph{witnessing prefix} for $M$.
\end{definition}

\begin{remark}
In Definition \ref{def:ordcross}, we overloaded the notation
by using a prefix of $\pi$ both as a linear order and 
as a set. The correct use will always be clear from the context. 
\end{remark}

\begin{example}
Let $G=P_8$ (a path of $8$ vertices) with
$V(G)=\{v_1,\dots,v_8\}$ appearing along the path in the order listed. 
Consider an order $\pi=(v_1, \dots, v_8)$. Then the only matchings
crossing $\pi$ are singleton ones. 
On the other hand, consider the order $\pi^*=(v_1,v_3,v_5,v_7,v_2,v_4,v_6,v_8)$. 
Then the matching $M=\{\{v_1,v_2\},\{v_3,v_4\},\{v_5,v_6\},\{v_7,v_8\}\}$ crosses
$\pi^*$ as witnessed by $(v_1,v_3,v_5,v_7)$. 
\end{example}

Taking to the maximum the size of a matching crossing the given 
linear order leads us to the following width parameters. 

\begin{definition}
Let $G$ be a graph and $\pi$ be  linear order of $V(G)$. 
The \emph{linear maximum matching width} \textsc{lmm}-width of $\pi$ is the largest size of a 
matching of $G$ crossing $\pi$. 
The \textsc{lmm}-width of $G$, denoted by $lmmw(G)$ is the smallest linear matching width
over all linear orders of $\pi$. 

The linear \emph{special induced} matching width (\textsc{lsim}-width) 
of $\pi$ is the largest size of an \emph{induced} matching of $G$ 
crossing $\pi$. Accordingly, \textsc{lsim}-width of $G$, denoted by
$lsimw(G)$, is the smallest 
\textsc{lsim}-width of a linear order of $G$.
\end{definition}

The parameters \textsc{lmm}-width and \textsc{lsim}-width
are linear versions of well known graph parameters 
maximum matching width \cite{VaThesis} and special
induced matching width \cite{simwidthpaper}. 

\begin{lemma} \label{largemimw}
There is a constant $c$ 
such that for each sufficiently large $k$  there is 
an infinite class ${\bf G}_k$ of graphs such that
for each $G \in {\bf G}_k$, 
$tw(G) \leq k$ and
$lsimw(G) \geq c\log(|V(G)|) \cdot k$. 
\end{lemma}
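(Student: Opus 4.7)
The plan is to exhibit, for each $k$, an infinite family $\mathbf{G}_k = \{G_{k,d}\}_{d \geq d_0}$ built from a complete binary tree whose edges are blown up into $k$-fold gadgets. Concretely, I would let $G_{k,d}$ be obtained from a complete binary tree $\mathcal{T}_d$ of depth $d$ by replacing each tree edge $\{u,v\}$ with $k$ internally vertex-disjoint length-$3$ paths $u{-}a^i_{uv}{-}b^i_{uv}{-}v$, $i \in [k]$. Then $|V(G_{k,d})| = \Theta(k \cdot 2^d)$, so $\log |V(G_{k,d})| = \Theta(d)$, and an infinite family is obtained by letting $d$ grow.

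The treewidth bound is routine: for each edge $e$ of $\mathcal{T}_d$ put a bag containing the two endpoints of $e$ together with the $2k$ gadget vertices of $e$, and connect bags of incident tree edges; the bag sizes are $2k+2$, so absorbing the constant into $k$ gives $tw(G_{k,d}) \leq k$.

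The substantive step is the lower bound on $lsimw$. Fix any linear order $\pi$ of $V(G_{k,d})$. The plan is to work in two stages. In \emph{stage one}, I would exploit that the pathwidth of $\mathcal{T}_d$ is $\Omega(d)$: projecting $\pi$ to $V(\mathcal{T}_d)$ and applying a standard vertex-separation argument yields a cut of $\pi$ at which $\Omega(d)$ pairwise vertex-disjoint edges of $\mathcal{T}_d$ ``straddle'' the cut, i.e.\ have their two endpoints on opposite sides. Call these the split tree edges. In \emph{stage two}, for each split tree edge $\{u,v\}$ I would analyze the distribution of its $2k$ gadget vertices across the cut and extract $\Omega(k)$ ``middle'' edges $\{a^i_{uv}, b^i_{uv}\}$ that also straddle. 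The middle edges within a single bundle form an induced matching of size $k$ by construction (the $a^i_{uv}$'s and $b^i_{uv}$'s across different $i$'s are pairwise non-adjacent), and two bundles sitting on vertex-disjoint tree edges share no gadget vertex; so the union of stage-two choices across $\Omega(d)$ bundles is a single induced matching crossing $\pi$, of total size $\Omega(kd) = \Omega(k \log |V(G_{k,d})|)$.

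The main obstacle is the straddling guarantee in stage two. An adversarial $\pi$ can place each pair $\{a^i_{uv}, b^i_{uv}\}$ consecutively so that no middle edge of a given bundle is straddled by the cut fixed in stage one. Two ways to fight this come to mind. The first is a more flexible selection: if in some bundle few middle edges straddle, then necessarily many gadget vertices of that bundle sit on one side and the other half of the bundle on the other, so many edges $\{u, a^i_{uv}\}$ or $\{b^i_{uv}, v\}$ straddle instead; a careful case analysis picks one such edge per bundle while avoiding stars at $u$ or $v$. The second is to enrich each gadget with an expander-like substructure that forces $\Omega(k)$ straddling middle edges in every bipartition, at the price of a more delicate treewidth bookkeeping. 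Ensuring that whichever route is taken delivers an \emph{induced} (not merely arbitrary) matching of size $\Omega(k)$ per bundle, simultaneously compatible across the $\Omega(d)$ chosen bundles, is the technical heart of the proof.
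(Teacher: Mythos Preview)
Your construction does not achieve the claimed bound: in fact $lsimw(G_{k,d}) = O(d)$, not $\Omega(kd)$, so the lower bound is off by a factor of $k$. Here is why. Take an optimal path decomposition of $\mathcal{T}_d$ of width $\Theta(d)$; whenever a bag contains both endpoints $u,v$ of a tree edge $e$, splice in the short sequence of bags
\[
B,\ B\cup\{a^1_e\},\ B\cup\{a^1_e,b^1_e\},\ B\cup\{b^1_e\},\ B,\ B\cup\{a^2_e\},\ \ldots,\ B\cup\{b^k_e\},\ B.
\]
Since every gadget vertex has degree $2$, this costs only $+2$ in width, so $pw(G_{k,d})\le pw(\mathcal{T}_d)+2=\Theta(d)$. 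As $lsimw\le lmmw$ and $lmmw$ is linearly bounded by $pw$, we get $lsimw(G_{k,d})=O(d)$. Your stage-two worry was exactly the right instinct, but the obstacle is fatal rather than technical: the ``first fix'' (falling back to edges $\{u,a^i\}$ or $\{b^i,v\}$) produces unions of stars centred at tree vertices and can therefore contribute only $O(1)$ induced-matching edges per bundle, which matches the $O(d)$ upper bound just derived. The ``second fix'' with expander gadgets changes the construction entirely; as written it is a hope, not an argument, and would itself need a pathwidth lower bound of order $kd$.

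The paper takes a much shorter, black-box route. It cites an existing family (from \cite{RazgonAlgo}) of graphs with $tw\le k$, maximum degree $5$, and $pw\ge \Omega(k\log n)$; the hard construction is delegated entirely to that reference. Two one-line reductions then finish the proof: $pw$ and $lmmw$ are linearly related (cited from \cite{obddtcompsys}), and in a graph of maximum degree $d$ every matching contains an induced sub-matching of at least a $1/(2d{+}1)$ fraction of its size by a greedy argument. The real content is getting the factor $k$ into the pathwidth lower bound, and replacing tree edges by parallel subdivided paths simply does not do that.
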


\begin{proof}
It is known \cite{RazgonAlgo} that the statement as in the lemma holds
if $lsimw(G)$ is replaced by $pw(G)$ and, moreover, the considered
class of graphs has max-degree $5$. 
Next, it is known \cite{obddtcompsys} that the $pw(G)$ and $lmmw(G)$ are linearly related.
This means that the lemma is true if $lsimw(G)$ is replaced by $lmmw(G)$. 
Finally, for any matching $M$, there is an induced subset $M'$ of size 
at least $|M|/(2d+1)$ where $d$ is the max-degree of $G$.
Indeed, greedily choose edges $e \in M$ into $M'$ by removing the edges of $M$
connecting $N(e)$ (at most $2d$ edges are removed per chosen edge). 
Since $d=5$ for the considered class, $lsimw(G) \geq lmmw(G)/11$.
\end{proof}

For a \textsc{cnf} $\varphi$, we use two graphs, both regard $\varphi$
as a hypergraph over $\var(\varphi)$ ignoring the literals with clauses corresponding to hyperedges. 
The \emph{primal} graph of $\varphi$ has $\var(\varphi)$ as the set of vertices
and the two vertices are adjacent if they occur in the same clause. 
The \emph{incidence} graph of $\varphi$ has $\var(\varphi)$ and the clauses as its vertices.
A variable $x$ is adjacent to a clause $C$ if and only if $x$ occurs in $C$. 
We denote by $ptw(\varphi)$ and $itw(\varphi)$ the trewidth of the primal and incidence graphs 
of $\varphi$ respectively. 
\begin{proposition} \label{priminc}
$itw(\varphi) \leq ptw(\varphi)$ \cite{satparam}
\end{proposition}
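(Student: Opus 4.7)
The plan is to construct a tree decomposition of the incidence graph of $\varphi$ from a given tree decomposition $(T, {\bf B})$ of the primal graph realizing $ptw(\varphi)$. The key structural observation is that, for every clause $C \in \varphi$, the set $\var(C)$ forms a clique in the primal graph, since any two variables co-occurring in $C$ are adjacent there. By the classical fact that any clique of a graph must be entirely contained in some bag of any of its tree decompositions, for each $C$ we may fix a node $t_C \in V(T)$ with $\var(C) \subseteq {\bf B}(t_C)$.

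Next, I would augment $(T,{\bf B})$ into a decomposition $(T', {\bf B}')$ as follows: for each clause $C$, introduce a new leaf $t'_C$ attached to $t_C$ and set ${\bf B}'(t'_C) = \var(C) \cup \{C\}$; all original bags remain unchanged. I would then verify the three tree-decomposition requirements against the incidence graph. Edge coverage: each incidence edge $\{x, C\}$ with $x \in \var(C)$ is covered by ${\bf B}'(t'_C)$. Clause connectivity: each clause vertex $C$ appears in the single bag $t'_C$ and is therefore trivially connected. Variable connectivity: a variable $x$ now appears additionally in $t'_C$ for every clause $C$ containing $x$, but each such $t'_C$ is adjacent to $t_C$ which already contains $x$, so the subgraph $T'[x]$ stays connected.

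The only quantitative step is the width bound. The original bags keep their size. Each new bag has size $|\var(C)| + 1$, and since $\var(C) \subseteq {\bf B}(t_C)$ we have $|\var(C)| \leq ptw(\varphi) + 1$, which controls the width of the augmented decomposition in terms of $ptw(\varphi)$, matching the standard incidence-vs-primal treewidth relationship cited from \cite{satparam}. I do not anticipate any substantive obstacle; the only nontrivial ingredient is the clique-containment property of tree decompositions, which is a classical fact, and everything else is a routine verification of the decomposition axioms.
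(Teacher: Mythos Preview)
The paper does not actually prove this proposition; it is stated with a citation to \cite{satparam} and no argument is given. Your construction is precisely the standard one from that reference: use the clique-containment property of tree decompositions to locate each clause's variable set inside a single bag, then attach a fresh leaf bag containing the clause vertex together with its variables.

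One quantitative remark is worth making. Your new bags have size $|\var(C)|+1 \leq ptw(\varphi)+2$, hence width at most $ptw(\varphi)+1$, not $ptw(\varphi)$. In fact the inequality $itw(\varphi)\leq ptw(\varphi)$ as literally stated can fail: if $\varphi$ consists of a single unit clause, the primal graph is an isolated vertex (treewidth $0$) while the incidence graph is a single edge (treewidth $1$). The bound your argument actually establishes is $itw(\varphi)\leq ptw(\varphi)+1$, which is what \cite{satparam} proves; the additive constant is immaterial at the one place the proposition is invoked in the paper (the proof of Theorem~\ref{nonfpt}, where only an $O(k)$ bound on $itw$ is needed). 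So the off-by-one is a wrinkle in the paper's statement, not a defect in your argument.
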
  

\section{$\wedge_d$-OBDD: inefficiency in terms of incidence treewidth and separations} \label{sec:obddsepmain}
\subsection{An approach to proving $\wedge_d$-OBDD lower bounds} \label{sec:obddapp}
We start from overviewing an approach for proving lower bounds for \textsc{obdd}s
and then generalize it to $\wedge_d$-\textsc{obdd}s. 
This is the standard `lower bound by the number of subfunctions' approach as described in \cite{WegBook}
but rephrased for the purpose of the further generalization.  

Let $B$ be an \textsc{obdd} respecting an order $\pi$. 
Let ${\bf g}$ be an assignment over a prefix of $\pi$. 
Then we can identify the maximal path $P=P({\bf g})$ 
starting from the source and such that ${\bf a}(P) \subseteq {\bf g}$. 
Indeed, existence of such a path and its uniqueness is easy to 
verify algorithmically. Let the current vertex $u$ of $B$ be the source
and the current variable $x$ be the label of $u$. Furthermore, let 
the current path $P$ be consisting of $u$ only. 
If $x \notin \var({\bf g})$ then stop. Otherwise, choose the out-neighbour
$v$ of $u$ such that $(u,v)$ is labelled with ${\bf g}(x)$. 
Let the current vertex be $v$, the current variable be the label of $v$ and 
the current path be $P$ plus $(u,v)$. Repeat the process until the stopping 
condition reached. 
We let $u({\bf g})$ be the final vertex of $P({\bf g})$. 

Next, we define $\mathcal{F}$, a set of \emph{fooling assignments},
for each ${\bf g} \in \mathcal{F}$, $\var({\bf g})$ is a prefix of $\pi$. 
The idea is to prove that for every two distinct ${\bf g}_1, {\bf g}_2 \in \mathcal{F}$,
$u({\bf g}_1) \neq u({\bf g}_2)$. This implies that $|B| \geq |\mathcal{F}|$. 
In particular, if $|\mathcal{F}|$ is large in terms of the number of variables,
then the respective large lower bound follows. 

For $\wedge_d$-\textsc{obdd}s, an assignment to a prefix of the respected order
is associated with a rooted tree rather than with a path. In the rest of the subsection,
we upgrade the above methodology for this case. 

\begin{definition}
Let ${\bf g}$ be an assignment over a subset of $\var(B)$. 
The \emph{alignment} of $B$ by ${\bf g}$ denoted by $B[{\bf g}]$
is obtained from $B$ by the following two operations. 
\begin{enumerate}
\item For each decision node $u$ labelled by $x \in \var({\bf g})$, 
remove the outgoing edge of $u$ labelled with $1-{\bf g}(x)$. 
Let $B_0[{\bf g}]$ be the resulting subgraph of $B$. 
\item Remove all the nodes of $B_0[{\bf g}]$ that are not reachable from
the source of $B[{\bf g}]$. 
\end{enumerate}
\end{definition}

\begin{example} \label{ex:align}
Figure \ref{pic:align} demonstrates a $\wedge_d$-\textsc{obdd} $B$
for $(x_1 \vee x_2) \wedge (x_2 \vee x_3) \wedge (x_4 \vee x_5) \wedge (x_5 \vee x_6)$
(on the left) respecting the order $(x_2,x_1,x_3,x_5,x_4,x_6)$
and $B[{\bf g}]$ (on the right) where ${\bf g}=\{(x_2,1)\}$. 

The $True$ and $False$ sinks are respectively labelled with $T$ and $F$. 
Note that, for the sake of a better visualization,
we deviated from the definition requiring a $\wedge_d$-\textsc{obdd}
to have a single $True$ and a single $False$ sinks. 
To make the models on the picture consistent with the definition, 
all $True$ sinks of each model need to be contracted together and 
the same contraction needs to be done for all the $False$ sinks. 
\end{example}

\begin{figure}[h]
\begin{tikzpicture}
\draw [fill=black]  (3,5) circle [radius=0.2];

\draw [fill=black]  (1,4) circle [radius=0.2];
\draw [fill=black]  (0.5,3) circle [radius=0.2];
\draw [fill=black]  (2,3) circle [radius=0.2];
\draw [fill=black]  (1.5,2) circle [radius=0.2];
\draw [fill=black]  (2.5,2) circle [radius=0.2];
\draw [fill=black]  (1.5,1) circle [radius=0.2];
\draw [fill=black]  (2.5,1) circle [radius=0.2];

\draw [fill=black]  (5,4) circle [radius=0.2];
\draw [fill=black]  (5.5,3) circle [radius=0.2];
\draw [fill=black]  (4,3) circle [radius=0.2];
\draw [fill=black]  (3.5,2) circle [radius=0.2];
\draw [fill=black]  (4.5,2) circle [radius=0.2];
\draw [fill=black]  (3.5,1) circle [radius=0.2];
\draw [fill=black]  (4.5,1) circle [radius=0.2];

\draw [-latex](3,5) --(1.2,4.1); 
\draw [-latex](1,4) --(0.5,3.2); 
\draw [-latex](1,4) --(1.8,3.1);
\draw [-latex](2,3) --(1.5,2.2);
\draw [-latex](2,3) --(2.5,2.2);
\draw [-latex](2.5,2) --(1.5,1.2);
\draw [-latex](1.5,2) --(1.5,1.2);
\draw [-latex](2.5,2) --(2.5,1.2);
\draw [-latex](1.5,2) --(2.5,1.2);

\draw [-latex](3,5) --(4.8,4.1); 
\draw [-latex](5,4) --(5.5,3.2); 
\draw [-latex](5,4) --(4.2,3.1);
\draw [-latex](4,3) --(3.5,2.2);
\draw [-latex](4,3) --(4.5,2.2);
\draw [-latex](4.5,2) --(3.5,1.2);
\draw [-latex](3.5,2) --(3.5,1.2);
\draw [-latex](4.5,2) --(4.5,1.2);
\draw [-latex](3.5,2) --(4.5,1.2);

\draw [fill=black]  (8,5) circle [radius=0.2];
\draw [fill=black]  (7.5,4) circle [radius=0.2];
\draw [fill=black]  (7.5,3) circle [radius=0.2];
\draw [fill=black]  (10,4) circle [radius=0.2];
\draw [fill=black]  (10.5,3) circle [radius=0.2];
\draw [fill=black]  (9,3) circle [radius=0.2];
\draw [fill=black]  (8.5,2) circle [radius=0.2];
\draw [fill=black]  (9.5,2) circle [radius=0.2];
\draw [fill=black]  (8.5,1) circle [radius=0.2];
\draw [fill=black]  (9.5,1) circle [radius=0.2];

\draw [-latex](8,5) --(7.5,4.2); 
\draw [-latex](7.5,4) --(7.5,3.2); 

\draw [-latex](8,5) --(9.8,4.1); 
\draw [-latex](10,4) --(10.5,3.2); 
\draw [-latex](10,4) --(9.2,3.1);
\draw [-latex](9,3) --(8.5,2.2);
\draw [-latex](9,3) --(9.5,2.2);
\draw [-latex](9.5,2) --(8.5,1.2);
\draw [-latex](8.5,2) --(8.5,1.2);
\draw [-latex](9.5,2) --(9.5,1.2);
\draw [-latex](8.5,2) --(9.5,1.2);

\node [above]  at (3,5.2)  {$\wedge_d$};
\node [above]  at (1,4.2)  {$x_2$};
\node [above]  at (0.4,3.2)  {$T$};
\node [above]  at (2,3.2)  {$\wedge_d$};
\node [above]  at (1.3,2.1)  {$x_1$};
\node [above]  at (2.7,2.1)  {$x_3$};
\node [below]  at (1.5,0.9)  {$T$};
\node [below]  at (2.5,0.9)  {$F$};

\node [above]  at (5,4.2)  {$x_5$};
\node [above]  at (4,3.2)  {$\wedge_d$};
\node [above]  at (5.6,3.2)  {$T$};
\node [above]  at (3.3,2.1)  {$x_4$};
\node [above]  at (4.7,2.1)  {$x_6$};
\node [below]  at (3.5,0.9)  {$T$};
\node [below]  at (4.5,0.9)  {$F$};

\node [above]  at (8,5.2)  {$\wedge_d$};
\node [above]  at (7.4,4.2)  {$x_2$};
\node [above]  at (10,4.2)  {$x_5$};
\node [above]  at (7.35,3.2)  {$T$};
\node [above]  at (9,3.2)  {$\wedge_d$};
\node [above]  at (10.6,3.2)  {$T$};
\node [above]  at (8.3,2.1)  {$x_4$};
\node [above]  at (9.7,2.1)  {$x_6$};
\node [below]  at (8.5,0.9)  {$T$};
\node [below]  at (9.5,0.9)  {$F$};

\node [right]  at (0.6,3.4)  {$1$};
\node [left]  at (1.55,3.4)  {$0$};
\node [left]  at (1.5,1.6)  {$1$};
\node [right]  at (1.6,1.9)  {$0$};
\node [left]  at (2.4,1.9)  {$1$};
\node [right]  at (2.5,1.6)  {$0$};

\node [left]  at (5.4,3.4)  {$1$};
\node [right]  at (4.5,3.4)  {$0$};
\node [left]  at (3.5,1.6)  {$1$};
\node [right]  at (3.6,1.9)  {$0$};
\node [left]  at (4.4,1.9)  {$1$};
\node [right]  at (4.5,1.6)  {$0$};

\node [right]  at (7.5,3.6)  {$1$};
\node [left]  at (10.4,3.4)  {$1$};
\node [right]  at (9.5,3.4)  {$0$};
\node [left]  at (8.5,1.6)  {$1$};
\node [right]  at (8.6,1.9)  {$0$};
\node [left]  at (9.4,1.9)  {$1$};
\node [right]  at (9.5,1.6)  {$0$};

\end{tikzpicture}
\caption{A $\wedge_d$-\textsc{obdd} and its alignment}
\label{pic:align}
\end{figure}

We note that $B[{\bf g}]$ is not necessarily a $\wedge_d$-\textsc{obdd}
since some decision nodes may have only one outgoing edge. 
This situation is addressed in the following definition. 

\begin{definition}
A decision node of $B[{\bf g}]$ that has only one outgoing 
neighbour is called an \emph{incomplete} decision node. 
A maximal path with all intermediate decision nodes being incomplete
is called an \emph{incomplete} path. 
We note that a $\wedge_d$ node cannot be a final node of an incomplete path
due to its maximality, hence an incomplete path may end only with a sink of $B$
or with a decision node of $B$. We denote by $L({\bf g})$ the set of all decision
nodes that are final nodes of the incomplete paths of $B[{\bf g}]$. 
\end{definition}

Back to Example \ref{ex:align},
$L({\bf g})$ is a singleton consisting of the only node labelled by $x_2$. 
We note that $L({\bf g})$ are exactly \emph{minimal} complete decision nodes of 
$B[{\bf g}]$ in the sense  that each decision node that is a proper descendant of a node
of $L({\bf  g})$ must be complete. 

\begin{lemma} \label{lem:assignpref1}
Let $\pi$ be the order respected by $B$ 
and let ${\bf g}$ be an assignment over a prefix of $\pi$. 
Then the following statements hold regarding $B[{\bf g}]$. 
\begin{enumerate}
\item Let $u$ be a complete decision node and let $v$ be a decision node 
being a successor of $u$ (that is, $B[{\bf g}]$ has a path from $u$ to $v$).
Then $v$ is also a complete decision node. 
In other words, $B[{\bf g}]_u$ is a $\wedge_d$-\textsc{obdd}.
\item Let $u_1, u_2 \in L({\bf g})$.
Then $\var(B[{\bf g}]_{u_1}) \cap \var(B[{\bf g}]_{u_2})=\emptyset$.  
\item Let $T({\bf g})$ be the union of all incomplete paths 
ending with nodes of $L({\bf g})$. 
Then $T({\bf g})$ is a rooted tree. 
Moreover, all the intermediate nodes of $T({\bf g})$ with out-degree $2$ 
are conjunction nodes. In particular, this implies that if $B$
is an \textsc{obdd} then $T({\bf g})$ is just a path. 
\end{enumerate}
\end{lemma}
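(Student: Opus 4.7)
The plan is to leverage two structural properties throughout: $\var({\bf g})$ is a prefix of $\pi$ (which controls which decision nodes of $B[{\bf g}]$ are complete) and $\wedge_d$ nodes are decomposable (which controls variable disjointness across their branches). All three parts flow from these, with parts~(2) and~(3) sharing the same ``divergence at $\wedge_d$'' idea.

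For statement~(1), since $B$ obeys $\pi$, decision-node labels along any directed path of $B$ are $\pi$-increasing. A complete decision node $u$ has label $x_u$ lying $\pi$-after the prefix $\var({\bf g})$; any decision node $v$ reachable from $u$ has a label appearing no earlier than $x_u$ in $\pi$, which is therefore also outside $\var({\bf g})$, and hence $v$ is complete. The remaining $\wedge_d$-\textsc{obdd} properties of $B[{\bf g}]_u$ follow by routine inheritance: step~1 of the alignment removes only decision-node out-edges, so $\wedge_d$ nodes retain both children; every reachable decision is complete and retains both out-edges; and decomposability, read-onceness, and $\pi$-obedience all pass to induced subgraphs.

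For statement~(2), fix distinct $u_1, u_2 \in L({\bf g})$ and, for each $i$, pick a directed path $P_i$ in $B[{\bf g}]$ from the source of $B$ to $u_i$. Using the characterisation of $L({\bf g})$ as the minimal complete decision nodes of $B[{\bf g}]$ (remarked just after its definition), every intermediate node of $P_i$ is either a $\wedge_d$ node or an incomplete decision. Since an incomplete decision has a unique out-edge in $B[{\bf g}]$, the two paths cannot diverge at a decision node, so they must diverge at some $\wedge_d$ node $w$; label its children so that $P_1$ passes through $w_0$ and $P_2$ through $w_1$. Then $B[{\bf g}]_{u_1} \subseteq B_{w_0}$ and $B[{\bf g}]_{u_2} \subseteq B_{w_1}$, and decomposability of $w$ gives $\var(B_{w_0}) \cap \var(B_{w_1}) = \emptyset$, from which the claim follows.

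For statement~(3), $T({\bf g})$ is connected because every incomplete path to $L({\bf g})$ passes through the source of $B[{\bf g}]$, and it is acyclic as a subgraph of the DAG $B$; it therefore suffices to show that every non-source vertex of $T({\bf g})$ has a unique in-neighbour in $T({\bf g})$. Suppose some $v \in T({\bf g})$ had two distinct in-neighbours $p_1, p_2$ in $T({\bf g})$. Tracing backward along $T({\bf g})$-edges from each $p_i$ yields two distinct source-to-$v$ paths in $T({\bf g})$, which by the same divergence argument as in~(2) must split at a $\wedge_d$ node $w$, forcing $v \in B_{w_0} \cap B_{w_1}$. Decomposability rules out $v$ being a decision node (its label would lie in the empty intersection $\var(B_{w_0}) \cap \var(B_{w_1})$); so $v$ is a $\wedge_d$ node, but then any $L({\bf g})$-descendant $u$ of $v$, which exists because $v \in T({\bf g})$, contributes its label to $\var(B_v) \subseteq \var(B_{w_0}) \cap \var(B_{w_1}) = \emptyset$, the final contradiction.

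The main obstacle is pinning down the working characterisation of $L({\bf g})$ as the set of minimal complete decision nodes of $B[{\bf g}]$, since the arguments in~(2) and~(3) depend critically on every source-to-$L({\bf g})$ path having intermediate vertices drawn only from $\wedge_d$ nodes and incomplete decisions. Once that characterisation is secured, the entire proof is driven by a single observation --- branching in $B[{\bf g}]$ can only occur at $\wedge_d$ nodes --- with decomposability supplying all the required variable-set disjointness.
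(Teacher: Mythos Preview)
Your proof is correct and follows essentially the same approach as the paper: the prefix property of $\var({\bf g})$ drives Part~1, and the observation that any branching among source-to-$L({\bf g})$ paths in $B[{\bf g}]$ must occur at a $\wedge_d$ node (since incomplete decisions have out-degree~1) combined with decomposability drives Parts~2 and~3. The only cosmetic differences are that the paper locates the branching ancestor directly rather than via explicit source-paths, and for Part~3 it phrases the tree property as ``two incomplete paths share only a common prefix'' rather than your equivalent ``unique in-neighbour'' formulation.
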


\begin{proof}
Let $x$ and $y$ be the variables labelling $u$ and $v$
respectively. By assumption $x<_{\pi} y$ and $x\notin \var({\bf g})$, 
the latter constitutes a prefix of $\pi$. 
It follows that $y \notin \var({\bf g})$ and hence $v$ is complete
simply by construction. 

For the second statement, note that, by assumption,
none of $u_1$ and $u_2$ is a successor of the other. 
Therefore,  in  $B[{\bf g}]$ there is a node $v$ with two children
$v_1$ and $v_2$ such that $u_i$ is a (possibly not proper) successor of $v_i$
for each $i \in \{1,2\}$. If $v$ is a decision node then it is complete,
in contradiction to the minimality of $u_1$ and $u_2$. 
We conclude that $v$ is a conjunction node that hence 
$\var(B[{\bf g}]_{v_1}) \cap \var(B[{\bf g}]_{v_2})=\emptyset$. 
As $\var(B[{\bf g}]_{u_1}) \cap \var(B[{\bf g}]_{u_2}) \subseteq
      \var(B[{\bf g}]_{v_1}) \cap \var(B[{\bf g}]_{v_2})$, 
the second statement follows. 

For the third statement, it is sufficient to prove that
any two distinct incomplete paths $P_1$ and $P_2$
do not have nodes in common besides their comon prefix and that the  
last node of the prefix is $\wedge_d$ one.  
So, let $P$ be the longest common prefix of $P_1$ and $P_2$
and let $u$ be the last vertex of $P$. Arguing as in the 
previous paragraph, we conclude that $u$ is a $\wedge_d$ node. 
Let $u_1$ and $u_2$ be the children of $u$. Then, by the maximality of $P$,
we can assume w.l.o.g. that $u_1 \in V(P_1)$ and $u_2 \in V(P_2)$. 
 
Suppose that $P_1$ and $P_2$ have a common vertex $v$ that is a 
proper successor of $u$. 
As $v$ is a predecessor of a vertex of $L({\bf g})$,
$\var(B_v) \neq \emptyset$. As $\var(B_v) \subseteq \var(B_{u_1}) \cap \var(B_{u_2})$, 
we get a contradiction to the decomposability of $u$.
\end{proof}
Lemma \ref{lem:assignpref1} is illustrated in
Figure \ref{pic:tgview}. Suppose that the picture illustrates $B[{\bf g}]$ for some
$\wedge_d$-\textsc{obdd} $B$ and an assignment ${\bf g}$ over a prefix of the order respected by $B$. 
Then the rooted tree within the outer dashed rectangle is $T({\bf g})$, the set of nodes in  the inner
dashed is $L({\bf g})$. We note that the nodes  of $T({\bf g})$ with two children are the $\wedge_d$ ones. 
Each $u \in L({\bf g})$ is contained in an ellipse and the ellipse denotes $B_u$,
this is demonstrated for the leftmost ellipse. All the $\wedge_d$-\textsc{obdd} are assumed to have the same 
$True$ and $False$ sink, we do not demonstrate this on the picture for the sake of clarity. 

\begin{figure}[h]
\begin{tikzpicture}
\draw [fill=black]  (1,1.6) circle [radius=0.2];
\draw [fill=black]  (2.5,1.6) circle [radius=0.2];
\draw [fill=black]  (5.5,1.6) circle [radius=0.2];
\draw [fill=black]  (1.8,3) circle [radius=0.2];
\draw [fill=black]  (5,3) circle [radius=0.2];
\draw [fill=black]  (2.5,4) circle [radius=0.2];
\draw [fill=black]  (4,5) circle [radius=0.2];

\draw [-latex](1.8,3) --(1.1,1.7); 
\draw [-latex](1.8,3) --(2.4,1.7); 
\draw [-latex](5,3) --(5.4,1.7); 
\draw [-latex](2.5,4) --(1.9,3.1);
\draw [-latex](4,5) --(2.6,4.1);
\draw [-latex](4,5) --(4.9,3.1);   
\draw (1,1) ellipse (0.5cm and 1cm);
\draw (2.5,1) ellipse (0.5cm and 1cm);
\draw (5.5,1) ellipse (0.5cm and 1cm);
\node [right]  at (1.1,1.5)  {$u$};
\node [left]  at (0.5,1.1)  {$B_u$};

\draw [dashed, rounded corners] (0.7,1.3) rectangle (5.8,1.9);
\draw [dashed, rounded corners] (0.5,1.1) rectangle (6,5.3);
\node [above]  at (4,1.8)  {$L({\bf g})$};
\node [right]  at (6,3.2)  {$T({\bf g})$};
\node [left]  at (3.9,5)  {$\wedge_d$};
\node [left]  at (1.7,3)  {$\wedge_d$};
\end{tikzpicture}
\caption{An intuitive illustration  of $T({\bf g})$ and $L({\bf g})$ }
\label{pic:tgview}
\end{figure}

The notion of $L({\bf g})$ allows a neat characterization of $\mathcal{S}(B)|_{\bf g}$
as specified in the next lemma. 

\begin{lemma} \label{lem:assignpref2}
Let the notation and premises be as in Lemma \ref{lem:assignpref1}. 
In addition, assume that 
$\mathcal{S}(B)|_{\bf g} \neq \emptyset$. 
Let $X=(\var(B) \setminus \var({\bf g})) \setminus \bigcup_{u \in L({\bf g})} \var(B_u)$.
(In other words, $X$ is the set of variables not used by the model under the restriction posed by ${\bf g}$.)
If $L({\bf g}) \neq \emptyset$ then
$\mathcal{S}(B)|_{\bf g}=X^{\{0,1\}} \times \prod_{u \in L({\bf g})} \mathcal{S}(B_u)$. 
Otherwise, $\mathcal{S}(B)|_{\bf g}=X^{\{0,1\}}$. 
\end{lemma}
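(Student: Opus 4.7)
The plan is to prove the identity by bottom-up induction on $T({\bf g})$, extended so that sink-endpoints of incomplete paths are also treated as leaves. First I would argue, by repeating the argument of Lemma~\ref{lem:assignpref1}(3) verbatim, that this extended object is still a rooted tree: its internal nodes are $\wedge_d$-nodes (both children retained) and incomplete decision nodes (their single surviving child retained), and its leaves are the nodes of $L({\bf g})$ together with the sinks of $B$ reachable by incomplete paths. Using the hypothesis $\mathcal{S}(B)|_{\bf g}\neq\emptyset$, I would then exclude ${\bf 0}$-sink leaves: any such leaf would force emptiness at its lowest $\wedge_d$-ancestor (by decomposability the two factors multiply, and one of them is empty), and this emptiness would propagate all the way up to the source, contradicting the hypothesis.

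The induction claim is that for every node $v$ of the extended tree,
\[
\mathcal{S}(B_v)|_{\bf g} \;=\; \{0,1\}^{X_v} \times \prod_{u \in L_v({\bf g})} \mathcal{S}(B_u),
\]
where $L_v({\bf g})$ is the set of nodes of $L({\bf g})$ reachable from $v$ inside $T({\bf g})$, and $X_v = (\var(B_v)\setminus\var({\bf g}))\setminus\bigcup_{u\in L_v({\bf g})}\var(B_u)$. The base cases are (i) $v$ a ${\bf 1}$-sink, where both sides equal $\{\emptyset\}$; and (ii) $v\in L({\bf g})$, where $\var(B_v)\cap\var({\bf g})=\emptyset$ because $v$'s label lies outside the prefix $\var({\bf g})$ and every descendant decision label exceeds it in $\pi$, so $\mathcal{S}(B_v)|_{\bf g}=\mathcal{S}(B_v)$. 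For the inductive step at a $\wedge_d$-node $v$ with children $w_1,w_2$, decomposability gives $\mathcal{S}(B_v)=\mathcal{S}(B_{w_1})\times\mathcal{S}(B_{w_2})$; restriction by ${\bf g}$ commutes with the Cartesian product, and $L_v({\bf g})=L_{w_1}({\bf g})\sqcup L_{w_2}({\bf g})$ and $X_v=X_{w_1}\sqcup X_{w_2}$ are disjoint unions. For an incomplete decision node $v$ labelled $x\in\var({\bf g})$, with surviving child $w$ and pruned child $w'$, unfolding the definition of $\mathcal{S}$ at a decision node yields
\[
\mathcal{S}(B_v)|_{\bf g} \;=\; \mathcal{S}(B_w)|_{\bf g} \times \{0,1\}^{\var(B_{w'})\setminus(\var(B_w)\cup\var({\bf g}))},
\]
and the inductive hypothesis at $w$, combined with $L_v({\bf g})=L_w({\bf g})$ and the set-theoretic identity $X_v = X_w \sqcup (\var(B_{w'})\setminus(\var(B_w)\cup\var({\bf g})))$, closes the step. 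Instantiating the claim at the source of $B$ gives the lemma; when $L({\bf g})=\emptyset$ the product is empty and only the $\{0,1\}^{X}$ factor survives.

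The main obstacle is the bookkeeping at the incomplete decision nodes: the pruned child $w'$ may contain decision labels that are not reachable from $v$ in $B[{\bf g}]$ and hence do not participate in the surviving structure at all. One must verify that every such ``lost'' variable is exactly what is needed to grow $X_w$ into $X_v$, i.e.\ that it lies neither in $\var({\bf g})$ nor in any $\var(B_u)$ for $u\in L_v({\bf g})$. This is where the $\pi$-ordering of $B$ and the minimality characterization of $L({\bf g})$ (from Lemma~\ref{lem:assignpref1}(2)) do the real work.
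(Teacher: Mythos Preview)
Your proposal is correct and takes essentially the same approach as the paper: a structural induction handling incomplete decision nodes and $\wedge_d$-nodes by the same product identities (your displayed formula at an incomplete decision node is exactly the paper's Lemma~\ref{lem:decrestr}, and your $\wedge_d$ step is Lemma~\ref{lem:conrestr}). The paper phrases it as top-down induction on $|\var(B)|$ with those identities and the matching bookkeeping for $L({\bf g})$ and $X$ isolated as separate auxiliary lemmas, whereas you go bottom-up on the extended $T({\bf g})$ with everything folded into the inductive steps; the obstacle you flag resolves immediately once you note that each $u\in L_v({\bf g})=L_w({\bf g})$ is reachable from $w$, so $\var(B_u)\subseteq\var(B_w)$.
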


We believe that Figure \ref{pic:tgview} makes the statement of Lemma \ref{lem:assignpref2}
intuitively clear. We formally prove the lemma by induction on $|\var(B)|$.
The proof is rather tedious due to the need to inductively account $L({\bf g})$ and $X$
as composed from the corresponding objects for subgraphs of $B$ and the need to prove 
that the induction assumption holds for these subgraphs. We therefore postpone the full proof to
the Appendix and proceed to the use of the lemma for the proposed method of proving
lower bounds for $\wedge_d$-\textsc{obdd}s. 

The central part  of the method is the notion of a fooling set of a Boolean function. 

\begin{definition}
Let $f$ be a Boolean function and let $\pi$ be a permutation of $\var(f)$.
Let $\mathcal{F}$ be a set of assignments over a prefix $\pi_0$ of $\pi$.
We say that $\mathcal{F}$ is a ($\wedge_d$-\textsc{obdd}-)\emph{fooling set} for $f$ under $\pi$ if 
for each ${\bf g} \in \mathcal{F}$ there is an unbreakable set $UB({\bf g}) \subseteq \pi \setminus \pi_0$
of size at least two 
having the following properties.
\begin{enumerate}
\item {\bf Unbreakability:} For each ${\bf g} \in \mathcal{F}$, $\mathcal{S}(f)$ does not break $UB({\bf g})$. 
\item {\bf Distinct projections:} For every two distinct ${\bf g}_1,{\bf g}_2 \in \mathcal{F}$, 
$Proj(\mathcal{S}(f)|_{{\bf g}_1},UB({\bf g}_1) \cup UB({\bf g}_2)) \neq Proj(\mathcal{S}(f)|_{{\bf g}_2},UB({\bf g}_1) \cup UB({\bf g}_2))$.
\end{enumerate} 
\end{definition}

In the next lemma, we demonstrate that if $\mathcal{F}$ is a fooling set of a Boolean function $f$
under a permutation $\pi$ then any $\wedge_d$-\textsc{obdd} representing $f$ and respecting $\pi$
must have size at least $|\mathcal{F}|$. The proof consists of two stages. 
At the first stage, we consider an arbitrary ${\bf g} \in \mathcal{F}$.
Combining the unbreakability property with Lemma \ref{lem:assignpref2}, we identify
precisely one $u \in L({\bf g})$ such that $UB({\bf g}) \subseteq \var(B_u)$. 
We refer to this $u$ as $u({\bf g})$. 

In the second part of the proof, we demonstrate that the mapping from ${\bf g}$
to $u({\bf g})$ is injective  thus implying the desired lower bound. 
For that purpose, we use the distinct projections property.
In particular, for the sake of contradiction, we assume that there  are two distinct ${\bf g}_1,{\bf g}_2 \in \mathcal{F}$
such that $u({\bf g}_1)$ and $u({\bf g}_2)$ is the same node and denote this node by $u$. 
Another application of Lemma \ref{lem:assignpref2} demonstrates that, for each $i \in [2]$,  
the projection of $\mathcal{S}(f)|_{{\bf g}_i}$ to $UB({\bf g}_1) \cup UB({\bf g}_2)$ 
is $Proj(\mathcal{S}(B_u),UB({\bf g}_1) \cup UB({\bf g}_2))$. That is, both projections are the same 
in contradiction to the distinct projections property. 


Now, let us formally state and prove the lemma.

\begin{lemma} \label{lem:wobddmethod}
Let $f$ be a Boolean function and let $\mathcal{F}$ be a fooling set for $f$ under a permutation $\pi$
of $\var(f)$.
Let $B$ be a $\wedge_d$-\textsc{obdd} representing $f$ and respecting $\pi$. 
Then $|B| \geq |\mathcal{F}|$. 
\end{lemma}

\begin{proof}
First, we note that $\mathcal{S}(B)=\mathcal{S}(f)$.
Let ${\bf g} \in \mathcal{F}$.
By Proposition \ref{nobreak} combined with Lemma \ref{lem:assignpref2},
either $UB({\bf g}) \subseteq X$ or there is $u \in L({\bf g})$ with 
$UB({\bf g}) \subseteq \var(B_u)$. The former is ruled out simply
by contradiction to unbreakability ($X$ can be arbitrarily broken whereas
$UB({\bf g})$ is of size at least $2$ so $UB({\bf g}) \subseteq X$ would mean that breaking across $UB({\bf g})$ is possible).
We denote the $u$ such that $UB({\bf g}) \subseteq \var(B_u)$ by $u({\bf g})$.
We prove that the mapping from $\mathcal{F}$ to $V(B)$ such that ${\bf g} \in \mathcal{F}$
is mapped to $u({\bf g})$ is injective and this implies the desired lower bound.

Indeed, assume the opposite. Then there are distinct ${\bf g}_1,{\bf g}_2 \in \mathcal{F}$
such that $u=u({\bf g}_1)=u({\bf g}_2)$. 
By Lemma \ref{lem:assignpref2}, for each $i \in [2]$, $Proj(\mathcal{S}(B)|_{{\bf g}_i},\var(B_u))=\mathcal{S}(B_u)$. 
This in particular, implies that for each $i \in [2]$,
$Proj(\mathcal{S}(B)|_{{\bf g}_i},UB({\bf g}_1) \cup UB({\bf g}_2))=Proj(\mathcal{S}(B_u),UB({\bf g}_1) \cup UB({\bf g}_2))$
in contradiction to the distinct projections assumption. 
\end{proof}

Thus, in order to prove that a $\wedge_d$-\textsc{obdd} representing a function $f$ and respecting a specific order $\pi$
is large, it is enough to demonstrate existence of a large fooling set for $f$ under $\pi$.
If $\pi$ is chosen arbitrary, this also implies a lower bound for an arbitrary $\wedge_d$-\textsc{obdd}
representing $f$. 
The advantage of this approach is that it is completely `function-based' and does not require direct
reasoning about the model. Of course, proving existence of a large $\wedge_d$-fooling set may also be challenging.
However, the difficulty is comparable with the difficulty of establishing a large fooling set for \textsc{obdd}s.
Indeed, in the latter case, we effectively need to establish the distinct projection property  whereas, in the former
case, we also need to establish existence of unbreakable sets. We believe that this extra layer of difficulty is reasonable 
since $\wedge_d$-\textsc{obdd} is a much more powerful model than \textsc{obdd}
(as demonstrated in the next subsection).
Thus, the proposed approach is a useful tool for establishing lower bounds. 
We demonstrate its work in Section \ref{sec:engineproofs} where we use the tool  
for proving Theorem \ref{obddengine1}.

\subsection{$\wedge_d$-OBDD lower bounds and separations }

In \cite{dnnf2017} we established an \textsc{xp} lower bound
for $\wedge_d$-\textsc{obdd}s representing \textsc{cnf}s of bounded incidence 
treewidth. 
We prove this result here using the approach presented in Section \ref{sec:obddapp}.

\begin{theorem} \label{nonfpt}
There is a constant $c$ and an infinite set of values $k$ such that
for each $k$ there is an infinite set $\Phi$ of \textsc{cnf}s such that
for each $\varphi \in \Phi$  $itw(\varphi) \leq k$ and 
$\wedge_d$-\textsc{obdd}$(\varphi) \geq |\var(\varphi)|^{c\cdot k}$.
\end{theorem}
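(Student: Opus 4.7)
The plan is to apply the methodology of Section \ref{sec:obddapp} to a CNF family derived from graphs with bounded treewidth and large LSIM-width. First, I would invoke Lemma \ref{largemimw} to produce, for each sufficiently large $k$, an infinite family $\{G\}$ with $tw(G) \leq k$ and $lsimw(G) \geq c' k \log |V(G)|$. For each such $G$, I would take $\varphi_G = \bigwedge_{\{u,v\} \in E(G)} (x_u \vee x_v)$, the monotone 2-CNF whose primal graph is $G$; then $itw(\varphi_G) \leq tw(G) \leq k$ by Proposition \ref{priminc}, settling the treewidth side of the theorem. The satisfying assignments of $\varphi_G$ are the vertex covers of $G$.

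Now let $B$ be a $\wedge_d$-\textsc{obdd} representing $\varphi_G$ and obeying an order $\pi$. The LSIM-width bound gives a witnessing prefix $\pi_0$ and an induced matching $M = \{\{u_i,v_i\} : i \in [m]\}$ with each $u_i \in \pi_0$, each $v_i$ in the suffix $\pi_1$, and $m \geq c' k \log n$ where $n = |\var(\varphi_G)|$. For each subset $S \subseteq [m]$ of size exactly $m/2$, define the fooling assignment ${\bf g}_S$ on $\pi_0$ by setting $x_{u_i} = 0$ for $i \in S$, $x_{u_i} = 1$ for $i \notin S$, and every remaining $\pi_0$-variable to $1$. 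Every prefix-only clause or crossing clause with a non-matching prefix endpoint is then satisfied, while the matching clause $(x_{u_i} \vee x_{v_i})$ with $i \in S$ forces $x_{v_i} = 1$. Thus $\mathcal{S}(B)|_{{\bf g}_S}$ equals the set of vertex covers of $G[\pi_1]$ that contain every $v_i$ with $i \in S$, and the family $\mathcal{F} = \{{\bf g}_S : |S| = m/2\}$ has cardinality $\binom{m}{m/2} \geq n^{\Omega(c'k)}$.

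For each $S$, I would set $UB({\bf g}_S) = \{v_a, v_b\}$ with $a, b \in [m] \setminus S$ chosen so that $v_a, v_b$ lie in the same connected component of the reduced graph $G[\pi_1 \setminus \{v_i : i \in S\}]$. The unbreakability of $UB({\bf g}_S)$ follows from a standard fact about vertex-cover sets: they factor across a partition of the vertex set only when that partition has no crossing edges, so two vertices sharing a connected component cannot be placed on opposite sides of a rectangle decomposition of $\mathcal{S}(B)|_{{\bf g}_S}$. Lemma \ref{lem:assignpref2} then yields $u({\bf g}_S) \in L({\bf g}_S)$ with $UB({\bf g}_S) \subseteq \var(B_{u({\bf g}_S)})$, since the $X$-part of the factorization is a union of singletons and cannot accommodate a two-element set.

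To finish, I would show $u({\bf g}_S) \neq u({\bf g}_{S'})$ for $S \neq S'$: if $u({\bf g}_S) = u({\bf g}_{S'}) = u$ then both projections of $\mathcal{S}(B)|_{{\bf g}_S}$ and $\mathcal{S}(B)|_{{\bf g}_{S'}}$ onto $UB({\bf g}_S) \cup UB({\bf g}_{S'}) \subseteq \var(B_u)$ equal the same projection of $\mathcal{S}(B_u)$ and hence must agree. A distinguishing coordinate is any index $i \in S \triangle S'$ with $v_i \in UB({\bf g}_S) \cup UB({\bf g}_{S'})$: there, $v_i$ is forced to $1$ under one restriction but free under the other, contradicting agreement. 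Combined with $|\mathcal{F}| \geq n^{\Omega(c'k)}$ this yields $|B| \geq n^{ck}$. The main obstacle is exactly this coordination step: since $UB({\bf g}_S)$ depends on $S$ and only draws indices from $[m] \setminus S$, it is not automatic that some element of $S \triangle S'$ lies in $UB({\bf g}_S) \cup UB({\bf g}_{S'})$. Overcoming it likely requires choosing the base graph family with strong enough connectivity (so that many candidate pairs $(v_a, v_b)$ qualify, permitting a coordinated choice across $S$), or restricting the fooling family to a symmetric subfamily on which the unbreakable-set selection is forced to include witnessing indices from the symmetric difference.
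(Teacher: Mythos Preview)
Your proposal has a fatal gap at the very first step: the \textsc{cnf} family $\varphi_G=\bigwedge_{\{u,v\}\in E(G)}(x_u\vee x_v)$ has primal graph equal to $G$, so $ptw(\varphi_G)=tw(G)\le k$. But \textsc{cnf}s of bounded \emph{primal} treewidth admit \textsc{fpt}-sized $\wedge_d$-\textsc{obdd}s (this is exactly the upper bound of \cite{OztokCP,DesDNNF} mentioned in the introduction). Hence the $n^{\Omega(k)}$ lower bound you are trying to prove is \emph{false} for this family, and no amount of refinement of the fooling-set argument can rescue it. The obstacle you flag at the end (coordinating $UB({\bf g}_S)$ with $S\triangle S'$) is not a technical wrinkle but a manifestation of this: the $\wedge_d$ gates can decompose $\varphi_G$ along the tree decomposition of $G$, so after any prefix assignment the nodes of $L({\bf g}_S)$ correspond to small pieces of $G$, and many distinct $S$ will land on the same $u({\bf g}_S)$ without any distinguishing $v_i$ in $\var(B_{u({\bf g}_S)})$.

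The paper's construction is designed precisely to avoid this. It works not with $\varphi(G)$ but with $\psi(G)=\varphi(G^*)\wedge \neg V[1]\wedge \neg V[2]$, where $G^*$ is a bipartite double cover and the two long negative clauses are added. These long clauses (i) raise the primal treewidth unboundedly while keeping the incidence treewidth at $O(k)$, and (ii) supply the global constraint that drives unbreakability: in Lemma~\ref{lem:neatcross2} the clause $\neg V[2]$ is what prevents $\mathcal{S}(B)|_{\bf g}$ from factoring across $W[2]_{I({\bf g})}$, because any putative factorisation would allow a satisfying assignment with all of $V[2]$ set to~$1$. Your connectivity-based unbreakability argument is locally correct for vertex covers, but it only pins down a \emph{single connected component}, which is far too little information to distinguish exponentially many fooling assignments when $G$ has bounded treewidth. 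The long clauses are exactly what force the unbreakable set to track the entire pattern of zeros in ${\bf g}$.
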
 

Thus a natural restriction on $\wedge_d$-\textsc{fbdd} separates
primal \cite{DesDNNF} and incidence treewidth. 
This makes $\wedge_d$-\textsc{obdd}
an interesting object to study in greater detail. We establish the
relationship of $\wedge_d$-\textsc{obdd} with two closely related
formalisms \textsc{obdd} and \textsc{fbdd} by  proving exponential separations 
as specified below (for both statements, the lower bounds are proved using
the approach as in Section \ref{sec:obddapp}). 
In the next section, we study the Apply operation for $\wedge_d$-\textsc{obdd}s. 

\begin{theorem} \label{obddsep1}
There is an infinite class of \textsc{cnf}s polynomially representable as
\textsc{fbdd}s but requiring exponential-size representation as $\wedge_d$-\textsc{obdd}s. 
\end{theorem}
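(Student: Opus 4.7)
The plan is to exhibit a class of CNFs $\{\varphi_n\}$ that pack $2^{O(\log n)}$ ``contexts'' into a single formula, each context having its own preferred variable order. An FBDD can dispatch by branching on the selector bits and then commit to the preferred order of the active context, yielding polynomial size. A $\wedge_d$-\textsc{obdd} cannot: its single global order must fit all contexts, and the $\wedge_d$ gates help only by letting two children work on disjoint variable sets, not by reordering variables across the branches of a decision node. Consequently, restricting to a badly matched context leaves a subformula that is intrinsically $\wedge_d$-\textsc{obdd}-hard, from which I would extract an exponential lower bound via the methodology of Section~\ref{sec:obddapp}.

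\emph{Construction and FBDD upper bound.} Let $k:=\lceil\log_2 n\rceil$; take selector variables $\bar z$ of length $k$ and base variables $X\cup Y$ with $|X|=|Y|=n$. For each $\sigma\in\{0,1\}^k$ choose a permutation $\rho_\sigma$ of $[n]$ so that the family of matchings $\{M_\sigma:=\{(x_i,y_{\rho_\sigma(i)})\}\}$ is \emph{universal}: for every linear order $\pi$ on $X\cup Y$ there is a $\sigma^*$ such that $M_{\sigma^*}$ has $\Omega(n)$ edges crossing some cut of $\pi$. A standard probabilistic argument shows that $2^k=n$ random permutations suffice. Define the hard kernel $\psi_\sigma:=\bigvee_{i\in[n]}(x_i\wedge y_{\rho_\sigma(i)})$ and take $\varphi_n$ to be the CNF equivalent to $\bigwedge_\sigma\bigl((\bar z=\sigma)\to\psi_\sigma\bigr)$, using standard Tseitin auxiliaries $b_i^{(\sigma)}\leftrightarrow x_i\wedge y_{\rho_\sigma(i)}$ together with the big disjunction clauses $\neg(\bar z=\sigma)\vee\bigvee_i b_i^{(\sigma)}$. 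The FBDD first reads $\bar z$ in a balanced binary tree of $O(n)$ nodes, and at each leaf $\sigma$ processes $\psi_\sigma$ with an $O(n)$-sized OBDD using the interleaved order $x_1,y_{\rho_\sigma(1)},x_2,y_{\rho_\sigma(2)},\dots$, accepting as soon as a pair $(1,1)$ is encountered. Total size $O(n^2)$.

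\emph{$\wedge_d$-\textsc{obdd} lower bound.} Fix any $\wedge_d$-\textsc{obdd} $B$ for $\varphi_n$ obeying $\pi$; let $\sigma^*$ be bad for $\pi$ with witnessing cut $(\pi_0,\pi_1)$, and let ${\bf g}_{\sigma^*}$ set $\bar z=\sigma^*$. The alignment $B[{\bf g}_{\sigma^*}]$ is a $\wedge_d$-\textsc{obdd} for $\psi_{\sigma^*}$ (plus its Tseitin shell) under the induced order of $\pi$, so it suffices to prove $|B[{\bf g}_{\sigma^*}]|\geq 2^{\Omega(n)}$. I would apply the Section~\ref{sec:obddapp} methodology to $\psi_{\sigma^*}$: pick a fooling family $\mathcal{F}$ of $2^{\Omega(n)}$ prefix-assignments ${\bf h}$ over $\pi_0$, each setting $Y\cap\pi_0$ to $0$ and also those $x\in X\cap\pi_0$ whose $M_{\sigma^*}$-partner lies in $\pi_0$, while varying freely over the $\Omega(n)$-many $x\in X\cap\pi_0$ whose partners lie in $\pi_1$. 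For each ${\bf h}$ take $UB({\bf h})$ to be a pair of $Y$-variables in $\pi_1$ drawn from distinct $M_{\sigma^*}$-cross-edges whose $X$-partners are among the varying bits of ${\bf h}$. The OR-of-ANDs structure of $\psi_{\sigma^*}$ provides the coupling that prevents any rectangle decomposition splitting $UB({\bf h})$, while distinct ${\bf h}$'s produce distinct projections onto $UB({\bf h})$ via the pair constraints (the matched $y$-variables are required to line up with the fixed values of their $x$-partners in order to satisfy the disjunction). Lemmas~\ref{lem:assignpref1} and~\ref{lem:assignpref2} then assign each ${\bf h}$ to a unique node of $B[{\bf g}_{\sigma^*}]$, giving $|B|\geq|\mathcal{F}|=2^{\Omega(n)}$.

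\emph{Main obstacle.} The crux is the unbreakability step: the top-level OR of $\psi_\sigma$ is the global coupling that blocks $\wedge_d$-decomposition, but a rigorous verification must rule out every rectangle decomposition that would split $UB({\bf h})$, by a careful case analysis on which side of the putative decomposition can satisfy the surviving disjuncts after restriction. This analysis must work for all ${\bf h}\in\mathcal{F}$ simultaneously, so the fooling family must be engineered so that after restriction the OR remains both ``alive'' (no $(1,1)$ pair has been forced already by the prefix) and sufficiently coupled on the chosen $UB({\bf h})$. A secondary difficulty is the combinatorial construction of the universal permutation family $\{\rho_\sigma\}$ using only $O(\log n)$ selectors, which requires an expander-style counting argument that union-bounds over all possible orders $\pi$.
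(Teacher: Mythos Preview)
Your construction has a real gap in the \textsc{fbdd} upper bound, caused by the Tseitin auxiliaries. You introduce $b_i^{(\sigma)}\leftrightarrow x_i\wedge y_{\rho_\sigma(i)}$ for \emph{every} context $\sigma$, and these equivalences are part of the \textsc{cnf} regardless of the value of $\bar z$. After the \textsc{fbdd} branches to leaf $\sigma$ and reads $X\cup Y$ in the $\rho_\sigma$-interleaved order, it must still certify all $n\cdot n$ equivalences $b_i^{(\sigma')}\leftrightarrow x_i\wedge y_{\rho_{\sigma'}(i)}$ for the other contexts $\sigma'\neq\sigma$. But those pairs $(x_i,y_{\rho_{\sigma'}(i)})$ are scattered arbitrarily in the chosen order, so a read-once device has to carry essentially all of the $X\cup Y$ values forward to check them --- the claimed $O(n^2)$ bound does not hold. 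Guarding the Tseitin clauses by $\neg(\bar z=\sigma)$ (so that the $b^{(\sigma')}$ become free when $\bar z\neq\sigma'$) repairs the upper bound, but then these $n^2$ auxiliary variables also sit in the $\wedge_d$-\textsc{obdd}'s global order $\pi$, and your fooling/unbreakability argument must account for them in the prefix and in $\var(B_u)$; you never address this, and it is not automatic.

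The paper sidesteps both problems by choosing a kernel that is already a \textsc{cnf}: it uses $\psi(G_n,E_{hor},E_{vert})$ with a \emph{single} junction variable and only two contexts (horizontal vs.\ vertical edges of the $n\times n$ grid). Each context $\psi(G_n[E_{hor}])$, $\psi(G_n[E_{vert}])$ is a monotone $2$-\textsc{cnf} plus two long negative clauses and has incidence pathwidth $\leq 7$, giving linear \textsc{fbdd}s directly --- no auxiliaries needed. For the lower bound, Lemma~\ref{lem:crossmatch2} guarantees that whichever order $\pi$ the $\wedge_d$-\textsc{obdd} chooses, one of the two contexts has an induced matching of size $\Omega(n)$ neatly crossing $\pi$, and Theorem~\ref{obddengine1} finishes. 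Your universal-permutation idea and the OR-of-ANDs unbreakability sketch are plausible in spirit, but the route through Tseitin is where the argument currently breaks; if you want to salvage it, you should replace $\psi_\sigma$ by a kernel that is natively \textsc{cnf} (as the paper does) so that no extension variables are needed.
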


\begin{theorem} \label{obddsep2}
There is an infinite class of \textsc{cnf}s polynomially representable as 
$\wedge_d$-\textsc{obdd}s but requiring exponential-size representation 
as \textsc{obdd}s. 
\end{theorem}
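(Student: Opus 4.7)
The plan is to construct an explicit infinite family $\{\varphi_n\}$ of \textsc{cnf}s whose structure admits a natural conditional decomposition exploitable by $\wedge_d$-\textsc{obdd} but invisible to ordinary \textsc{obdd}. Each $\varphi_n$ is defined on three disjoint variable sets $Z$, $X_1$, $X_2$, with $Z$ a set of ``control'' variables and $X_1, X_2$ the ``data.''  The clauses are chosen so that, for every assignment ${\bf a}$ to $Z$, the restriction $\varphi_n|_{\bf a}$ factors as $\psi_1^{\bf a}(X_1) \wedge \psi_2^{\bf a}(X_2)$, where each factor admits a poly-size \textsc{obdd} under a fixed order $\pi_i$ of $X_i$.

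The upper bound then follows by taking the global order $\pi = \pi_Z \cdot \pi_1 \cdot \pi_2$: build an \textsc{obdd}-style control part over $Z$ (polynomial when equivalent branches are shared) and, at each terminal of this control part, insert a $\wedge_d$ node whose two children are the sub-\textsc{obdd}s for $\psi_1^{\bf a}$ and $\psi_2^{\bf a}$.  The decomposable conjunction is essential here: each sub-model reads only its own variables from $\pi$, so the two factors' sizes add rather than multiply.

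For the lower bound I apply the fooling-set methodology of Section~\ref{sec:obddapp} specialised to plain \textsc{obdd}.  Since an \textsc{obdd} has no $\wedge_d$ gates, the alignment $B[{\bf g}]$ degenerates to a single directed path $P({\bf g})$ ending at a node $u({\bf g})$, and the ``unbreakable set'' requirement of the general methodology becomes trivial.  The task reduces to exhibiting, for every linear order $\pi'$ of $\var(\varphi_n)$, a fooling set $\mathcal{F}$ of prefix assignments of size $2^{\Omega(n)}$ whose restrictions $\mathcal{S}(\varphi_n)|_{\bf g}$ are pairwise distinguishable, forcing $u({\bf g}_1) \neq u({\bf g}_2)$ for distinct ${\bf g}_1,{\bf g}_2 \in \mathcal{F}$.

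The main obstacle is ensuring the lower bound holds for \emph{every} order $\pi'$, and in particular for orders that place the control set $Z$ near the front: naively, a $Z$-first order lets a plain \textsc{obdd} imitate the $\wedge_d$-\textsc{obdd} by branching on $Z$ at the top and then running the \textsc{obdd} for $\psi_1^{\bf a} \wedge \psi_2^{\bf a}$ on the concatenation $\pi_1 \cdot \pi_2$ with only polynomial blow-up.  To rule this out I plan to spread the control information across many hierarchical selector layers interleaved with fresh data variables, so that no linear order can process all selection information before also committing to data; then any cut of any $\pi'$ must carry $2^{\Omega(n)}$ distinct residual subfunctions, yielding the required exponential \textsc{obdd} lower bound while the $\wedge_d$-\textsc{obdd} still collapses the selector layers via shared decomposable conjunctions.
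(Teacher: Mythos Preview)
Your proposal has a structural gap that undermines both directions of the separation. With disjoint data blocks $X_1,X_2$ and the order $\pi_Z\cdot\pi_1\cdot\pi_2$, the $\wedge_d$ node is doing no work: for variable-disjoint conjuncts whose variables occupy \emph{consecutive} segments of the global order, an ordinary \textsc{obdd} already computes $\psi_1^{\bf a}(X_1)\wedge\psi_2^{\bf a}(X_2)$ in size $|\textsc{obdd}(\psi_1^{\bf a})|+|\textsc{obdd}(\psi_2^{\bf a})|$ by simple concatenation. So your upper-bound construction is in fact a plain \textsc{obdd}, and therefore the same family cannot be exponentially hard for \textsc{obdd}. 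You spotted this obstacle yourself, but the proposed ``hierarchical selector layers interleaved with fresh data'' patch is not a construction; it does not explain how the $\wedge_d$-\textsc{obdd} remains polynomial once selectors and data are interleaved, and without that the upper bound collapses together with the lower bound.

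The paper avoids this trap by making the two conjuncts live on the \emph{same} variable set rather than disjoint ones. Concretely it takes $\varphi(G_n,E_{hor},E_{vert})$ on the $n\times n$ grid with a \emph{single} junction variable: setting it to $1$ yields $\varphi(G_n[E_{hor}])$, setting it to $0$ yields $\varphi(G_n[E_{vert}])$. For the upper bound, under the dictionary order each restriction is a conjunction of $n$ variable-disjoint path-\textsc{cnf}s (the rows, respectively the columns); each path has a linear \textsc{obdd}, and a tree of $\wedge_d$ nodes combines them. The crucial point is that the column variables are \emph{interleaved} in the dictionary order, so an \textsc{obdd} cannot concatenate the column \textsc{obdd}s---this is exactly where $\wedge_d$ earns its keep. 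For the lower bound, Lemma~\ref{lem:crossmatch2} shows that every linear order of $V(G_n)$ is crossed by a large induced matching of either $G_n[E_{hor}]$ or $G_n[E_{vert}]$, and Theorem~\ref{obddengine2} turns that matching into a $2^{\Omega(n)}$ fooling set for the corresponding restriction; Lemma~\ref{lem:restr1} then transfers the bound to the full formula. The missing idea in your plan is precisely this ``same variables, incompatible decompositions'' device: the power of $\wedge_d$ under a fixed order comes from separating independent pieces whose variables are \emph{interleaved} in that order, not from splitting off a disjoint suffix block.
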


\begin{remark} \label{remobdd1}
As $\wedge_d$-\textsc{obdd} is not a special case of \textsc{fbdd} it is natural
to ask whether an opposite of Theorem \ref{obddsep1} holds. 
It is known from \cite{BeameDNNF} that $\wedge_d$-\textsc{fbdd} can be 
quasipolynomially simulated by \textsc{fbdd}. As $\wedge_d$-\textsc{obdd}
is a special case of $\wedge_d$-\textsc{fbdd}, the simulation clearly applies
to $\wedge_d$-\textsc{obdd}. 

For the corresponding quasipolynomial separation, take 
a class of \textsc{cnf}s of $ptw$ at most $k$ that 
require \textsc{fbdd} of size $n^{\Omega(k)}$ 
(see \cite{RazgonAlgo} for such a class of \textsc{cnf}s).
Let $k=\log n$ where $n$ is the number of variables. 
Then the resulting class of \textsc{cnf}s will require 
$n^{\Omega(\log(n)}$ size \textsc{fbdd}s for their representation
but can be represented by a polynomial size $\wedge_d$-\textsc{obdd}
according to \cite{DesDNNF}.  
\end{remark}

\begin{remark} \label{remobdd2}
Theorem \ref{obddsep2} looks somewhat surprising because  
equipping \textsc{fbdd}s
with the $\wedge_d$-nodes leads only
to quasipolynomial gains in size \cite{BeameDNNF}
\end{remark}

In the two definitions below, we introduce classes of \textsc{cnf}s needed 
for the stated lower bounds and separations. 

\begin{definition}
Let $G$ be a graph without isolated vertices. 
Then $\varphi(G)$ is a \textsc{cnf} whose set of variables is $V(G)$
and the set of clauses is $\{(u \vee v)|\{u,v\} \in E(G)\}$. 

Let $V=V(G)$ and let $V[1]=\{v[1]| v \in V\}$ and $V[2]=\{v[2]|v \in V\}$
be two disjoint copies of $V$. Let $G^*$ be the graph with
$V(G^*)=V[1] \cup V[2]$ and $E(G^*)=\bigcup_{\{u,v\} \in E(G)} \{\{u[1],v[2]\}, \{v[1],u[2]\}\}$. 

The \textsc{cnf} $\psi(G)$ is obtained from $\varphi(G^*)$ by introducing two additional
clauses $\neg V[1]=\{\neg v[1]|v \in V\}$ and $\neg V[2]=\{\neg v[2]|v \in V\}$. 

Extending the notation $V[i]$, for  $U \subseteq V(G)$,
and $i \in \{1,2\}$, we let $U[i]=\{u[i]|u \in U\}$.
\end{definition}

\begin{example} 
Let $G$ be a graph with $V(G)=\{u_1,\dots, u_5\}$
and\\ $E(G)=\{\{u_1,u_2\},\{u_2,u_3\},\{u_3,u_4\},\{u_4,u_5\},\{u_5,u_1\},\{u_1,u_3\}\}$. 
In other words, $G$ is a cycle of $5$ vertices with an extra chord. 
Then $V(G^*)=\{u_1[1], \dots, u_5[1],u_1[2], \dots, u_5[2]\}$
and $E(G^*)=\{\{u_1[1],u_2[2]\},\{u_2[1],u_3[2]\},\{u_3[1],u_4[2]\},\{u_4[1],u_5[2]\},\{u_5[1],u_1[2]\},\{u_1[1],u_3[2]\},\\
          \{u_1[2],u_2[1]\},\{u_2[2],u_3[1]\},\{u_3[2],u_4[1]\},\{u_4[2],u_5[1]\},\{u_5[2],u_1[1]\},\{u_1[2],u_3[1]\}
  \}$.  
$\varphi(G)$ and $\varphi(G^*)$ simply turn the edges $\{u,v\}$ of the respective
graphs into clauses $(u \vee v)$. 
$\psi(G)$ is obtained from $\varphi(G^*)$ by adding clauses
$(\neg u_1[1] \vee \dots \vee \neg u_5[1])$ and
$(\neg u_1[2] \vee \dots \vee \neg u_5[2])$
\end{example}

We introduce two more classes of formulas that are obtained from $\varphi(G)$
and $\psi(G)$ by introducing \emph{junction variables}. 

\begin{definition}
Let $G$ be a graph without isolated vertices
and let $E_1,E_2$ be a partition of $E(G)$
so that $V(G[E_1])=V(G[E_2])=V(G)$. 

We denote by $\varphi(G,E_1,E_2)$ the function 
$(\textsc{jn} \rightarrow \varphi(G[E_1])) \wedge (\neg \textsc{jn} \rightarrow \varphi(G[E_2])$
where $\textsc{jn}$ is a \emph{junction variable} that does not belong to $V(G)$. 
It is not hard to see that, $\varphi(G,E_1,E_2)$ can be expressed as a \textsc{cnf}
$\bigwedge_{\{u,v\} \in E_1} (\neg \textsc{jn} \vee u \vee v) \wedge 
  \bigwedge_{\{u,v\} \in E_2} (\textsc{jn} \vee u \vee v)$. 

Similarly, we denote by 
$\psi(G,E_1,E_2)$ the function 
$(\textsc{jn} \rightarrow \psi(G[E_1])) \wedge (\neg \textsc{jn} \rightarrow \psi(G[E_2])$
\end{definition}

\begin{remark} \label{remobdd3}
We will use \textsc{cnf}s with junction variables 
for Theorems \ref{obddsep1} and \ref{obddsep2}. 
This approach is used in \cite{WegBook} for separation of
\textsc{obdd}s and \textsc{fbdd}s. The rough idea is that
a 'hard' function is 'split' into two 'easy' ones using the junction
variable. The \textsc{fbdd}, due to its flexibility, can adjust
and represent each 'easy' function with the suitable variable
ordering. On the other hand, \textsc{obdd} is rigid in the sense that
it needs to use the same ordering for both 'easy' functions 
and the ordering leads to an exponential size representation 
for one of them. 
\end{remark}

In order to proceed, we need to delve deeper into the structure
of matchings of the graph $G^*$.  This is done in the next definition
and the two statements that follow.

\begin{definition}
Let $M$ be a matching of $G^*$ between $V'$ and $V''$. 
We say that $V',V''$ is the \emph{neat} partition of 
$V(M)$ if there are $U,W \subseteq V(G)$ such that  $V'=U[1]$
and $V''=W[2]$. 

Let $\pi^*$ be a linear order of $V(G^*)$.
We say that $M$ \emph{neatly crosses} $\pi^*$
if there a \emph{witnessing} prefix $\pi^*_0$ of $\pi^*$ such that
$\pi^*_0 \cap V(M)$ and $(\pi^* \setminus \pi^*_0) \cap V(M)$
form a neat partition of $M$.
\end{definition}

\begin{example}
Consider a graph $G$  with four vertices $u',u'',v',v''$
and two edges: $\{u',u''\}$ and $\{v',v''\}$. 
Let $M=\{\{u'[1],u''[2]\}, \{v''[1],v'[2]\}\}$. 
Clearly, $M$ is  a matching of $G^*$.
The neat partition for $M$ is $\{\{u'[1],v''[1]\}, \{u''[2],v'[2]\}\}$,
the corresponding partition of $V(G)$ is $\{\{u',v''\}, \{u'',v'\}\}$. 

Consider the linear order
$\pi^*=(v''[2],u'[1],u'[2],u''[1],v''[1],v'[1],u''[2],v'[2])$.
The order neatly crosses $M$ as witnessed by  the prefix  
$\pi^*_0=(v''[2],u'[1],u'[2],u''[1],v''[1])$. 
On the other hand, if in $\pi^*$ we swap $v''[1]$ and $u''[2]$ then we obtain
a linear order of $V(G^*)$ that does not cross $M$ neatly. 
\end{example}

In the next statement, we demonstrate that if each linear
order of $G$ is crossed by a large induced matching
then every linear order of $G^*$ is \emph{neatly crossed}
by a large induced matching. The statement is formulated
in terms of parameter \textsc{lsim}-width defined earlier.

\begin{lemma} \label{lem:crossmatch1}
Every linear order of $G^*$ is neatly crossed 
by an induced matching of size at least $lsimw(G)/2$
\end{lemma}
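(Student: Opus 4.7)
My plan is to reduce the problem of finding a large neatly-crossing induced matching of $G^*$ to the problem of finding a large crossing induced matching of $G$, so that the bound $lsimw(G)$ can be invoked by its definition. For each $v \in V(G)$, write $a(v)$ and $b(v)$ for the positions of $v[1]$ and $v[2]$ in $\pi^*$. I will define a linear order $\pi$ of $V(G)$ so that any induced matching $M' = \{\{u_i, w_i\}\}_{i=1}^\ell$ of $G$ crossing $\pi$, with prefix witness $U = \{u_i\}$ and suffix witness $W = \{w_i\}$, lifts to an induced matching $M$ of $G^*$ of the same size neatly crossing $\pi^*$. Each edge of $M'$ admits two lifts in $G^*$: ``Lift~1'' gives $\{u_i[1], w_i[2]\}$ and ``Lift~2'' gives $\{u_i[2], w_i[1]\}$. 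The lifted $M$ is automatically induced in $G^*$, since any additional $G^*$-edge between vertices of $V(M)$ would project to an additional $G$-edge between vertices of $V(M')$, contradicting the induced-ness of $M'$.

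My first attempt is to sort $V(G)$ in increasing order of $\max(a(v), b(v))$ and to pick the cut $k^* = \max_{u \in U}\max(a(u), b(u))$ in $\pi^*$. Then both copies of every $u \in U$ lie at positions $\leq k^*$, while for each $w \in W$ at least one of $w[1], w[2]$ lies at a position $> k^*$. For $M$ to neatly cross $\pi^*$ at $k^*$, all of its edges must use the same lift type, so that the ``prefix side'' of $V(M)$ lies entirely within $V[1]$ (uniform Lift~1) or entirely within $V[2]$ (uniform Lift~2). Concretely, Lift~1 is valid for the $i$-th edge iff $b(w_i) > k^*$ and Lift~2 iff $a(w_i) > k^*$; each edge supports at least one of these, but the challenge is to make a single global choice work for \emph{every} edge of $M'$.

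The main obstacle is handling the ``mixed'' case, where $W$ contains both vertices of $V_+ = \{v : a(v) < b(v)\}$ (for which only Lift~1 is valid at $k^*$) and vertices of $V_- = \{v : a(v) > b(v)\}$ (for which only Lift~2 is valid). My plan to overcome this is to refine $\pi$ by placing $V_+$ and $V_-$ in two consecutive blocks, each internally sorted by $\max(a(v), b(v))$, and to perform a case analysis based on which block contains the vertices of $V(M')$. When $V(M') \subseteq V_+$, the within-$V_+$-block sort yields $\max_U a < \min_W b$, giving uniform Lift~1; symmetrically, $V(M') \subseteq V_-$ gives uniform Lift~2. The delicate case is when $V(M')$ spans both blocks; here I select the cut $k^*$ in $\pi^*$ guided by the block structure rather than by the $\max$-sort formula, and verify by a direct computation that one of the four neat-lift inequalities (such as $\max_U a < \min_W b$ or $\max_U b < \min_W a$) holds, supplying the required uniform lift. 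This cross-block case analysis is the technically delicate part of the proof.
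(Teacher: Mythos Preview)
Your cross-block case has a genuine gap: the claim that ``one of the four neat-lift inequalities holds'' is false in general, and no choice of $k^*$ rescues it. Take $G$ on $\{u_1,u_2,w_1,w_2\}$ with edges $\{u_1,w_1\},\{u_2,w_2\}$ only, and choose $\pi^*$ so that
\[
(a,b)(u_1)=(1,2),\quad (a,b)(u_2)=(7,8),\quad (a,b)(w_1)=(4,3),\quad (a,b)(w_2)=(6,5).
\]
Then $V_+=\{u_1,u_2\}$, $V_-=\{w_1,w_2\}$, your refined order is $\pi=(u_1,u_2,w_1,w_2)$, and the induced matching $M'=\{\{u_1,w_1\},\{u_2,w_2\}\}$ crosses $\pi$ at the block boundary with $U=\{u_1,u_2\}$, $W=\{w_1,w_2\}$. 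All four uniform-lift conditions fail:
\[
\max_U a=7\not<3=\min_W b,\ \ \max_U b=8\not<4=\min_W a,\ \ \max_W a=6\not<2=\min_U b,\ \ \max_W b=5\not<1=\min_U a.
\]
So no uniform lift of $M'$ neatly crosses $\pi^*$, for \emph{any} cut. (Here $lsimw(G)=1$, so the lemma itself survives; it is your argument that breaks, because it commits to lifting whichever $M'$ it found.) The underlying reason is that once the split in $\pi$ falls at the $V_+/V_-$ boundary, nothing in your construction relates the $a$-values on one side to the $b$-values on the other, which is precisely what a neat lift requires.

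The paper takes a different and simpler route that sidesteps this entirely. It orders $V(G)$ by the position of the \emph{earlier} copy (i.e.\ by $\min(a(v),b(v))$), finds an induced matching $M$ crossing this order with prefix side $U$, and then pigeonholes $U$ according to which copy came first. Restricting to the majority half $U_1$ (say those with $v[1]$ first) gives a submatching $M_1$ of size $\ge|M|/2$ whose lift $\{\{u[1],w[2]\}:\{u,w\}\in M_1,\ u\in U_1\}$ neatly crosses $\pi^*$: for each $w$ on the suffix side, its \emph{earlier} copy already lies after the last $u[1]$, hence so does $w[2]$. This costs a factor~$2$ --- indeed the proof actually delivers $lsimw(G)/2$, which is how the lemma is invoked downstream --- but it eliminates the block decomposition and the fragile cross-block analysis altogether. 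The natural repair to your argument is to import this pigeonhole step on $U$ rather than attempt a uniform lift of the full $M'$.
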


\begin{proof}
Let $\pi^*$ be a linear order of $V(G^*)$. 
Let $\pi$ be a linear order of $V(G)$ obtained from $\pi^*$ as follows. 
For each $v \in V(G)$, let $ind(v)=1$ if $v[1]$ appears in $\pi^*$ 
before $v[2]$ and $2$ otherwise. 
In $\pi$, $u$ is ordered before $v$ if and only if $u[ind(u)]$
is ordered before $v[ind(v)]$.

Let $\pi_0$ be a prefix of $\pi$ witnessing 
a matching $M$ of size at least $lsimw(G)$ crossing $\pi$. 
Let $U=V(M) \cap \pi_0$. 
Then $U$ is can be weakly partitioned into $U_1$ and
$U_2$ such that $U_i$ is the set of $u \in U$ such that
$ind(u)=i$. By the pigeonhole principle, the size of either 
$U_1$ or $U_2$ is at least $|U|/2$. We assume the former 
w.l.o.g. Let $M_1$ be the subset of $M$ consisting of all
$e \in M$ such that $U_1 \cap e \neq \emptyset$. 
Let $W_1$ be $V(M_1) \setminus U_1$. Clearly, $M_1$
is a matching between $U_1$ and $W_1$. 
Let $M^*_1=\{\{u[1],w[2]\}|\{u,v\} \in M_1$, $u \in U_1,w \in W_1\}$. 
It is not hard to see that $M^*_1$ is an induced matching of $G^*$
between $U_1[1]$ and $W_1[2]$. 

Let $\pi^*_0$ be the largest prefix of $\pi^*$ finishing with 
a vertex of $U_1[1]$ and let $\pi^*_1$ be the matching suffix.
We claim that $\pi^*_0$ is a witnessing prefix for $\pi^*$
being neatly crossed by $M^*_1$ thus establishing the 
lemma. It is immediate from the definition that 
$U_1[1] \subseteq \pi^*_0$. 
It thus remains to observe that $W_1[2] \subseteq \pi^*_1$.

Let $u[1]$ be the last vertex of $\pi^*_0$. 
Let $w[2] \in W_1[2]$. Note that by definition of $\pi_0$, 
$u \in \pi_0$ while $w$ is not. It follows from $u[ind(u)]<w[ind(w)]$
according to $\pi^*$. Now $ind(u)=1$ by definition of $U_1$. 
That is $u[1]$ occurs before $w[ind(w)]$ and surely before 
$w[3-ind(w)]$ in $\pi^*$. As $u[1]$ is the last element of
$\pi^*_0$, both occurrences of $w$ are outside of $\pi^*_0$. 
\end{proof}

\begin{lemma} \label{lem:crossmatch2}
Let $G$ be a graph without isolated vertices. 
Let $E_1,E_2$ be a partition of $E(G)$ so that $V(G[E_1])=V(G[E_2])=V(G)$. 
(Note that this means that $V(G^*)=V(G[E_1]^*)=V(G[E_2]^*)$). 
Then the following statements hold. 
\begin{enumerate}
\item Let $\pi$ be a linear order of $V(G)$. Then either $G[E_1]$ or
$G[E_2]$ has an induced matching of size at least $lsimw(G)/2$ crossing $\pi$.
\item Let $\pi^*$ be a linear order of $V(G^*)$, 
Then either $G[E_1]^*$ or $G[E_2]^*$ has an induced matching of size at least
$lsimw(G)/4$ neatly crossing $\pi^*$. 
\end{enumerate}
\end{lemma}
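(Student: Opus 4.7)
The plan is to prove the two parts separately, with Part~2 reducing to Part~1 by recycling the $\pi^* \mapsto \pi$ construction used in the proof of Lemma~\ref{lem:crossmatch1}. Both parts ultimately boil down to pigeonhole over the partition $E(G) = E_1 \cup E_2$.

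For Part~1, I would start from the definition of \textsc{lsim}-width: any linear order $\pi$ of $V(G)$ is crossed by an induced matching $M$ of $G$ with $|M| \geq lsimw(G)$. Since $\{E_1, E_2\}$ partitions $E(G)$, it induces a partition $M = (M \cap E_1) \cup (M \cap E_2)$, and by pigeonhole one of the two pieces, say $M \cap E_i$, has size at least $lsimw(G)/2$. This subset is still an induced matching of $G$, hence also of the subgraph $G[E_i]$; and the witnessing prefix $\pi_0$ of $\pi$ continues to witness that $M \cap E_i$ crosses $\pi$, because every edge of $M \cap E_i$ is an edge of $M$ and therefore already has one end in $\pi_0$ and the other in the complementary suffix.

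For Part~2, I would take a linear order $\pi^*$ of $V(G^*)$ and reuse the construction from Lemma~\ref{lem:crossmatch1}: define $\pi$ by placing $u$ before $v$ iff $u[ind(u)]$ precedes $v[ind(v)]$ in $\pi^*$, where $ind(v)=1$ if $v[1]$ comes before $v[2]$ in $\pi^*$ and $2$ otherwise. Apply Part~1 to $\pi$ to obtain an $i \in \{1,2\}$ and an induced matching $M$ of $G[E_i]$ crossing $\pi$ with $|M| \geq lsimw(G)/2$. Then I would repeat the ``halving and lifting'' step from the proof of Lemma~\ref{lem:crossmatch1}: split $V(M) \cap \pi_0$ into $U_1 \cup U_2$ by the value of $ind$, keep the larger piece (of size at least $|M|/2 \geq lsimw(G)/4$), form the corresponding submatching $M_1 \subseteq M$ and its lift $M_1^* = \{\{u[1], w[2]\} : \{u,w\} \in M_1,\ u \in U_1,\ w \in W_1\}$. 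The largest prefix of $\pi^*$ ending with a vertex of $U_1[1]$ is then the witnessing prefix for a neat crossing of $\pi^*$ by $M_1^*$, by exactly the argument given in the proof of Lemma~\ref{lem:crossmatch1}.

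The main thing to verify, rather than any serious obstacle, is that every property of $M_1^*$ established in Lemma~\ref{lem:crossmatch1} transfers from $G^*$ to the smaller graph $G[E_i]^*$. Since every edge of $M_1$ lies in $E_i$, its lift $\{u[1], w[2]\}$ belongs to $E(G[E_i]^*)$, so $M_1^*$ is indeed a matching of $G[E_i]^*$; and being induced in the larger graph $G^*$ automatically implies being induced in the subgraph $G[E_i]^*$, so nothing extra is needed on that front. The factor $1/4$ in the statement is then exactly the product of the two halving steps: the pigeonhole over $\{E_1, E_2\}$ used in Part~1 and the pigeonhole over $\{U_1, U_2\}$ inherited from the proof of Lemma~\ref{lem:crossmatch1}.
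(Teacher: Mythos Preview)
Your Part~1 is identical to the paper's argument.

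For Part~2, your proof is correct but takes a more laborious route than the paper's. The paper simply applies Lemma~\ref{lem:crossmatch1} as a black box to $\pi^*$, obtaining an induced matching $M^*$ of $G^*$ of size at least $lsimw(G)/2$ neatly crossing $\pi^*$; then it observes that $E(G^*)$ is partitioned by $E^*_1=E(G[E_1]^*)$ and $E^*_2=E(G[E_2]^*)$, and by pigeonhole one of $M^*\cap E^*_i$ has size $\geq lsimw(G)/4$. Since any subset of $M^*$ inherits both the neat-crossing witness prefix and the induced-matching property (the latter passing automatically to the subgraph $G[E_i]^*$), the result follows in one line. You instead re-open the proof of Lemma~\ref{lem:crossmatch1}: you pull back $\pi^*$ to $\pi$, apply Part~1 at the level of $G$, and then redo the halving-and-lifting argument inside $G[E_i]$. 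This works, and your justification that $M_1^*$ is induced in $G^*$ (hence in $G[E_i]^*$) is sound because your Part~1 matching is in fact induced in $G$, not merely in $G[E_i]$. The paper's approach buys modularity and brevity; yours buys nothing extra, but it is a valid alternative.
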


\begin{proof}
Let $M$ be a matching of $G$ of size at least $lsimw(G)$ crossing $\pi$. 
Clearly either $M \cap E_1$ or $M \cap E_2$ is of size at least $|M|/2$. 
Assume the former w.l.o.g. As $M \cap E_1 \subseteq M$, clearly,
$M \cap E_1$ crosses $\pi$. This proves the first statement. 

For $i \in \{1,2\}$, Let $E^*_i=E(G[E_i]^*)$. 
It is not hard to see that $E^*_1,E^*_2$ partition 
$E(G^*)$ and $G^*[E^*_i]=G[E_i]^*$ for each $i \in \{1,2\}$. 
Let $M^*$ be an induced matching of size at least $lsimw(G)/2$
neatly crossing $\pi^*$ existing by Lemma \ref{lem:crossmatch1}. 
For each $i \in \{1,2\}$, let $M^*_i=M^* \cap E^*_i$. 
Clearly, both $M^*_1$ and $M^*_2$ cross $\pi^*$
and least one of them is of size at least half the size of $M^*$
that is at least $lsimw(G)/4$.
\end{proof}

\begin{remark}
Note that Lemma \ref{lem:crossmatch2} 
does not make any statements regarding $lsimw(G[E_1])$ 
and $lsimw(G[E_2])$. Indeed, if $G$ is an $n \times n$  grid, $E_1$ is the set of 'horizontal' edges
and $E_2$ is the set of 'vertical' edges then $lsimw((G[E_1])=lsimw(G[E_2])=1$, while
$lsimw(G)=\Omega(n)$. 
Nevertheless, Lemma \ref{lem:crossmatch2} states that for each
linear order $\pi$ of $V(G)$ must be $\Omega(n)$ either for $G[E_1]$
or for $G[E_2]$. 
Continuing on the discussion in Remark \ref{remobdd3},
the above phenomenon is precisely the reason of rigidity of
\textsc{obdd}-based models representing \textsc{cnf}s with junction 
variables. We will see in the proof of Theorems \ref{obddsep1} and \ref{obddsep2}
below how this leads to exponential lower bounds. 
\end{remark}

Next, we state two theorems, whose proofs are provided in Section \ref{sec:engineproofs},
that serve as engines for 
proving lower bounds for Theorems \ref{nonfpt}, \ref{obddsep1},
and \ref{obddsep2}. We will then use Lemmas \ref{lem:crossmatch1} 
and Lemma \ref{lem:crossmatch2}  for `harnessing' these engines 
in order to prove the theorems. For both of the theorems below, $G$
is a graph without isolated vertices. 

\begin{theorem} \label{obddengine1}
Let $B$ be a $\wedge_d$-\textsc{obdd} representing $\psi(G)$. 
Let $\pi^*$ be a linear order of $\var(\psi(G))$ respected by $B$. 
Let $M$ be an induced matching of $G^*$ neatly crossing $\pi^*$. 
Then $|B| \geq 2^{\Omega(|M|)}$. 
\end{theorem}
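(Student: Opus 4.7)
The plan is to instantiate the methodology of Section \ref{sec:obddapp}. Let $\pi^*_0$ be a witnessing prefix for $M$ neatly crossing $\pi^*$; by neatness $V(M)$ is partitioned as $U[1] \cup W[2]$, WLOG with $V(M) \cap \pi^*_0 = U[1]$ and $V(M) \cap \pi^*_1 = W[2]$ (the opposite orientation is symmetric, using $\neg V[1]$ in place of $\neg V[2]$). Write $M = \{\{u_i[1], w_i[2]\} : i \in [m]\}$, and tacitly assume $M$ is induced (otherwise extract an induced sub-matching of $\Omega(|M|)$ edges greedily; in the applications of this theorem $M$ is already induced by Lemma \ref{lem:crossmatch1}). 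For each $S \subseteq [m]$ with $1 \leq |S| \leq m - 2$ define the fooling assignment ${\bf g}_S$ on all of $\pi^*_0$ by setting $u_i[1] = 0$ if $i \in S$, $u_i[1] = 1$ if $i \notin S$, and every remaining variable of $\pi^*_0$ to $1$. The family $\mathcal{F}$ of all such ${\bf g}_S$ has size $2^{\Omega(m)}$, and I attach the unbreakable set $UB({\bf g}_S) := \{w_k[2] : k \in [m] \setminus S\}$, of size $m - |S| \geq 2$, to each ${\bf g}_S$.

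For the unbreakability claim, suppose for contradiction that $\mathcal{S}(\psi(G))|_{{\bf g}_S} = \mathcal{H}_1 \times \mathcal{H}_2$ with $\pi^*_1 = \var_1 \sqcup \var_2$ and the split separates some $w_i[2] \in \var_1, w_j[2] \in \var_2$ with $i, j \in [m] \setminus S$, $i \neq j$. Consider two candidate assignments ${\bf h}_A, {\bf h}_B$ on $\pi^*_1$: ${\bf h}_A$ sets $w_i[2] = 0$ and all other variables of $\pi^*_1$ to $1$, while ${\bf h}_B$ does the same with $w_j[2] = 0$. A quick clause-by-clause check shows both are satisfying extensions of ${\bf g}_S$: induced-ness of $M$ rules out any edge clause $(u_\ell[1] \vee w_k[2])$ with $\ell \in S, k \neq \ell$ in $\psi(G)$, so all edge clauses pose no obstruction; $\neg V[1]$ is satisfied by $u_\ell[1] = 0$ for any $\ell \in S$, and $\neg V[2]$ by $w_i[2] = 0$ or $w_j[2] = 0$. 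But then ${\bf h}_B|_{\var_1} \cup {\bf h}_A|_{\var_2}$ lies in the product $\mathcal{H}_1 \times \mathcal{H}_2$ and equals the all-ones assignment on $\pi^*_1$, violating $\neg V[2]|_{{\bf g}_S}$ since no $v[2] \in V[2] \cap \pi^*_1$ is $0$. Contradiction. Hence $UB({\bf g}_S)$ is unbreakable, and by Lemma \ref{lem:assignpref2} combined with Proposition \ref{nobreak} (plus the observation that $UB({\bf g}_S)$ cannot lie inside the free factor $X_S^{\{0,1\}}$, which itself can be split further), there is a unique $u_S \in L({\bf g}_S)$ with $UB({\bf g}_S) \subseteq \var(B_{u_S})$.

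To finish I would verify $u_S \neq u_{S'}$ for distinct $S, S'$. Pick $i \in S \setminus S'$ (WLOG); then $w_i[2] \in UB({\bf g}_{S'})$. If $u_S = u_{S'} = u$, Lemma \ref{lem:assignpref2} forces the projections of $\mathcal{S}(B)|_{{\bf g}_S}$ and $\mathcal{S}(B)|_{{\bf g}_{S'}}$ onto $w_i[2]$ to coincide, both being the projection of $\mathcal{S}(B_u)$ onto $w_i[2]$. But the first projection is $\{1\}$ -- the clause $(u_i[1] \vee w_i[2])$ with $u_i[1] = 0$ forces it -- while the second contains $0$, since the assignment ${\bf h}_A$ above (applied to ${\bf g}_{S'}$) satisfies $\psi(G)$ with $w_i[2] = 0$. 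This contradiction yields $|B| \geq |\mathcal{F}| = 2^{\Omega(|M|)}$. The main technical obstacle is the unbreakability step, which hinges on the global nature of the $\neg V[2]$ clause together with the induced-ness of $M$: without induced-ness, propagation from $u_\ell[1] = 0$ for $\ell \in S$ could force other $w_k[2]$ to $1$ and obscure the $S$-dependence of the projection onto $W[2]$.
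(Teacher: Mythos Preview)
Your proposal is correct and follows essentially the same approach as the paper: the fooling assignments ${\bf g}_S$, the unbreakable sets $UB({\bf g}_S)=\{w_k[2]:k\notin S\}$, the rectangle-swap argument against $\neg V[2]$, and the distinctness argument via the clause $(u_i[1]\vee w_i[2])$ all mirror the paper's Lemmas \ref{lem:neatcross1}--\ref{lem:neatcross4} (with your ``zero set'' $S$ complementary to the paper's $I({\bf g})$). Your side remark that $M$ needs to be induced is well taken---the paper's proof silently relies on this too, and in every application the matching supplied by Lemmas \ref{lem:crossmatch1}--\ref{lem:crossmatch2} is indeed induced.
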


\begin{theorem} \label{obddengine2}
Let $B$ be an \textsc{obdd} representing $\varphi(G)$. 
Let $\pi$ be a linear order of $V(G)$ respected by $B$. 
Let $M$ be an induced matching of $G$ crossing $\pi$.
Then $|B| \geq 2^{\Omega(|M|)}$. 
\end{theorem}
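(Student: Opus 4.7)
The plan is to instantiate the \textsc{obdd} lower-bound methodology of Section \ref{sec:obddapp}, which simplifies substantially in the pure \textsc{obdd} case: with no $\wedge_d$ nodes present, $B[{\bf g}]$ reduces to a single path and $L({\bf g})$ to a single node $u({\bf g})$, so two distinct fooling assignments reaching the same $u({\bf g})$ must induce identical restrictions of the represented function. First I would fix a witnessing prefix $\pi_0$ of $\pi$ for $M$, and set $U=V(M)\cap\pi_0$ and $W=V(M)\cap\pi_1$, so that $M$ is a bipartite matching between $U$ and $W$ with $|U|=|W|=|M|$.

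Next I would build a family of $2^{|M|}$ fooling assignments, one per subset $S\subseteq U$. The assignment ${\bf g}_S$ on the variables of $\pi_0$ is defined by ${\bf g}_S(u)=0$ for $u\in S$, ${\bf g}_S(u)=1$ for $u\in U\setminus S$, and ${\bf g}_S(x)=1$ for every $x\in\pi_0\setminus U$. Because every vertex of $\pi_0\setminus U$ is set to $1$, every clause of $\varphi(G)$ touching such a vertex is already satisfied in the prefix; the only live obligations after projecting onto $\pi_1$ are the matching clauses $(u\vee w)$ for $u\in S$ and $w$ its partner in $W$, provided we can ignore non-matching edges inside $G[V(M)]$.

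To separate $u({\bf g}_{S_1})$ from $u({\bf g}_{S_2})$ for $S_1\neq S_2$, I would pick $u^*\in S_1\triangle S_2$, WLOG $u^*\in S_1\setminus S_2$, and let $w^*\in W$ be its matched partner. Consider the extension ${\bf h}^*$ on $\pi_1$ with ${\bf h}^*(w^*)=0$ and ${\bf h}^*(v)=1$ for every other $v\in\pi_1$. Under ${\bf g}_{S_2}\cup{\bf h}^*$ the variable $u^*$ is set to $1$, so the clause $(u^*\vee w^*)$ is satisfied, and every other clause is covered by some variable set to $1$; under ${\bf g}_{S_1}\cup{\bf h}^*$ the clause $(u^*\vee w^*)$ is falsified since $u^*=w^*=0$. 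Hence $\mathcal{S}(B)|_{{\bf g}_{S_1}}\neq\mathcal{S}(B)|_{{\bf g}_{S_2}}$, forcing $u({\bf g}_{S_1})\neq u({\bf g}_{S_2})$ and yielding $|B|\geq 2^{|M|}$.

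The main obstacle is exactly the parenthetical ``provided'' above: if $G[V(M)]$ carries edges outside $M$, then for many $S$ the extension ${\bf h}^*$ already fails on some internal edge of $G[S]$, so \emph{both} restrictions may be rejecting and the distinction collapses. I would handle this by first replacing $M$ with an induced sub-matching of comparable size, after which $G[V(M)]$ consists of exactly the edges of $M$ and the fooling argument goes through unobstructed. This passage is essentially free in the downstream applications of the theorem, since Lemmas \ref{lem:crossmatch1} and \ref{lem:crossmatch2} already produce induced matchings, and the net conclusion is the desired $|B|\geq 2^{\Omega(|M|)}$ bound.
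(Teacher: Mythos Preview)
Your argument is essentially identical to the paper's: fooling assignments on the witnessing prefix $\pi_0$ that set $\pi_0\setminus U$ to $1$, separated via the suffix extension that zeros only the $W$-partner of a symmetric-difference vertex $u^*$. Your observation that the argument needs $M$ to be induced is correct and matches the paper, which silently assumes it in its proof (indeed the statement is false for arbitrary matchings, cf.\ $\varphi(K_{n,n})$); both you and the paper rely on the fact that Lemmas~\ref{lem:crossmatch1} and~\ref{lem:crossmatch2} already supply induced matchings in every application.
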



\begin{proof}
({\bf Theorem \ref{nonfpt}})
For a sufficiently large $k$, let ${\bf G}_k$ 
be a class of graphs $G_k$ of treewidth at most $k$
and \textsc{lsim}-width $\Omega(k \log n)$.
Existence of such a class of graphs follows from  
Lemma \ref{largemimw}. In order to prove the 
theorem it is sufficient to demonstrate that
$\wedge_d$-\textsc{obdd} representing $\psi(G_k)$ is of size
$n^{\Omega(k)}$ and that $itw(\psi(G))=O(k)$. 

Let $B$ be a $\wedge_d$-\textsc{obdd} 
representing $\psi(G_k)$. Let $\pi^*$ be a linear 
order of $V(G^*_k)$ obeyed by $B$. 
By Lemma \ref{lem:crossmatch1}, there is an  induced matching
$M$ of $G^*_k$ of size $\Omega(lsimw(G_k))$ neatly crossing $\pi^*$.
Hence,  by Theorem \ref{obddengine1}, 
the size of $B$ is $2^{\Omega(lsimw(G))}$ which is $n^{\Omega(k)}$ 
according to the previous paragraph. 

For the proof that $itw(\psi(G))=O(k)$,
let $(T,{\bf B})$ be a tree decomposition of $G$ of width 
at most $k$. For each $t \in V(T)$, replace each $v \in {\bf B}(t)$ with
$v[1]$ and $v[2]$. It is not hard to see that we obtain a 
tree decomposition for the primal graph $\varphi(G^*)$. 
As the largest bag size of $(T,{\bf B})$ is at most $k+1$,
the largest bag size of the new tree decomposition is at most $2k+2$
and hence $ptw(\varphi(G^*))$  is at most $2k+1$. 
By Proposition \ref{priminc} $itw(\varphi(G^*))$ is at most $2k+1$. 
Take a tree decomposition witnessing $itw(\varphi(G^*))$ and add 
to each bag two vertices corresponding to the long clauses.
As a result we obtain a tree decomposition of the incidence graph $\psi(G)$ of width
at most $2k+3$.
\end{proof}

We are now turning to the proof of  Theorems \ref{obddsep1}
and \ref{obddsep2}. For both theorems, the target class of 
graphs are $n \times n$ grids that we denote by $G_n$. 
A formal definition is provided below. 

\begin{definition}
Let $n>0$ be an integer. 
A $n \times n$ grid, denoted by $G_n$, is a graph 
with $V(G_n)=\{(i,j)| 1 \leq i \leq n, 1 \leq j \leq n\}$
and $E(G_n)=E_{hor} \cup E_{vert}$ where
$E_{hor}=\{\{(i,j),(i,j+1)\}| 1\leq i \leq n, 1\leq j \leq n-1\}$ and
$E_{vert}=\{\{(i,j),(i+1,j)\}| 1 \leq i \leq n-1, 1 \leq j \leq n\}$. 
We also refer to $E_{hor}$ and $E_{vert}$ as, respectively, 
\emph{horizontal} and \emph{vertical} edges of $G_n$. 
\end{definition}

As the max-degree of $G_n$ is $4$, $lsimw(G_n)$ and $lmmw(G)$ are linearly related. 
Next, it is known \cite{obddtcompsys} that $pw(G_n)$ and $lmmw(G_n)$ are, in turn, linearly related.
Finally, it is well know that $pw(G_n) \geq tw(G_n) \geq \Omega(n)$. Therefore, we conclude 
the following. 

\begin{proposition} \label{lsimwgrid}
$lsimw(G_n)=\Omega(n)$. 
\end{proposition}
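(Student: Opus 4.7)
The plan is simply to chain together the three linear relationships laid out in the paragraph preceding the proposition, which together yield $lsimw(G_n) = \Omega(n)$.

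First, I would recall the standard fact that $tw(G_n) = \Omega(n)$; this is a classical result, proved for instance by a ball-and-bramble argument on the grid, and I would cite it rather than reprove it. Since $pw(G) \geq tw(G)$ holds for every graph $G$ by definition of pathwidth (a path decomposition is in particular a tree decomposition), we immediately get $pw(G_n) = \Omega(n)$.

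Next, I would invoke the cited result of \cite{obddtcompsys} stating that pathwidth and linear maximum matching width are linearly related, to conclude $lmmw(G_n) = \Omega(pw(G_n)) = \Omega(n)$. Finally, I would apply the greedy induced-matching argument already used in the proof of Lemma \ref{largemimw}: from any matching $M$ in a graph of maximum degree $d$, one may greedily extract an induced submatching of size at least $|M|/(2d+1)$ by repeatedly picking an edge $e \in M$ into the induced submatching and discarding the at most $2d$ edges of $M$ that share a vertex with $N(e)$. Since $G_n$ has maximum degree $4$, fix a linear order $\pi$ witnessing $lmmw(G_n)$, let $M$ be a maximum matching crossing $\pi$ with witnessing prefix $\pi_0$, and extract an induced submatching $M' \subseteq M$ of size $|M|/9$. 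Because $M' \subseteq M$ and $M$ is a matching between $\pi_0 \cap V(M)$ and $(\pi \setminus \pi_0) \cap V(M)$, the submatching $M'$ is also between $\pi_0 \cap V(M')$ and $(\pi \setminus \pi_0) \cap V(M')$, hence $M'$ crosses $\pi$. This gives $lsimw(G_n) \geq lmmw(G_n)/9 = \Omega(n)$.

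There is essentially no obstacle here: every piece is either a standard grid fact, a directly cited result, or a repetition of the greedy argument already present in the proof of Lemma \ref{largemimw}. The only minor care needed is ensuring that the induced-matching extraction preserves the crossing property, which follows trivially because any subset of a matching crossing $\pi$ with some witnessing prefix still crosses $\pi$ with the same prefix.
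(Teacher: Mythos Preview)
Your approach is essentially identical to the paper's: the paper proves the proposition via exactly the chain $tw(G_n)=\Omega(n)\Rightarrow pw(G_n)=\Omega(n)\Rightarrow lmmw(G_n)=\Omega(n)\Rightarrow lsimw(G_n)=\Omega(n)$ spelled out in the paragraph preceding the statement, invoking the bounded-degree greedy extraction just as in Lemma~\ref{largemimw}.

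One small slip in your write-up: you fix the order $\pi$ \emph{witnessing $lmmw(G_n)$} and then conclude $lsimw(G_n)\ge lmmw(G_n)/9$. But $lsimw(G_n)$ is a minimum over all orders, so showing a large induced matching crosses that one particular $\pi$ is not enough. You should instead take an arbitrary order $\pi$ (or the one witnessing $lsimw(G_n)$), note that the largest matching crossing it has size at least $lmmw(G_n)$ by definition, and then extract the induced submatching. The rest of your argument then goes through verbatim.
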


We also need an auxuliary lemma whose proof is provided in the Appendix. 

\begin{lemma} \label{lem:restr1}
Let $B$ be a $\wedge_d$-\textsc{obdd} respecting an order $\pi$. 
Let ${\bf g}=\{(x,i)\}$ where $x \in \var(B)$
and $i \in \{0,1\}$. Suppose that 
all the variables of $f(B)|_{\bf g}$ are essential.
\footnote{A variable $x$ of a Boolean function $h$ is essential is 
there are two assignments ${\bf a}_1$ and ${\bf a}_2$ that differ
only in their assignment of $x$ such that $h({\bf a}_1) \neq h({\bf a}_2)$. }
Let $B'$ be obtained from $B$ by the following operations.
\begin{enumerate}
\item Removal of all the edges $(u,v)$ such that $u$ is a decision node
associated with $x$ and $(u,v)$ is labelled with $1-i$. 
\item Contraction of all the edges $(u,v)$ such $u$ is a decision node
associated with $x$ and $(u,v)$ is labelled with $i$ (meaning that $(u,v)$
is removed and $u$ is identified with $v$). 
\item Removal of all the nodes not reachable from the source of the
resulting graph. 
\end{enumerate}
Then $B'$ is a $\wedge_d$-\textsc{obdd} respecting $\pi$
and $f(B')=f(B)|_{\bf g}$.

\end{lemma}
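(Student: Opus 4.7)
The plan is to verify two things separately: (i) that $B'$ is a well-formed $\wedge_d$-\textsc{obdd} obeying $\pi$, and (ii) that $f(B') = f(B)|_{{\bf g}}$. The essentialness hypothesis is invoked only to pin down the correct variable set of $B'$, not to establish the bulk of the structural claims.

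For (i), I would check the syntactic requirements in turn. The source of $B$ either survives (if it is not labelled by $x$) or is identified with its $i$-child under contraction, so $B'$ has a unique source. Every non-sink of $B'$ still has two outgoing edges: the operations eliminate $x$-decision nodes outright rather than leave them incomplete, and for any surviving decision node the target of each outgoing edge is either retained or contracted into its $i$-successor. Decomposability is inherited, since each $\wedge_d$-node $u$ of $B'$ was already a $\wedge_d$-node of $B$ whose children had disjoint variable sets, and these sets can only shrink under the operations (losing $x$ and possibly variables that become unreachable). Read-onceness passes because every directed path of $B'$ lifts to a directed path of $B$ by re-inserting the contracted $x$-nodes, and by assumption no two decision nodes on such a lifted path share a variable. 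The order $\pi$ remains valid since $\var(B') \subseteq \var(B)$ and the relative positions of surviving decision nodes along any path are preserved. Finally, both sinks are reachable in $B'$ because, under the essentialness hypothesis, $f(B)|_{{\bf g}}$ cannot be constant, so the restricted OBDD must route satisfying and falsifying assignments to distinct sinks.

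For (ii), I would proceed by structural induction on $B$ using the recursive definition of $\mathcal{A}(\cdot)$. The key case analysis on the source $r$ of $B$ is: if $r$ is a $\wedge_d$-node with children $r_0,r_1$, the operations act independently on $B_{r_0}$ and $B_{r_1}$, and restriction distributes over the Cartesian product; if $r$ is a decision node labelled $y \neq x$, the two subgraphs are processed separately and recombined by the usual decision-node rule; and if $r$ is labelled by $x$, the construction directly identifies $r$ with its $i$-child $r_i$, so $B'$ coincides with the inductively constructed restriction of $B_{r_i}$. Combining these gives $\mathcal{S}(B') = \mathcal{S}(B)|_{{\bf g}}$. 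The essentialness hypothesis is then used to certify that $\var(B') = \var(B) \setminus \{x\}$: if some $y \in \var(B) \setminus \{x\}$ failed to label any decision node of $B'$, then $f(B')$ would be independent of $y$, contradicting the essentiality of $y$ for $f(B)|_{{\bf g}} = f(B')$.

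The main obstacle is bookkeeping rather than conceptual difficulty. Care is required where unreachable-node pruning interacts with decomposability and with the two-sink requirement, and where the induction has to treat uniformly the case that the current source is itself an $x$-decision node and gets absorbed into its $i$-child. Once the invariants ``no non-sink becomes incomplete'' and ``$\var(B')$ equals $\var(B) \setminus \{x\}$'' are maintained throughout the recursion, the conclusion follows by direct unrolling of the $\wedge_d$-\textsc{obdd} semantics.
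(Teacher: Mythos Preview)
Your proposal is correct and follows essentially the same approach as the paper. Both argue the semantic equality $f(B')=f(B)|_{\bf g}$ by structural induction on $B$ with the same three-way case split on the source (decision node labelled $x$, decision node labelled $y\neq x$, conjunction node); the paper packages the decision and conjunction cases as separate auxiliary lemmas (Lemmas~\ref{lem:decrestr} and~\ref{lem:conrestr}) and phrases the inductive claim as $\mathcal{S}(B)|_{\bf g}=\mathcal{S}(B')\times X^{\{0,1\}}$ before invoking essentiality to force $X=\emptyset$, which is exactly your final step recast. Your part~(i) on syntactic well-formedness is more explicit than the paper's treatment, which leaves that verification implicit.
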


\begin{proof}
({\bf Theorem \ref{obddsep1}})
We demonstrate that the statement of the theorem holds 
for the family of \textsc{cnf}s $\psi(G_n,E_{hor},E_{vert})$ with $n>0$. 
For the lower bound, let $B$ be a $\wedge_d$-\textsc{obdd}
representing $\psi(G_n,E_{hor},E_{vert})$. Let $\pi_0$ be 
a linear order respected by $B$. Let $\pi^*$ be the linear order on $V(G^*)$
\emph{induced} by $\pi_0$ (that is, two elements are ordered by
$\pi^*$ in exactly the same way as they are ordered by $\pi_0$). 

By Lemma \ref{lem:crossmatch2}, we can assume w.l.o.g, 
that $\pi^*$ is neatly crossed by an induced matching $M$ of $G_n[E_{hor}]^*$ 
of size $\Omega(lsimw(G_n))$. By Proposition \ref{lsimwgrid},
$|M|=\Omega(n)$. Then, by Theorem \ref{obddengine1},
we conclude that

\begin{claim} \label{clm:restr1}
A $\wedge_d$-\textsc{obdd} representing $\psi(G_n[E_{hor}])$ and respecting $\pi^*$
is of size $2^{\Omega(n)}$.
\end{claim}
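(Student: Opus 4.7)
My plan is to derive the claim as a direct instantiation of Theorem \ref{obddengine1}, applied with $G := G_n[E_{hor}]$. The paragraph preceding the claim already assembles every hypothesis of that theorem in this specific setting, so the proof amounts to matching the pieces up and citing.

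Concretely, let $B'$ be any $\wedge_d$-\textsc{obdd} representing $\psi(G_n[E_{hor}])$ and obeying $\pi^*$. I would first note that $\pi^*$ qualifies as an order for $B'$: since $G_n[E_{hor}]$ is a disjoint union of horizontal paths covering every vertex of $G_n$, we have $V((G_n[E_{hor}])^*)=V(G_n^*)$, hence $\var(\psi(G_n[E_{hor}]))=V(G_n^*)$, which is precisely the set over which $\pi^*$ was defined. The matching $M$ of $(G_n[E_{hor}])^*$ produced in the previous paragraph via Lemma \ref{lem:crossmatch2} neatly crosses $\pi^*$ and has size $\Omega(lsimw(G_n))=\Omega(n)$, the second estimate being Proposition \ref{lsimwgrid}.

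Invoking Theorem \ref{obddengine1} on $B'$ with this matching then yields $|B'|\geq 2^{\Omega(|M|)}=2^{\Omega(n)}$, which is the claim. I anticipate no real obstacle here: the claim is essentially a repackaging of the engine theorem restricted to the horizontal-edge portion of the grid, and all of the combinatorial work of producing a matching that neatly crosses an arbitrary linear order of $V(G_n^*)$ is already encapsulated in Lemma \ref{lem:crossmatch2} (for the neatness, via the detour through the chosen side of the bipartition $\{E_{hor},E_{vert}\}$) and Proposition \ref{lsimwgrid} (for the size lower bound).
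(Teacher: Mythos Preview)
Your proposal is correct and matches the paper's own argument essentially verbatim: the claim is stated immediately after the sentence ``Then, by Theorem \ref{obddengine1}, we conclude that,'' so the paper's proof is precisely the instantiation of Theorem \ref{obddengine1} with $G=G_n[E_{hor}]$ using the induced matching $M$ of size $\Omega(n)$ supplied by Lemma \ref{lem:crossmatch2} and Proposition \ref{lsimwgrid}. Your added remark that $V((G_n[E_{hor}])^*)=V(G_n^*)$, justifying that $\pi^*$ is a legitimate order for the $\wedge_d$-\textsc{obdd} in question, is a helpful detail the paper leaves implicit.
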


Let ${\bf g}=\{(\textsc{jn},1)\}$
It is not hard to see that 
$\psi(G_n,E_{hor},E_{vert})|_{\bf g}=\psi(G_n[E_{hor}])$ and the latter does not
have non-essential variables.
By Lemma \ref{lem:restr1}, a $\wedge_d$-\textsc{obdd} representing
$\psi(G_n[E_{hor}])$ and respecting $\pi^*$ 
can be obtained from $B$ by a transformation
that does not increase its size 
By Claim \ref{clm:restr1}, 
the size of $B$ is $2^{\Omega(n)}$ as required.

For the upper bound, we consider an \textsc{fbdd}
whose source $rt$ is associated with $\textsc{jn}$.
Let $u_0$ and $u_1$ be the children of $rt$ with 
respective edges $(rt,u_0)$ and $(rt,u_1)$ labelled with $0$ and $1$. 
Then $u_0$ is the source of an \textsc{fbdd} representing 
$\psi(G_n[E_{vert}])$ and $u_1$ is the source of an \textsc{fbdd} 
representing $\psi(G_n[E_{hor}])$. It is not hard to see that the
resulting \textsc{fbdd} indeed represents $\psi(G_n,E_{hor},E_{vert})$. 
It remains to demonstrate that both $\psi(G[E_{hor}])$
$\psi(G_n[E_{vert}])$ can be represented by poly-size \textsc{fbdd}s. 
 We demonstrate that the incidence pathwidth of both 
$\psi(G_n[E_{hor}])$ and $\psi(G_n[E_{vert}])$ is at most $7$. 
It then follows \cite{RazgonKR} that both these \textsc{cnf}s
can be represented by linear-size \textsc{fbdd}s. 

Let $\pi$ be the linear order over $V(G)=V(G[E_{hor}])$ where the vertices
occur in the 'dictionary' order: $(i_1,j_1)$ precedes $(i_2,j_2)$
if $i_1<i_2$ or $i_1=i_2$ and $j_1<j_2$
and let
$\pi_i$ be the vertex in the position $i$ of $\pi$. 
That is, $\pi=(\pi_1, \dots, \pi_{n^2})$. 
We consider a path decomposition $(P,{\bf B})$ where 
$P=\pi_1, \dots, \pi_{n^2}$ and for each $u \in V(P)$, 
${\bf B}(u)$ is defined as follows. 

If $u=(i,1)$ for $1 \leq i \leq n$ then ${\bf B}(u)=\{u[1],u[2],\neg V(G)[1], \neg V(G)[2]\}$
(recall that the last two elements correspond to the long clauses). 
If $u=(i,j)$ for $1 \leq i \leq n$ and $2 \leq j \leq n$, let $v=(i,j-1)$. 
Then ${\bf B}(u)=\{u[1],u[2],v[1],v[2],(u[1] \vee v[2]),(u[2] \vee v[1]),\neg V(G)[1], \neg V(G)[2]\}$. 
It follows from a direct inspection that $(P,{\bf B})$ is a tree decomposition 
of the incidence graph of $\psi(G_n[E_{hor}])$ of width $7$. 
For $\psi(G_n[E_{vert}])$, the construction is analogous with rows and columns 
changing their roles. 
\end{proof}

\begin{proof}
({\bf Theorem \ref{obddsep2}})
We consider the class of \textsc{cnf}s
$\varphi(G_n,E_{hor},E_{vert})$. 
Let $B$ be an \textsc{obdd} representing 
$\varphi(G_n,E_{hor},E_{vert})$. 
Let $\pi_0$ be an order respected by $B$
and let $\pi$ be a linear order of $V(G_n)$ 
induced by $\pi_0$. 
By Lemma \ref{lem:crossmatch2}, 
we can assume w.l.o.g. the existence of an induced matching 
$M$ of $\varphi(G_n[E_{hor}])$ of size at least $\Omega(n)$ crossing $\pi$. 
By Theorem \ref{obddengine2}, an \textsc{obdd} respecting $\pi$
and representing $\varphi(G_n[E_{hor}])$ is of size $2^{\Omega(n)}$. 
As all the variables of $\varphi(G_n[E_{hor}])$ are essential, 
it follows from Lemma \ref{lem:restr1} that an \textsc{obdd} 
representing $\varphi(G_n[E_{hor}])$ can be obtained from $B$
by a transformation that does not increase its size
(note that the transformation does not introduce conjunction nodes
hence, being applied to an \textsc{obdd} produces another \textsc{obdd}). 
We thus conclude that the size of $B$ is at least $2^{\Omega(n)}$.


For the upper bound, we demonstrate existence of a poly-size
$\wedge_d$-\textsc{obdd} representing $\varphi(G_nE_{hor},E_{vert})$
and respecting the order $\pi_d$ where the first element is $\textsc{jn}$
followed by $V(G_n)$ in the dictionary order, that is $(i_1,j_1)$ precedes
$(i_2,j_2)$ if $i_1<i_2$ or $i_1=i_2$ and $j_1<j_2$. 

Let $Hor_i$ be the subgraph of $G_n$ induced by row $i$, that is the
path $(i,1), \dots (i,n)$ and $Vert_i$ be the subgraph of $G_n$
induced by the column $i$, that is the path $(1,i), \dots, (n,i)$. 
By direct layerwise inspection, we conclude that

\begin{claim} \label{obddhorvert}
There are linear sized \textsc{obdd}s respecting $\pi_d$ and representing
$\varphi(Hor_i)$ and $\varphi(Vert_i)$ for each $1 \leq i \leq n$. 
\end{claim}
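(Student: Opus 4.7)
The plan is to observe that each of $\varphi(Hor_i)$ and $\varphi(Vert_i)$ is a ``path'' $2$-\textsc{cnf}, and that the dictionary order $\pi_d$, restricted to the variables of either \textsc{cnf}, coincides with the natural order along the path; a textbook layered \textsc{obdd} construction then yields size $O(n)$.

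First I would record the structural form. Setting $v^h_j=(i,j)$ for $1\le j\le n$, we have $\varphi(Hor_i)=\bigwedge_{j=1}^{n-1}(v^h_j \vee v^h_{j+1})$; setting $v^v_j=(j,i)$ for $1\le j\le n$, the \textsc{cnf} $\varphi(Vert_i)$ has the analogous form. Because $(i,j)<_{\pi_d}(i,j+1)$ and $(j,i)<_{\pi_d}(j+1,i)$ by the dictionary rule, the restriction of $\pi_d$ to $\var(\varphi(Hor_i))$ is exactly $v^h_1,\ldots,v^h_n$ and the restriction to $\var(\varphi(Vert_i))$ is exactly $v^v_1,\ldots,v^v_n$. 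The fact that $\textsc{jn}$ and all other variables of $V(G_n)$ are interspersed between these is harmless, because an \textsc{obdd} is permitted to obey any linear order on a \emph{superset} of its variables.

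Next I would build the \textsc{obdd} as a two-state layered automaton. Write $v_1,\ldots,v_n$ for the path variables in their $\pi_d$-order. The source is a single node labelled $v_1$, whose $0$-edge and $1$-edge enter the second layer. For each $j\in\{2,\ldots,n\}$ introduce two nodes $s^j_0,s^j_1$ labelled with $v_j$, where informally $s^j_b$ records ``$v_{j-1}$ was assigned $b$ and every earlier clause is already satisfied''. From $s^j_0$ the $0$-edge must go to the ${\bf 0}$-sink, since the clause $(v_{j-1}\vee v_j)$ would otherwise be falsified, and the $1$-edge goes to $s^{j+1}_1$; from $s^j_1$, the $0$- and $1$-edges go to $s^{j+1}_0$ and $s^{j+1}_1$ respectively; at layer $n$ the surviving edges all lead to the ${\bf 1}$-sink. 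This produces $O(n)$ nodes, obeys $\pi_d$, and a routine induction on $j$ shows that the set of accepted assignments is exactly the set of satisfying assignments of the path $2$-\textsc{cnf}.

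There is essentially no substantive obstacle; the only thing requiring care is the verification that $\pi_d$ orders the variables of each of $\varphi(Hor_i)$ and $\varphi(Vert_i)$ consistently with its path structure, after which the automaton construction is entirely standard.
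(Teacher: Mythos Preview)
Your proposal is correct and takes essentially the same approach as the paper: both arguments build a layered \textsc{obdd} with $O(1)$ nodes per layer along the natural path order, which coincides with the restriction of $\pi_d$. The paper phrases it as a subfunction count (at most three distinct residual functions after fixing a prefix: constant zero and two functions determined by the last assigned bit), while you give the equivalent explicit two-state automaton; these are two presentations of the same standard construction.
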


\begin{proof}
In order to represent $\var(Hor_i)$, let us query the variables
by the ascending order of their second coordinate, that is
$(i,1), \dots, (i,n)$.  
For each $1 \leq j \leq n-1$, let ${\bf H}_j$ be the set of all
functions $\varphi(Hor_i)|_{\bf g}$ where ${\bf g}$ is an assignment over
$(i,1), \dots, (i,j)$. It is not hard to see that that there are at most $3$ such
functions: constant zero, and two functions on $(i,j+1), \dots, (i,n)$ completely
determined by the assignment of $(i,j)$. 
It is well known \cite{WegBook} that the resulting \textsc{obdd} can be seen
as a \textsc{dag} with layers $1, \dots, n+1$, the last layer is reserved for the sink
nodes, and each layer $1 \leq j \leq n$ containing nodes labelled with $(i,j)$ and having
$O(1)$ size. 

The argument for $\varphi(Vert_i)$ is similar with the first coordinate of its variables
being used instead of the second one.
\end{proof}

Let $B^h_i$ and $B^v_i$ be \textsc{obdd}s respecting $\pi_d$ and representing $Hor_i$
and $Vert_i$ as per Claim \ref{obddhorvert}. 
By using $O(n)$ conjunction nodes, it is easy to create $\wedge_d$-\textsc{obdd}s
$B_1$ and $B_0$ that are, respectively conjunctions of all $B^v_i$ and all
$B^h_i$ and hence, respectively represent $\wedge_{1 \leq i \leq n} \varphi(Hor_i)$
and $\wedge_{1 \leq i \leq n} \varphi(Vert_i)$. It is not hard to see that
the former conjunction is $\varphi(G[E_{hor}])$ and the latter one is $\varphi(G[E_{vert}])$,
hence they are, respectively represented by $B_1$ and $B_0$. 
The resulting $\wedge_d$-\textsc{obdd}  is then constructed as follows.
The source $rt$ is a decision node labelled with $\textsc{jn}$. 
Let $u_1$ and $u_0$ be the children of $rt$ labelled by $1$ and $0$ respectively. 
Then $u_1$ is the source of $B_1$ and $u_0$ is the source of $B_0$. 
\end{proof}

\begin{remark}
Note that the variables of a \emph{single} column of $G_n$
explored along the column satisfy the dictionary order. 
However, if we try to explore first one column and then another one
then the dictionary order will be violated. Thus, 
the upper bound in the proof of Theorem \ref{obddsep2}
demonstrates how the conjunction nodes overcome the rigidity of
\textsc{obdd} by splitting the set of all variables in chunks that 
do respect a fixed order. 
\end{remark}

\subsection{Proofs of Theorem \ref{obddengine1} and \ref{obddengine2}} \label{sec:engineproofs}
In this subsection we prove Theorems \ref{obddengine1} and \ref{obddengine2} using the 
approach as presented in Section \ref{sec:obddapp}.

We first prove Theorems \ref{obddengine1}.
For that purpose, we define a set $\mathcal{F}$ of assignments
of $\psi(G)$ along with their $UB$ sets. Then we demonstrate
that $\mathcal{F}$ is a fooling set and its size is exponential in $|M|$.
The required lower bound will then immediately follow from 
Lemma \ref{lem:wobddmethod}. 
W.l.o.g. we assume existence of sets $U,W \subseteq V(G)$. 
such that $M$ is a matching between $U[1]$ and $W[2]$
and that each element of $U[1]$ is ordered by $\pi^*$ 
before each element of $W[2]$. 
We let $\pi_0$ be the prefix of $\pi^*$ whose last element is
the last element of $U[1]$ in $\pi^*$. 

\begin{definition}
We define $\mathcal{F}$ as the set of all  assignments over $\pi_0$
satisfying the following two conditions. 
\begin{enumerate}
\item At least one variable of $U[1]$ is assigned with $0$ and at least one variable
of $U[1]$ is assigned with $1$. 
\item All the variables of $\pi_0 \setminus U[1]$ are assigned with $1$. 
\end{enumerate}
\end{definition}

\begin{example} \label{ex:fool1}
Let $G$ be a matching consisting of three edges $\{u_1,w_1\},\{u_2,w_2\},\{u_3,w_3\}$
Let $\pi^*=(u_1[1],u_2[1],u_1[2],u_2[2],w_1[1],w_2[1],u_3[1],w_1[2],w_2[2],w_3[2], u_3[2],w_3[1])$
Let $M=\{\{u_1[1],w_1[2]\},\{u_2[1],w_2[2]\},\{u_3[1],w_3[2]\}$. 
In this specific case, $U=\{u_1[1],u_2[1],u_3[1]\}$ and 
$W=\{w_1[2],w_2[2],w_3[2]\}$. 
Further on $\pi_0=(u_1[1],u_2[1],u_1[2],u_2[2],w_1[1],w_2[1],u_3[1])$. 
Now, $\mathcal{F}$ consists of all the assignments that map all of  $u_1[2],u_2[2],w_1[1],w_1[2]$
to $1$ and assign $u_1[1],u_2[1],u_3[1]$ so that 
exactly two of them are assigned with $1$. 
\end{example}

Before we proceed, we extend the notation by letting
$U=\{u_1, \dots, u_q\}$, $W=\{w_1, \dots, w_q\}$ and 
$M=\{\{u_1[1],w_1[2]\}, \dots, \{u_q[1],w_q[2]\}\}$.  
Further on, for $I \subseteq  \{1, \dots, q\}$,
we let $U_I=\{u_i|i \in I\}$, $U[1]_I=\{u_i[1]|i \in I\}$, and $U[2]_I=\{u_i[2]|i \in I\}$. 
The sets $W_I$, $W[1]_I$, and $W[2]_I$ are defined accordingly.

\begin{definition}
Let ${\bf g} \in \mathcal{F}$. 
We denote by $I({\bf g})$ the set of all $1 \leq i \leq q$
such that $u_i[1]$ is assigned with $1$ by ${\bf g}$. 
We refer to $W[2]_{I({\bf g})}$ as $UB({\bf g})$.
\end{definition}

We are now going to prove that the assignments 
${\bf g} \in \mathcal{F}$ equipped with the sets $UB({\bf g})$
form a fooling set. For this purpose, we will use assignments 
of a special form as defined below.

\begin{definition}
Let ${\bf g} \in \mathcal{F}$. 
Let $J \subseteq I({\bf g})$.  
We denote by ${\bf h}_J[{\bf g}]$ the assignment over $V(G^*)$
such that ${\bf g} \subseteq {\bf h}_J[{\bf g}]$, 
for each $u \in V(G^*) \setminus \var({\bf g})$, ${\bf h}_J[{\bf g}]=0$ if $u \in W[2]_J$ and $1$
otherwise.
\end{definition}

\begin{example} \label{ex:fool2}
Continuing on Example \ref{ex:fool1}, 
let ${\bf g}=\{(u_1[1],0),(u_2[1],1),(u_3[1],1),\\
(u_1[2],1),(u_2[2],1),(w_1[1],1),(w_2[1],1)\}$. 
Then $I({\bf g})=\{2,3\}$ and $UB({\bf g})$ 
is $\{w_2[2],w_3[2]\}$. For $J=I({\bf g})$, ${\bf h}_J({\bf g})$ 
is the extension of ${\bf g}$ assigning $w_2[2]$ and $w_3[2]$ with $0$
and assigning with $1$ the remaining variables that are not assigned by ${\bf g}$.  
\end{example}

We are now proving an auxiliary lemma about the assignments ${\bf h}_{J}[{\bf g}]$. 

\begin{lemma} \label{lem:neatcross1}
Let ${\bf g} \in \mathcal{F}$. 
Let $J \subseteq I({\bf g})$.
${\bf h}_J[{\bf g}]$
satisfies $\psi(G)$ if and only if $J \neq \emptyset$. 
In particular, ${\bf h}_I[{\bf g}]$ satisfies $\psi(G)$. 
\end{lemma}

\begin{proof}
First of all, if $J=\emptyset$ then ${\bf h}_J[{\bf g}]$
falsifies the clause $\neg V(G)[2]$.
Indeed, the only variables of $V(G)[2]$ assigned by 
${\bf h}_J[{\bf g}]$ with $0$ are those of $W[2]_J$.
But $J=\emptyset$ implies that $W[2]_J=\emptyset$.

Otherwise, we observe that $\neg V(G)[1]$ is satisfied because
one of $U[1]$ is assigned negatively and 
$\neg V(G)[2]$ is satisfied because a variable of $W[2]_J$ is assigned 
negatively. 
Hence, $\psi(G)$ may be falsified only if both
variables of a clause $(u[1] \vee w[2])$ are assigned with zeroes. 
By assumption $w[2]=w_j[2]$ for some $j \in J$. 
Further on, $u_j[1]$ assigned with $1$ by ${\bf g}$, but on the other
hand, the only variables of $V(G)[1]$ assigned by zeroes belong to $U[1]$. 
Hence $u[1]=u_i[1]$ for some $i \neq j$. 
By definition of $\psi(G)$, $G^*$ has an edge $\{u_i[1],w_j[2]\}$ in
contradiction to $M$ being an induced matching.

For the second statement, we note that $I({\bf g}) \neq \emptyset$
because, by definition of ${\bf g}$, at least one variable of $U[1]$ is assigned with $1$.
Then we apply the first statement,  
\end{proof}

Now we are ready to prove that the $UB$ sets are indeed unbreakable. 

\begin{lemma} \label{lem:neatcross2}
For each ${\bf g} \in \mathcal{F}$, 
$\mathcal{S}(B)|_{\bf g}$ does not break 
$UB({\bf g})=W[2]_{I({\bf g})}$. 
\end{lemma}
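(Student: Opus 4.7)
My plan is to argue by contradiction: suppose that $\mathcal{S}(B)|_{{\bf g}}$ is a rectangle with witnesses $\mathcal{H}_1,\mathcal{H}_2$ whose variable sets each contain at least one element of $W[2]_{I({\bf g})}$. This splits $I({\bf g})$ into two nonempty pieces $K_1,K_2$, where $K_i=\{j\in I({\bf g}):w_j[2]\in\var(\mathcal{H}_i)\}$. Since both $K_1$ and $K_2$ are nonempty, Lemma~\ref{lem:neatcross1} gives that ${\bf h}_{K_1}[{\bf g}]$ and ${\bf h}_{K_2}[{\bf g}]$ both satisfy $\psi(G)$; hence their restrictions to the suffix $\pi_1$ belong to $\mathcal{S}(B)|_{{\bf g}}$. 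Call these restrictions ${\bf a}_1$ (coming from ${\bf h}_{K_2}[{\bf g}]$) and ${\bf a}_2$ (coming from ${\bf h}_{K_1}[{\bf g}]$).

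The engine of the argument is a rectangle swap. Let ${\bf a}_i^{(j)}$ denote the projection of ${\bf a}_i$ onto $\var(\mathcal{H}_j)$; the rectangle structure of $\mathcal{S}(B)|_{{\bf g}}$ forces ${\bf b}:={\bf a}_1^{(1)}\cup{\bf a}_2^{(2)}$ to also lie in $\mathcal{S}(B)|_{{\bf g}}$. The critical observation is that ${\bf a}_1$ sends every $w_j[2]$ with $j\in K_1$ to $1$, because ${\bf h}_{K_2}[{\bf g}]$ only zeros variables in $W[2]_{K_2}$; symmetrically ${\bf a}_2$ sends every $w_j[2]$ with $j\in K_2$ to $1$. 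Taking the respective projections and reuniting, ${\bf b}$ therefore assigns $1$ to \emph{all} of $W[2]_{I({\bf g})}$.

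To close, I would combine ${\bf b}$ with the fooling assignment ${\bf g}$ to obtain an assignment ${\bf g}\cup{\bf b}\in\mathcal{S}(\psi(G))$. A short case analysis shows every variable of $V(G)[2]$ is then set to $1$: any element of $V(G)[2]$ lying in $\pi_0$ is sent to $1$ by the second clause of the fooling definition (``all of $\pi_0\setminus U[1]$ is assigned $1$''); the elements of $W[2]\setminus W[2]_{I({\bf g})}$ are set to $1$ by both ${\bf h}_{K_1}[{\bf g}]$ and ${\bf h}_{K_2}[{\bf g}]$, hence by ${\bf b}$ regardless of which half they are projected to; and $W[2]_{I({\bf g})}$ is set to $1$ by the swap computation above. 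Thus ${\bf g}\cup{\bf b}$ falsifies the long clause $\neg V(G)[2]$, contradicting ${\bf g}\cup{\bf b}\in\mathcal{S}(\psi(G))$.

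The main subtlety I expect is the bookkeeping for variables in $\pi_1\setminus W[2]$ and in $W[2]\setminus W[2]_{I({\bf g})}$: one has to verify, by unpacking the definition of ${\bf h}_J[{\bf g}]$, that all these variables are forced to $1$ by \emph{both} ${\bf a}_1$ and ${\bf a}_2$, so the swap cannot accidentally introduce a $0$ that would rescue the long clause. Beyond Lemma~\ref{lem:neatcross1} and the basic rectangle-swap principle, no further machinery should be needed.
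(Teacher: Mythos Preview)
Your proposal is correct and follows essentially the same approach as the paper: assume a breaking rectangle, split $I({\bf g})$ into two nonempty parts according to where the $w_j[2]$ land, apply Lemma~\ref{lem:neatcross1} to obtain two satisfying extensions, and perform the rectangle swap to produce an element of $\mathcal{S}(B)|_{\bf g}$ that assigns $1$ to all of $V(G)[2]$, contradicting the clause $\neg V(G)[2]$. The paper's proof is more terse (it defines the swapped projections ${\bf h}_1,{\bf h}_2$ directly rather than via intermediate ${\bf a}_i$), but the argument is identical in substance, and your additional bookkeeping for variables outside $W[2]_{I({\bf g})}$ is exactly what the paper leaves implicit.
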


\begin{proof}
Assume the opposite. 
Then $\mathcal{S}(B)|_{\bf g}=\mathcal{S}_1 \times \mathcal{S}_2$
such that there is a partition $I_1,I_2$ of $I({\bf g})$ so that 
$\var(\mathcal{S}_j) \cap W[2]_{I({\bf g})}=W[2]_{I_j}$ for each $j \in \{1,2\}$. 

By Lemma \ref{lem:neatcross1}, 
${\bf h}_2=Proj({\bf h}_{I_1}[{\bf g}],\var(\mathcal{S}_2)) \in \mathcal{S}_2$ 
and 
${\bf h}_1=Proj({\bf h}_{I_2}[{\bf g}],\var(\mathcal{S}_1)) \in \mathcal{S}_1$. 
Then ${\bf g} \cup {\bf h}_1 \cup {\bf  h}_2$ is a satisfying assignment of $\psi(G)$. 
However, this is a contradiction since 
${\bf g} \cup {\bf h}_1 \cup {\bf h}_2$ sets all the variables of $V(G)[2]$ to $1$
thus falsifying $\neg V(G)[2]$. 
\end{proof}

We next prove that the distinct projections property holds for $\mathcal{F}$. 

\begin{lemma} \label{lem:neatcross4}
Let ${\bf g}_1,{\bf g}_2$ be two distinct elements of $\mathcal{F}$. 
Then $Proj(\mathcal{S}(B)|_{{\bf g}_1}, UB({\bf g}_1) \cup UB({\bf g}_2)) \neq 
Proj(\mathcal{S}(B)|_{{\bf g}_2}, UB({\bf g}_1) \cup UB({\bf g}_2))$.
\end{lemma}

\begin{proof}
Assume the opposite
and let  ${\bf g}_1,{\bf g}_2$ be two distinct elements of $\mathcal{F}$
such that $Proj(\mathcal{S}(B)|_{{\bf g}_1}, UB({\bf g}_1) \cup UB({\bf g}_2)) = 
Proj(\mathcal{S}(B)|_{{\bf g}_2}, UB({\bf g}_1) \cup UB({\bf g}_2))$.

Let $u_i[1]$ be a variable assigned differently by ${\bf g}_1$
and ${\bf g}_2$. Assume w.l.o.g. that
${\bf g}_1(u_i[1])=0$ while ${\bf g}_2(u_i[1])=1$. 
That is, $i \in I({\bf g}_2)$. 
Let $I=\{i\}$. 
By Lemma \ref{lem:neatcross1}, 
${\bf h}_I[{\bf g}_2]$ satisfies $\psi(G)$. 
Let ${\bf h}_0=Proj({\bf h}_I[{\bf g}_2], UB({\bf g}_1) \cup UB({\bf g}_2))$
As, by assumption, ${\bf h}_I[{\bf g}_2]$ is an  extension of ${\bf g}_2$,
${\bf h}_0 \in Proj(\mathcal{S}(B)|_{{\bf g}_2}, UB({\bf g}_1) \cup UB({\bf g}_2))$
and hence, by assumption in  the first paragraph,
${\bf h}_0 \in Proj(\mathcal{S}(B)|_{{\bf g}_1}, UB({\bf g}_1) \cup UB({\bf g}_2))$.

This means that ${\bf g}_1 \cup {\bf h}_0$ can be extended to a satisfying  assignment
of $\psi(G)$.
However, ${\bf g}_1$ maps $u_i[1]$ to $0$ simply by definition.
On  the other hand, ${\bf h}_0$, being a subset of ${\bf h}_I[{\bf g}_2]$,
assigns $w_i[2]$ with zero thus falsifying the clause
$(u_i[1] \vee w_i[2])$ of $\psi(G)$.
\end{proof}

\begin{proof} ({\bf of Theorem \ref{obddengine1}})
It follows from the combination of Lemmas \ref{lem:neatcross2} and \ref{lem:neatcross4}
that $\mathcal{F}$ is a fooling set. 
It thus follows from Lemma \ref{lem:wobddmethod}
that $|B| \geq |\mathcal{F}|$.
 
It remains to notice that $\{Proj({\bf g},U[1])| {\bf g} \in \mathcal{F}\}$
includes all possible assignments to $U_1$ but all zeroes and all ones. 
It thus follows that $|\mathcal{B}| \geq 2^{q}-2$. 
\end{proof}

\begin{proof} ({\bf of Theorem \ref{obddengine2}})
Let $\pi$ be the order respected by $B$. 
Let $M$ be a matching crossing $\pi$.
Let $U$ and $W$ be subsets of $V(G)$ such that
$M$ is a matching between $U$ and $W$ and all the vertices of $U$ 
occur in $\pi$ before all the vertices of $W$. 
Let $\pi_0$ be the prefix of $\pi$ ending with the last vertex of $U$ in $\pi$. 
Let $\mathcal{F}_0$ be the set of all the assignments that assign at least one variable
of $U$ with zero and assign all the variables of $\pi_0 \setminus U$ with $1$.
Let ${\bf g} \in \mathcal{F}_0$. We observe that ${\bf g}$ can be extended to a satisfying
assignment of $\varphi(G)$. Indeed, let ${\bf h}$ be an extension of ${\bf g}$ to $Var(\varphi)$
just assigning with $1$ all the variables unassigned by ${\bf g}$. Then a clause can be falsified
by ${\bf h}$ only if it includes two variables of $U$. But such a clause does not exist due
to $M$ being an induced matching.  On the other hand, there is an extension of ${\bf g}$ that falsifies
a clause of $\varphi(G)$. Indeed, let $u$ be a variable of $U$ assigned with $0$ by ${\bf g}$. 
Let $v$ be such that $\{u,v\} \in M$. Extension of ${\bf g}$ by assigning $v$ with $0$ falsifies the 
clause $(u \vee v)$. It follows from Lemma \ref{lem:assignpref2} that $L({\bf g}) \neq \emptyset$. 
Further on, it follows from Lemma \ref{lem:assignpref1} (since $B$ does not have conjunction nodes)
that $|L({\bf g})|=1$. Denote the only element of $L({\bf g})$ by $u({\bf g})$. 
We prove that the mapping of $\mathcal{F_0}$ from ${\bf g}$ to $u({\bf g})$ is injective thus
implying that $|B| \geq |\mathcal{F}_0|$. As $|\mathcal{F}_0| \geq 2^{|M|}-1$,  the theorem will
immediately follow.

So, assume that there are ${\bf g}_1, {\bf g}_2 \in \mathcal{F}_0$ such that
$u=u({\bf g}_1)=u({\bf g}_2)$. Let $X=\var(\varphi) \setminus (\pi_0 \cup \var(u))$. 
It follows from Lemma \ref{lem:assignpref2}
that $\mathcal{S}(B)|_{{\bf g}_1}=\mathcal{S}(B)|_{{\bf g}_2}=X^{\{0,1\}} \times \mathcal{S}(B_u)$.
We prove that the restrictions of $\mathcal{S}(B)$ to ${\bf g}_1$ and ${\bf g}_2$ are in fact distinct.
Indeed, let $u \in  U$ be such that ${\bf g}_1(u) \neq {\bf g}_2(u)$. Assume w.l.o.g. that $g_1(u)=1$
and $g_2(u)=0$. Let $v$ be such that $\{u,v\} \in M$. 
Let ${\bf h}_1$ be an assignment to $\var(\varphi) \setminus \pi_0$ that assigns $v$ with $0$
and the remaining variables with $1$.  
We observe that ${\bf g}_1 \cup {\bf h}_1$ is a satisfying assignment of $\varphi(G)$ meaning that
${\bf h}_1 \in \mathcal{S}(B)|_{{\bf g}_1}$. Indeed, by the same argument that we use in the previous paragraph
for assignment ${\bf h}$, ${\bf g}_1 \cup {\bf h}_1$ can only falsity a clause $(u' \cup v)$ such that $u' \in U$
and $u' \neq u$. But such a clause does not exist because $M$ is an induced matching. 

It thus follows from  our assumption that ${\bf h}_1 \in \mathcal{S}(B)|_{{\bf g}_2}$.
Hence, ${\bf g}_2 \cup {\bf h}_1$ is a satisfying assignment of $\varphi(G)$.
However, this is a contradiction since ${\bf g}_2 \cup {\bf h}_1$ falsifies the clause $(u \vee v)$.
\end{proof}

\section{The Apply operation for $\wedge_d$-\textsc{obdd}} \label{sec:apply}
\subsection{Upper and Lower bounds}
An important property of \textsc{obdd} is efficiency of conjunction (\emph{Apply} operation)
of two \textsc{obdd}s obeying the same order. In particular, 
if $B_1$ and $B_2$ are two \textsc{obdd}s over the same set of variables and 
obeying the same order $\pi$
then the function $f(B_1) \wedge f(B_2)$ can be represented by an \textsc{obdd}
obeying $\pi$ of size $|B_1| \cdot |B_2|$ and can be computed at time $O(|B_1| \cdot |B_2|)$. 

In this section we will first demonstrate that, in general $\wedge_d$-\textsc{obdd}
does not allow efficient $Apply$ operation. We will then identify 
a restricted case where $Apply$ can be carried out efficiently. 
The corresponding main result is stated in Theorem \ref{th:effemb}, the theorem is proved
in Subsection \ref{sec:effemb}.
The corresponding lower bound is stated in Theorem \ref{th:loweremb},  and we prove the theorem in 
Subsection \ref{sec:loweremb}.

In the Conclusion section, we argue that this case gives rise to an interesting theory and pose the related open questions. 

Recall that $G_n$ is an $n \times n$ grid. Let $V_n=V(G_n)$. 
We let $E_{vert}$ and $E_{hor}$ be the sets of vertical and horizontal 
edges of $G_n$. 
Let $G_{vert}=(V_n,E_{vert})$ and $G_{hor}=(V_n,E_{hor})$. 
Let $\pi=\pi_n$ be a dictionary order over $V_n$.
That is $(i_1,j_1) <_{\pi} (i_2,j_2)$ if and only if 
either $i_1<i_2$ or $i_1=i_2$ and $j_1<j_2$. 

\begin{lemma} \label{lem:polyparts}
\begin{enumerate}
\item There is linear size $\wedge_d$-\textsc{obdd} respecting $\pi$ and representing 
$\varphi(G_{vert})$. 
\item There is linear size $\wedge_d$-\textsc{obdd} respecting $\pi$ and representing
$\varphi(G_{hor})$. 
\end{enumerate}
\end{lemma}

\begin{proof}
We reuse the proof of Theorem \ref{obddsep2}. 
In particular, the $\wedge_d$-\textsc{obdd} $B_0$ and $B_1$ considered in the
proof are the desired representations of $\varphi(G_{hor})$ and $\varphi(G_{vert})$ respectively. 
\end{proof}

\begin{theorem}
The Apply operation for $\wedge_d$-\textsc{obdd} 
respecting the same order
may lead to exponential explosion. 
\end{theorem}

\begin{proof}
It is not hard to see that $\varphi(G_n)=\varphi(G_{vert}) \wedge \varphi(G_{hor})$. 
By Lemma \ref{lem:polyparts}, both $\varphi(G_{vert})$ and  $\varphi(G_{hor})$
are polynomially representable by $\wedge_d$-\textsc{obdds}.However, since the treewidth of $G_n$ is $\Omega(n)$,
$\varphi(G_n)$ requires an exponential representation as a \textsc{dnnf} \cite{DNNWlowertw1,DNNWlowertw2}
let alone an $\wedge_d$-\textsc{obdd}.
\end{proof}

Thus we have demonstrated that, in general, $\wedge_d$-\textsc{obdd} does not support efficient Apply
operation. Nevertheless, the operation can be carried out efficiently  in special cases. 
We are going to consider one such case. 

First, let us make several notational conventions. 
For a linear order $\pi$ and $u \in \pi$,
we denote by $\pi^{>u}$ the suffix of $\pi$ starting at the immediate successor of $u$. 

\begin{definition}
Let $B$ be a $\wedge_d$-\textsc{fbdd} and let $\pi$ be a linear
order such that $\var(B) \subseteq \pi$. 
We  say that $B$ is \emph{embeddable} in $\pi$ if the following 
recursively defined conditions hold.
\begin{enumerate}
\item $B$ consists of a single sink node.
\item Suppose that the source $u$ of $B$ is a variable node 
labelled with a variable $x$. Let $u_0$ and $u_1$ be the children
of $u$. Then both $B_{u_0}$ and $B_{u_1}$ are embeddable into 
$\pi^{>x}$. 
\item Suppose that the source  $u$ of $B$ is a $\wedge_d$ node
and let $u_1$ and $u_2$ be the children of $u$. 
Then there are disjoint intervals $\pi_1$ and $\pi_2$ of $\pi$
such that $B_{u_1}$ is embeddable in $\pi_1$ and $B_{u_2}$
is embeddable in $\pi_2$. 
\end{enumerate} 
\end{definition}

It is not hard to demonstrate by induction that if $B$ is 
embeddable $\pi$ then $B$ is a $\wedge_d$-\textsc{obdd} respecting $\pi$. The opposite however is not true in general.
Moreover, embeddability does not imply structurudness
as different $\wedge_d$-nodes can break the same sequence of variables into
different intervals. This intuition  is illustrated in the following example. 

\begin{example} \label{ex:embed}
Figure \ref{pic:embed} illustrates two $\wedge_d$-\textsc{obdd}s
over variables $(x_1, \dots, x_9)$ and respecting the order as listed. 
The ellipses correspond to \textsc{obdd}s and the sequences of variables
provided near each ellipses are their corresponding respected orders. 

We observe that the $\wedge_d$-\textsc{obdd} on the left is embeddable into
$(x_1, \dots, x_9)$. Indeed, the sink is a decision node labelled by the first variable
in the order. Each of the two $\wedge_d$-nodes break the remaining sequence $(x_2, \dots, x_9)$
into intervals and the orders of variables in  the intervals are respected by the `bottom layer' 
\textsc{obdd}s. On the other hand, the $\wedge_d$-\textsc{obdd} on the left-hand side is not
structured \cite{StructDNNF}. Indeed, one $\wedge_d$-node breaks the variables into intervals $(x_2, \dots, x_6)$
and $(x_7, \dots, x_9)$, whole the other $\wedge_d$-node breaks the variables into intervals
$(x_2, \dots, x_4)$ and $(x_5, \dots, x_9)$. It is not hard to see that, as a result, there is no \emph{vtree}
respected by the model. 

Finally, we observe that the $\wedge_d$-\textsc{obdd} on the right-hand side
is not embeddable into $(x_1, \dots, x_9)$ since the $\wedge_d$-node decomposes
the set of variables into two classes that are not intervals. 
\end{example}


\begin{figure}[h]
\begin{tikzpicture}
\draw [fill=black]  (1,1.6) circle [radius=0.2];
\draw [fill=black]  (2.5,1.6) circle [radius=0.2];
 \draw [fill=black]  (1.8,3) circle [radius=0.2];

\draw [-latex](1.8,3) --(1.1,1.7); 
\draw [-latex](1.8,3) --(2.4,1.7); 
  
\draw (1,1) ellipse (0.5cm and 1cm);
\draw (2.5,1) ellipse (0.5cm and 1cm);

\node [left]  at (1.5,-0.2)  {$(x_2, \dots, x_6)$};
\node [right]  at (1.7,-0.2)  {$(x_7, \dots, x_9)$};

\draw [fill=black]  (5,1.6) circle [radius=0.2];
\draw [fill=black]  (6.5,1.6) circle [radius=0.2];
 \draw [fill=black]  (5.8,3) circle [radius=0.2];

\draw [-latex](5.8,3) --(5.1,1.7); 
\draw [-latex](5.8,3) --(6.4,1.7);

 \draw [fill=black]  (3.8,5) circle [radius=0.2];
\draw [-latex](3.8,5) --(1.9,3.1); 
\draw [-latex](3.8,5) --(5.7,3.1);

\draw (5,1) ellipse (0.5cm and 1cm);
\draw (6.5,1) ellipse (0.5cm and 1cm);

\node [left]  at (5.7,-0.2)  {$(x_2, \dots, x_4)$};
\node [right]  at (5.9,-0.2)  {$(x_5, \dots, x_9)$};

\draw [fill=black]  (10,1.6) circle [radius=0.2];
\draw [fill=black]  (11.5,1.6) circle [radius=0.2];
 \draw [fill=black]  (10.8,3) circle [radius=0.2];

\draw [-latex](10.8,3) --(10.1,1.7); 
\draw [-latex](10.8,3) --(11.4,1.7); 
  
\draw (10,1) ellipse (0.5cm and 1cm);
\draw (11.5,1) ellipse (0.5cm and 1cm);

\node [left]  at (10.7,-0.2)  {$(x_2, x_4, x_6, x_8)$};
\node [right]  at (10.9,-0.2)  {$(x_1,x_3,x_5,x_7,x_9)$};
\node [left]  at (1.7,3)  {$\wedge_d$};
\node [right]  at (5.9,3)  {$\wedge_d$};
\node [left]  at (3.7,5)  {$x_1$};
\node [left]  at (10.7,3)  {$\wedge_d$};
\end{tikzpicture}
\caption{Embeddability vs non-embeddability}
\label{pic:embed}
\end{figure}

We remark that it is not hard to see that for ordinary
\textsc{obdd}s, the notions of respecting an order and being embeddable into that order coincide. 
In fact, a stronger
statement (Lemma \ref{lem:auxembed1}) can be made in presence
of the notion of irregularity index.

The embeddability restriction allows efficient simulation of $\wedge_d$-\textsc{obdd}
by an ordinary \textsc{obdd}, The degree of efficiency is regulated 
by a parameter defined below that we refer to as the \emph{irregularity index}. 

\begin{definition}
Suppose that $\pi$ is embeddable into $\pi$.
Then the \emph{irregularity index} of $B$ w.r.t. $\pi$
denoted by $ir_{\pi}(B)$ is defined as follows.
\begin{enumerate}
\item If $B$ consists of a single sink node then 
$ir_{\pi}(B)=0$. 
\item Suppose that the source $u$ of $B$ is a variable node 
labelled with a variable $x$. Let $u_0$ and $u_1$ be the children
of $u$. Then 
$ir_{\pi}(B)=max(ir_{\pi^{>x}}(B_0),ir_{\pi^{>x}}(B_1))$. 
\item  Suppose that the source  $u$ of $B$ is a $\wedge_d$ node
and let $u_1$ and $u_2$ be the children of $u$. 
Let $\pi_1$ and $\pi_2$ be disjoint intervals of $\pi$
such that $B_{u_1}$ is embeddable in $\pi_1$ and $B_{u_2}$
is embeddable in $\pi_2$. Assume that $\pi_1$ occurs before $\pi_2$ in $\pi$. 
Then $ir_{\pi}(B)=max(ir_{\pi_1}(B_{u_1})+1,ir_{\pi_2}(B_{u_2}))$. 
\end{enumerate} 
\end{definition}

As the irregularity index greater than 0 requires
presence of conjunction gates, we observe the following.
\begin{lemma} \label{lem:auxembed1}
Let $\pi$ be an \textsc{obdd}  respecting an order $\pi$
Then $B$ is embeddable into $\pi$ with the irregularity index $0$.
\end{lemma}

In fact, a much more general statement holds. In particular,
it turns out that a $\wedge_d$-\textsc{fbdd}embeddable 
into a linear order $\pi$ can be polynomially simulated by
an \textsc{obdd} with the iiregularity index plus one being the degree
of the polynomial. The formal statement is provided below.  

\begin{theorem} \label{th:effemb}
Let $B$ be a $\wedge_d$-\textsc{fbdd} embeddable into a 
linear order $\pi$. Then $B$ can be represented as an 
\textsc{obdd} respecting $\pi$ and having size 
at most $|B|^{k+1}$ where $k=ir_{\pi}(B)$. 
Moreover, there is an algorithm that, given $B$ and $\pi$ 
returns such an \textsc{obdd} in time polynomial in $|B|$. 
\end{theorem}

This simulation is best possible in the following sense. 

\begin{theorem} \label{th:loweremb}
For each $n \geq 1$, there is a $\wedge_d$-\textsc{fbdd}
$B$ of $O(n)$ variables and size $O(n)$ embeddable into an order $\pi_n$ with irregularity index $n$, 
such that any \textsc{obdd} respecting $\pi_n$ that simulates $B$ must be of size at
least $2^{\Omega(n)}$.  
\end{theorem}

Theorems \ref{th:effemb} and \ref{th:loweremb}
are proved in Subsections \ref{sec:effemb} and \ref{sec:loweremb}
respectively. In order to prove Theorem \ref{th:effemb}, we use 
the same approach as was used in \cite{BeameDNNF} and then in \cite{RazgonCP15}
to  simulate variants of \textsc{dnnf} by corresponding variants of 
read-once branching programs. 

It follows from Theorem \ref{th:effemb}
that if two $\wedge_d$-\textsc{fbdd}s are embeddable into the same order
with low irregularity indices, the $Apply$ operation for them can be carried
out efficiently. The formal statement is provided below. 

\begin{theorem} \label{th:effapply}
Let $B_1$ and $B_2$ be two $\wedge_d$-\textsc{fbdd}s 
embeddable into the same order $\pi$.
Suppose that $\var(B_1)=\var(B_2)$.
Then $f(B_1) \wedge f(B_2)$ can be represented 
as an \textsc{obdd} $B$ respecting $\pi$ of size
$|B_1|^{ir_{\pi}(B_1)} \cdot |B_2|^{ir_{\pi}(B_2)}$. 

Moreover, there is an algorithm that, provided $B_1,B_2$ and $\pi$
as input, returns $B$ in time polynomial in the size of $B$. 
\end{theorem}

\begin{proof}
According to Theorem \ref{th:effemb},
there is an algorithm that, for each $i \in [2]$,
given $B_i$ and $\pi$ returns an \textsc{obdd} $B'_i$
respecting $\pi$ with $f(B'_i)=f(B_i)$ and of size 
at most $|B_i|^{ir_{\pi}(B_i)}$. 
As $B'_1$ and $B'_2$ respect the same order,  it is well  known \cite{WegBook}
that the $Apply$ operation for them  can be carried out in time
$O(|B'_1| \cdot |B'_2|)$ and the size of the resulting \textsc{obdd}
has the same upper bound. 
\end{proof}

In light of  Theorem \ref{th:effapply},
it is natural to ask whether the $Apply$ operation
can be efficiently carried out for two $\wedge_d$-\textsc{obdd}s
respecting the same order $\pi$ but not necessarily embeddable
into it. In the Conclusion section, we conjecture that this is indeed 
the case and provide supporting evidence. 
We also argue that embeddability property is of a broad interest
in the knowledge compilation context. In particular, the embeddability can be
formulated for \textsc{dnnf}s and it is a wekaer restriction than structuredness. 

\subsection{Proof of Theorem \ref{th:effemb}} \label{sec:effemb}
The following statements are not hard to derive from the 
embeddability definition. 

\begin{lemma} \label{lem:simaux1}
Let $B$ be a $\wedge_d$-\textsc{fbdd} embeddable
into a linear order $\pi$. 
Let $u$ be a $\wedge_d$-node of $B$ and let $u_1$ and $u_2$
be the children of $u$.
Then there are disjoint intervals $\pi_1$ and $\pi_2$ of $\pi$
such that $Var((B_{u_i}) \subseteq \pi_i$ 
for each $i \in [2]$. 
\end{lemma}

\begin{lemma} \label{lem:simaux2}
Let $B$ be a $\wedge_d$-\textsc{fbdd} embeddable
into a linear order $\pi$. 
Let $u \in V(B)$. Then $B_u$ is embeddable into $\pi$
$ir_{\pi}(B_u) \leq ir_{\pi}(B)$. 
\end{lemma}

\begin{definition}
Let $B$ be a $\wedge_d$-\textsc{fbdd} embeddable
into a linear order $\pi$. 
Let $u$ be a $\wedge_d$-node of $B$ and let $u_1$ and $u_2$
be the children of $u$.
Let $\pi_1$ and $\pi_2$ be the intervals as in 
Lemma \ref{lem:simaux1}. 
Assume w.l.o.g. that $\pi_1$ occurs on $\pi$ before $\pi_2$. 
Then we call $(u,u_1)$ a \emph{top edge} and $(u,u_2)$ a
\emph{bottom edge}. 
Each edge whose tail is a decision node 
is a \emph{neutral edge}. 
\end{definition}

\begin{example}
Consider the left-hand side $\wedge_d$-\textsc{obdd}
on Figure \ref{pic:embed} (see Example \ref{ex:embed} for the description of this
model). The left-hand side children of the $\wedge_d$-nodes (as shown on the picture) 
are the heads of top edges and the right-hand side children of these nodes are the heads
of bottom edges.  
\end{example}

The notion of top edges gives rise to the irregularity indices of paths and nodes
as defined below. 

\begin{definition} \label{def:pathirr}
Let $B$ be a $\wedge_d$-\textsc{fbdd} embeddable
into a linear order $\pi$. 
Let $P$ be a path of $B$. 
The irregularity index of $P$ w.r.t. $B$
denoted by $ir_{\pi,B}(P)$ is the number 
of top edges in $P$. 

Further on, for $u \in V(B)$, the irregularity index
of $u$ denoted by $ir_{\pi,B}(u)$ 
is the largest $ir_{\pi,B}(P)$ over all paths $P$
from the source of $B$ to $u$. 
\end{definition}

\begin{lemma} \label{def:simaux3}
With the notation as in Definition \ref{def:pathirr},
$ir_{\pi,B}(P) \leq ir_{\pi}(B)$. 
In particular, for each $u \in V(B)$, 
$ir_{\pi,B}(u) \leq ir_{\pi}(B)$. 
\end{lemma}

\begin{proof}
By induction on $ir_{\pi,B}(P)$. 
If $ir_{\pi,B}(P)=0$ then the statement trivially
holds since $ir_{\pi}(B) \geq 0$ by definition. 
Assume now that $ir_{\pi,B}(P)>0$. 
Let $P'$ be the shortest suffix of $P$ such that 
$ir_{\pi,B}(P)=ir_{\pi,B}(P')$. 
Let $u$ be the first node of $P'$. 
Then $u$ must be a $\wedge_d$-node due to the minimality of $P'$. 
Let $v$ be the immediate successor of $u$ on $P'$.
Then $(u,v)$ must be a top edge of $B$  due to minimality of $P'$.
Let $P''$ be the suffix of $P$ starting from $v$.
We conclude that $ir_{\pi,B}(P'')=ir_{\pi,B}(P')-1$. 
Next, $ir_{\pi,B}(P'')=ir_{\pi,B_v}(P'')$. 
Hence, by the induction assumption combined with 
Lemma \ref{lem:simaux2}, $ir_{\pi,B}(P'') \leq  ir_{\pi}(B_v)$. 
By definition of the irregularity index,
$ir_{\pi}(B_u) \geq ir_{\pi}(B_v)+1 \geq ir_{\pi,B}(P)$. 
By another application of Lemma \ref{lem:simaux2}
$ir_{\pi}(B) \geq ir_{\pi}(B_u)$ hence the lemma holds.
\end{proof}

We are now going to present the transformation from $\wedge_d$-\textsc{fbdd} to \textsc{obdd}
that we will then use to prove Theorem \ref{th:effemb}.
The key part of the transformation is the following operation of combining two \textsc{obdd}s into one.
\begin{definition}
Let $B_1$ and $B_2$ be two \textsc{obdd}s such that $\var(B_1) \cap \var(B_2)=\emptyset$. 
Then $conj(B_1,B_2)$ is an \textsc{obdd} obtained from $B_1$ and $B_2$ by the following 
two operations. 
\begin{enumerate}
\item Identify the positive sink of $B_1$ and the source of $B_2$
\item Join the negative sinks of $B_1$ and $B_2$ into a single vertex. 
\end{enumerate}
\end{definition}

It is not hard to observe the following. 
\begin{lemma} \label{lem:conjbasic}
Let $B=conj(B_1,B_2)$. 
Then $f(B)=f(B_1) \wedge f(B_2)$. 
\end{lemma}

The transformation also requires preprocessing of the input
$\wedge_d$-\textsc{fbdd} $B$ that is described below.  
\begin{enumerate}
\item The name of each decision node $u$ changes into $(u,())$. 
Throughout the transformation the first component remains invariant and
the names of eliminated $\wedge_d$ nodes may be appended to the second component. 
\item Throughout the transformation the resulting DAG $B'$ may have several sources.
We identify the main one and refer to it as $source(B')$. Initially, $B$ has only one
source and, naturally, it is $source(B)$. 
We thus adapt the definition of $\wedge_d$-\textsc{fbdd} as having several sources.
In particular, we set $f(B')=f(B'_{source(B')})$.  
\end{enumerate}

The transformation is presented in Algorithm \ref{alg:makeobdd}
as procedure $MakeOBDD(B,\pi)$
The transformation uses the procedure described in 
Algorithm \ref{alg:conjelim} as an auxiliary subroutine.

\begin{algorithm}
    \caption{$MakeOBDD(B,\pi)$}
    \label{alg:makeobdd}
    \begin{algorithmic}
		\STATE $B_0 \leftarrow B,i \leftarrow 0$
		\WHILE{$B_i$ has $\wedge_d$-nodes}
		   \STATE $i \leftarrow i+1$ 
		   \STATE Let $u$ be a lowest $\wedge_d$-node (meaning that
$B_u$ does not have $\wedge_d$-node but $u$)
       \STATE  $B_i \leftarrow ConjElim(B_{i-1},u,\pi)$
	  \ENDWHILE
		\STATE Return $B_i$
    \end{algorithmic}
\end{algorithm}

\begin{algorithm}
    \caption{$ConjElim(B',u,\pi)$}
    \label{alg:conjelim}
    \begin{algorithmic}
		\STATE Let $(u_1,s_1)$ and $(u_2,s_2)$
be the children of $u$ and assume that the edge
$(u,(u_1,s_1))$ is the top one.
    \STATE Let ${\bf u}=(u_1,s_1)$  and let
$D$ be a copy of $B'_{\bf u}$ where each name 
$(u',s')$ of a node is replaced $(u',s'+u)$.
    \STATE $B'' \leftarrow B'$
		\STATE $D^* \leftarrow conj(D,B_{(u_2,s_2)})$
		\STATE Reconnect edges of $B''$ ending  at $u$: replace each edge
$(y,u)$ with $(y,{\bf u})$.
    \STATE Remove $u$ from $B'$. 
     \IF{$u=source(B')$}
        \STATE $source(B'') \leftarrow (u_1,s_1+u)$
    \ELSE
        \STATE $source(B'') \leftarrow source(B')$
    \ENDIF
    \end{algorithmic}
\end{algorithm}

The lemma below states several properties of the output of the $ConjElim$
procedure. The first two properties will be needed for proving validity of the $MakeOBDD$
transformation, the last two properties will be needed to establish the required size 
upper bound of the output of $MakeOBDD$. 

\begin{lemma}\label{lem:conjelim}
Let $B'$ be a $\wedge_d$-\textsc{fbdd} embeddable into an order $\pi$. 
Suppose that each variable node of $B'$ is 
of the form $(u',s')$ where $u'$ is a name of variable node of $B$ 
and $s'$ is a sequence of $\wedge_d$ nodes of $B$. 
Let $u$ be a lowest $\wedge_d$ node of $B'$. 
Let $B''=ConjElim(B',u,\pi)$. 
Then the following statements hold. 
\begin{enumerate}
\item $B''$ is a $\wedge_d$-\textsc{fbdd} respecting $\pi$
\item $f(B'')=f(B')$.
\item For each ${\bf v} \in V(B') \cap V(B'')$,
$ir_{\pi,B''}({\bf v}) \leq ir_{\pi,B'}({\bf  v})$. 
\item For each $(u',s'+u) \in V(D)$
$ir_{\pi,B''}((u',s'+u)) \leq ir_{\pi,B'}((u',s'))-1$
($D$ is as defined in step 2 of $ConjElim$ algorithm). 
\end{enumerate}
\end{lemma}

\begin{proof}
Let $em$ (standing for 'elimination mapping') be a bijection
mapping each ${\bf v} \in V(B') \setminus \{u\}$ to itself
and $u$ to $(u_1,s_1+u)$.
Then the following statement is immediate by construction. 
\begin{claim} \label{clm:conjelim0}
Let ${\bf v} \in V(B') \cap V(B'')$ be a non-sink node and
let ${\bf v}_1$ and ${\bf v}_2$ be the children of ${\bf v}$ in $B'$. 
Then $em({\bf v}_1)$ and $em({\bf v}_2)$ are the children of 
$em({\bf v})$.
Equivalently, if ${\bf v}_1$ and ${\bf v}_2$ are the children of ${\bf v}$
in $B''$ then $em^{-1}({\bf v}_1)$ and $em^{-1}({\bf v}_2)$ are the children
of ${\bf v}$ in $B'$.
\end{claim}

The following statement is also immediate by construction. 
\begin{claim} \label{clm:conjelim1}
$\var(B''_{(u_1,s_1+u)})=\var(B'_u)$.
\end{claim}

Next, we observe that the $Var$ set remains the same
for the subgraphs rooted by the nodes whose names are not changed
by the transformation. 
\begin{claim} \label{clm:conjelim2}
Let ${\bf v} \in V(B') \cap V(B'')$. 
Then $\var(B'_{\bf v})=\var(B''_{\bf v})$ 
\end{claim} 

\begin{proof}
By bottom up induction. 
The claim is trivially true for the sinks. 
Suppose that ${\bf v}$ is not a sink and 
let ${\bf v}_1$ and ${\bf v}_2$ be the children
of ${\bf v}$ in $B'$. 
By construction,  if ${\bf v}$ is a variable node 
that the label of ${\bf v}$ is the same in $B'$ and in $B''$. 
Hence the claim follows from the combination of 
the induction assumption and Claims \ref{clm:conjelim0} 
and \ref{clm:conjelim1}.
\end{proof}

Now, we begin our reasoning towards establishing the embeddability.
\begin{claim} \label{clm:conjelim3}
$B''_{(u_1,s_1+u)}$ is a $\wedge_d$-\textsc{fbdd} embeddable into $\pi$.
\end{claim}

\begin{proof}
By construction and assumption about $B'$,
there are intervals $\pi_1$ and $\pi_2$ of $\pi$
such that such that $\pi_1$ precedes $\pi_2$ in $\pi$ 
and $B'_{(u_j,s_j)}$ respects $\pi_j$ for each $j \in [2]$. 
is the same as being embeddable into $\pi$ according to 
Lemma \ref{lem:auxembed1} 
\end{proof}

\begin{claim} \label{clm:conjelim4}
For each ${\bf v} \in V(B') \cap V(B'')$, $B''_{\bf  v}$ is a $\wedge_d$-\textsc{fbdd} embeddable into $\pi$. 
\end{claim}

\begin{proof}
The statement is trivially true for sinks so suppose that ${\bf v}$ with has children ${\bf v}_1$ and ${\bf v}_2$ in $B''$. 
It follows from the combination Claims \ref{clm:conjelim0} and \ref{clm:conjelim3}
and the induction assumption that each $B''_{{\bf v}_j}$ is a $\wedge_d$-\textsc{fbdd} embeddable 
into $\pi$.

 
It follows from the combination of Claims \ref{clm:conjelim0}, \ref{clm:conjelim1},
and \ref{clm:conjelim2} that for each $j \in [2]$,
$\var(B'_{em^{-1}({\bf v}_j)})=
\var(B''_{{\bf v}_j})$. 
With this in mind, assume first
that ${\bf v}$ is a decision node. Then its label $\var({\bf v})$ is not changed in 
$B''$. 
As in $B'$, all  of $\var(B'_{em^{-1}({\bf v}_j)})$ for each $j \in [2]$ 
are below $\var(B'_{\bf v})$, the same dependency preserves for $B''$ for all 
$\var(B''_{{\bf v}_j})$ for each $j \in [2]$. 
Combined with the emeddability of as specified in the end of the previoys paragraph, 
we conclude that $B''_{\bf v}$ is embeddable in $\pi$. 

It remains to assume  that ${\bf v}$ is a $\wedge_d$-node.
Assume w.l.o.g. that the edge $({\bf v},em^{-1}({\bf v}_1))$ is the top one in $B'$. 
It follows that there are non-overlapping intervals $\pi_1$ and $\pi_2$ of $\pi$
with $\pi_1$ preceding $\pi_2$ and such that for each $j \in [2]$
$\var(B'_{em^{-1}({\bf v}_j)}) \subseteq \pi_j$. We conclude that for each $j \in [2]$
$\var(B''_{{\bf v}_j}) \subseteq \pi_j$ and hence $B''_{\bf v}$ is embeddable into $\pi$. 
\end{proof}

\begin{claim} \label{clm:conjelim5}
$B''_{source(B'')}$  is a $\wedge_d$-\textsc{fbdd} embeddable into $\pi$. 
\end{claim}

\begin{proof}
If $source(B'') \in V(B')$ then the statement follows
from Claim \ref{clm:conjelim4}.
Otherwise,$source(B'')=(u_1,s_1+u)$ and the statement 
follows from Claim \ref{clm:conjelim3}.
\end{proof}

\begin{claim} \label{clm:conjelim6}
$f(B''_{(u_1,s_1+u)})=f(B'_u)$. 
\end{claim}

\begin{proof}
Immediate by construction and from Lemmma \ref{lem:conjbasic}.
\end{proof}

\begin{claim} \label{clm:conjelim7}
For each ${\bf v} \in V(B') \cap V(B'')$,
$f(B')_{\bf v}=f(B'')_{\bf v}$
\end{claim}

\begin{proof}
We argue by bottom up induction.
The statement is trivially true for sinks so suppose that ${\bf v}$ is a non-sink node of $B''$ with children ${\bf v}_1$ and ${\bf v}_2$. 
It follows from the combination Claims \ref{clm:conjelim0}
and \ref{clm:conjelim6} and the induction assumption that 
for each $j \in [2]$ 
$f(B''_{{\bf v}_j})=f(B'_{em^{-1}({\bf v}_j)})$. 

Assume that ${\bf v}$ is a variable node labelled with a variable $x$.
Assume further that in $B'$ the edge $({\bf v},em^{-1}({\bf v}_1))$ is labelled with $1$ and the edge $({\bf v},em^{-1}({\bf v}_2))$ is labelled with $0$. 
Clearly, in $B''$, ${\bf v}$ remains a variable node labelled with $x$,
$({\bf v},{\bf  v}_1)$ is labelled with $1$ and $({\bf v},{\bf  v}_2)$ is
labelled with $0$. Hence, the statement is immediate from  the previous paragraph. 
If ${\bf v}$ is a $\wedge_d$-node in $B'$ it remains so in $B''$ and hence the statement is again immediate from the previous paragraph. 
\end{proof}

The following is immediate by construction
\begin{claim} \label{clm:conjelim8}
$B''[V(B') \cap V(B'')]=B'[V(B') \cap V(B'')]$,
in particular the roles of the nodes, the labels on
the nodes (if any) and the labels on the edges (if any)
are all preserved.
\end{claim}

We continue to observe that the roles of outgoing edges of
$\wedge_d$ notes (the top and the bottom ones) are preserved. 

\begin{claim} \label{clm:conjelim9}
Let ${\bf v}  \in V(B') \cap V(B'')$ be a conjunction node. 
Let ${\bf w}_1$ be a child of ${\bf v}$ in $B''$.
Then the role of $({\bf v},{\bf w}_1)$ in $B''$
is the same as the one of  $({\bf v},em^{-1}({\bf w}_1))$ in $B'$ 
\end{claim}

\begin{proof}
Let ${\bf w}_0$ be the other child of ${\bf v}$ in $B''$.
Due to the embeddability in $\pi$, there are non-overlapping 
intervals $\pi_1$ and $\pi_0$ of $\pi$ such that 
$\var(B'_{em^{-1}({\bf w}_1)}) \subseteq \pi_1$
and $\var(B'_{em^{-1}({\bf w}_0)}) \subseteq \pi_0$. 
The roles of the edges $({\bf v},em^{-1}({\bf w}_1))$ and 
$({\bf v},em^{-1}({\bf w}_0))$
are completely determined by the relative location of $\pi_1$ and $\pi_0$.
It follows from the combination of 
Claims \ref{clm:conjelim0}, \ref{clm:conjelim1}, and \ref{clm:conjelim2} 
that $\var(B'_{em^{-1}({\bf w}_1)})=\var(B''_{{\bf w}_1})$ and
$\var(B'_{em^{-1}({\bf w}_0)})=\var(B''_{{\bf w}_0})$. Hence the role of $({\bf v},{\bf w}_1)$ is completely determined by the location of $\pi_1$ on $\pi_0$ relative to $\pi$ and the dependency is the same as in $B'$.
\end{proof}

The first statement of the lemma follows from the combination of
Claims \ref{clm:conjelim3} and \ref{clm:conjelim5}, 
the second statement follows from the
combination of Claims \ref{clm:conjelim6} and \ref{clm:conjelim7}. 

For the third statement, let $P$ be a path in $B''$
from $source(B'')$ to ${\bf v}$ such that $ir_{\pi,B''}({\bf v})=ir_{\pi,B''}(P)$. 
If $P$ is also a path  in $B'$ then the statement is immediate from 
Claim \ref{clm:conjelim9}. 
Otherwise, by Claim \ref{clm:conjelim8}, $P$ must contain vertices of $D$. 
By construction, we can identify nodes ${\bf w}_0$ and ${\bf w}_1$  of $D$
so that $P=P_1+P_2+P_3$, where $P_1$ is the prefix of $P$ ending at 
${\bf w}_0$, $P_2$ is the subpath of $P$ between ${\bf w}_0$ and ${\bf w}_1$, $P_3$ is the suffix of $P$ starting at ${\bf w}_1$
and $V(P) \cap V(D)=V(P_2)$.     
We note that neither $P_2$ nor $P_3$ contain conjunction nodes. 
Hence, $ir_{\pi,B''}(P)=ir_{\pi,B''}(P_1)$.  
Further on, by construction $B'$ has a path $P'=P'_1+P'_3$ 
where $P'_1$ obtained from $P_1$ by replacing its last node with $u$
and $P'_3$ is obtained from $P_3$ by replacing its first node with $u$.
We observe that the outgoing edge of $u$ in $P'$ is the bottom one
and that $ir_{\pi,B''}(P_1)=ir_{\pi,B'}(P'_1)$ by Claim \ref{clm:conjelim9}. 
We conclude that $ir_{\pi,B'}(P')=ir_{\pi,B''}(P)$. 
As $P'$ is a path in $B'$ from $source(B')$ to ${\bf v}$, this proves the third statement. 

For the fourth statement, let $P$ be a path in $B''$ from 
$source(B'')$ to $(u',s'+u)$ such that $ir_{\pi,B''}(P)=ir_{\pi,B''}((u's'+u))$.
By construction $P=P_1+P_2$ where $P_1$ is the prefix of $P$
ending at $(u_1,s_1+u)$ and $P_2$ is the suffix of $P$ starting at $(u_1,s_1+u)$. 
We note that $P_2$ does not contain conjunction nodes and hence
$ir_{\pi,B''}(P)=ir_{\pi,B''}(P_1)$. 
Further on, we observe that $P$ can be transformed by a path $P'=P'_1+P'_2$ from
$source(B')$ where $P'_1$ is obtained from $P_1$  by replacing the last node 
with $u$ and $P'_2$ is obtained from $P_2$ by first replacing each node 
$(u^*,s^*+u)$ with $(u^*,s)$ and then adding $u$ at the beginning. 
We note that by Claim \ref{clm:conjelim9}, $ir_{\pi,B''}(P_1)=ir_{\pi,B'}(P'_1)$. 
Furthermore, by construction, the outgoing edge of $u$ in $P'_2$ is a top one,
hence $ir_{\pi,B'}(P'_2) \geq 1$. We conclude that 
$ir_{\pi,B'}(P') \geq ir_{\pi,B''}(P)+1$ 
thus establishing the fourth statement.
\end{proof}

\begin{proof} {\bf ( of Theorem \ref{th:effemb}) }
Let $B^*$ be the output of $MakeOBDD(B,\pi)$. 
We are going to prove that $B^*$ is an \textsc{ondd} respecting $\pi$ of size at most $|B|^{ir_{\pi}(B)+1}$
and such that $f(B^*)=f(B)$. 
As $B^*$ is obtained in a constructive way, the second statement of the theorem will be immediate.

It is clear from the description of the algorithm
that the algorithm runs $q$ iterations where $q$ is at most the number of conjunction gates 
of $B$. The output of the algorithm is $B_q=B^*$.
Applying the first two statements of lemma \ref{lem:conjelim}
inductively to each $B_i$, we conclude that $B_q$ is a $\wedge_d$-\textsc{obdd}
embeddable into $\pi$ and $f(B_q)=f(B)$. Since by construction, $B_q$
contains no conjunction gates, we conclude that $B^*$ is an $\wedge_d$-\textsc{fbdd} 
embeddable into $\pi$ and hence an \textsc{obdd}
respecting $\pi$. 

To establish an upper bound on the size of $B^*$, let us first replace $B_i$
with $i$ in the subscript for $ir$. That is, instead of $ir_{\pi,B_i}({\bf u})$,
we write $ir_{\pi,i}({\bf u})$. 
Also for a decision node ${\bf u}=(u',s')$, let $len({\bf u})$ be the length of $s'$.
With this in mind, let us first prove the following claim.
\begin{claim} \label{clm:boundirreg}
For each $0 \leq i \leq q$ and a decision node ${\bf u} \in V(B_i)$, 
$ir_{\pi,i}({\bf u}) \leq ir_{\pi}(B)-len({\bf u})$.
\end{claim}

\begin{proof}
By induction on $i$. 
For each decision node ${\bf u} \in B_0$, $len({\bf u})=0$.
Then, by Lemma \ref{def:simaux3},
$ir_{\pi,0}({\bf u}) \leq ir_{\pi}(B)=ir_{\pi}(B)-len({\bf u})$.  
Hence, the statement holds for $i=0$.
Assume that $i>0$. Let ${\bf u} \in V(B_{i-1}) \cap V(B_i)$ be a decision node.
By the third statement of Lemma \ref{lem:conjelim} and the induction assumption,
$ir_{\pi,i}({\bf u}) \leq ir_{\pi,i-1}({\bf u}) \leq ir_{\pi}(B)-len({\bf u})$. 
Let ${\bf u} \in V(B_i) \setminus V(B_{i-1})$ be a decision node. 
By the fourth statement of Lemma \ref{lem:conjelim}, there is ${\bf u}_0 \in {\bf B}_{i-1}$
such that $len({\bf u})=len({\bf u}_0)+1$ and 
$ir_{\pi,i}({\bf u}) \leq ir_{\pi,i-1}({\bf u}_0)-1$.
Hence, by the induction assumption,
$ir_{\pi,i}({\bf u}) \leq ir_{\pi}(B)-len({\bf u}_0)-1=ir_{\pi}(B)-len({\bf u})$
as required.
\end{proof}

It follows that for each decision node ${\bf u} \in V(B_q)$,
$len({\bf u}) \leq ir_{\pi}(B)$. Indeed, assume that this is not the case
for some ${\bf u} \in V(B_q)$. Then, by
Claim \ref{clm:boundirreg}, $ir_{\pi,i}({\bf u}) \leq ir_{\pi}(B)-len({\bf u})<0$
in contradiction to the definition of the irregularity index of a node. 
We conclude that the names of the nodes are uniquely determined by sequences
of at most $ir_{\pi}(B)+1$ elements of $B$, (the first coordinate plus a
sequence of at most $ir_{\pi}(B)$ conjunction nodes). 
Clearly, the number of such distinct names is upper-bounded by
$n^{ir_{\pi}(B)+1}$.
\end{proof}

\subsection{Proof of Theorem \ref{th:loweremb}} \label{sec:loweremb}


Let $f_0(s,x_1,x_2,y_1,y_2)$ be a Boolean function such that 
an assignment ${\bf a}$ to $\{s,x_1,x_2,y_1,y_2\}$ is satisfying 
if and only if the following holds.
\begin{enumerate}
\item If ${\bf a}(s)=0$ then ${\bf a}$ does not map both $x_1$ ad $x_2$ to zeroes
and ${\bf a}$ does not map both $y_1$ and $y_2$ to zeroes. 
\item If ${\bf a}(s)=1$ then ${\bf a}$ does not map both $x_1$ ad $x_2$ to $1$. 
and ${\bf a}$ does not map both $y_1$ and $y_2$ to $1$.
\end{enumerate}

For each $i \in \{0,1\}$, we denote by 
$D[x,y,i]$ an \textsc{obdd} querying variables $x$ and $y$
in the order listed.  $f(D[x,y,i])$ is true on all the assignments
but $\{(x,i),(y,i)\}$. 
Further on, let $D[s,x_1,x_2,y_1,y_2]$ be defined as a $\wedge_d$-\textsc{fbdd}
whose source $u$ is a variable node labelled with $s$.
Let $u_0$ and $u_1$ be the children of $u$ such that
$(u,u_0)$ is labelled with $0$ and $(u,u_1)$ is labelled with $1$. 
Each $u_i$ is a $\wedge_d$ node whose children 
are $D[x_1,x_2,i]$ and $D[y_1,y_2,i]$. 
All the $True$ sinks of the latter two are contracted into a single $True$ sink
and all the $False$ sinks are contracted into a single $False$ sink.
We observe that

\begin{equation} \label{eq:basicblock}
f_0(s,x_1,x_2,y_1,y_2)=\bigvee_{i \in \{0,1\}} s^i \wedge D[x_1,x_2,i] \wedge D[y_1,y_2,i]
\end{equation}
where $s^1=s$ and $s^0=\neg s$. 
The next lemma follows from 
\eqref{eq:basicblock} and by construction.

\begin{lemma} \label{lem:corblock}
$f(D[s,x_1,x_2,y_1,y_2])=f_0(x_1,x_2,y_1,x_2)$ 
\end{lemma} 

We now consider a variation $D^-[s,x_1,x_2,y_1,y_2]$
of $D[s,x_1,x_2,y_1,y_2]$ that differs from the former is that 
$True$ sinks are not all contacted together. 
In particular, we have the $x-True$ sink obtained by contracting 
$True$ sinks of graphs $D[x_1,x_2,i]$ and the $y-True$ sink
obtained by contracting together $True$ sinks of the graphs 
$D[y_1,y_2,i]$. 

We now define a $\wedge_d$-\textsc{fbdd} $B_n$
over variables $s_1,x_{1,1},x_{1,2},y_{1,1},y_{1,2}, \dots s_n.x_{n,1},x_{n,2},y_{n,1},y_{n,2}$. 
The building blocks for the graph 
are $D^-[s_i,x_{i,1},x_{i,2},y_{i,1},y_{i,2}]$ for each $i \in [n-1]$
and $D[s_n,x_{n,1},x_{n,2},y_{n,1},y_{n,2}]$. 
For each $i \in [n-1]$, we identify the $x-True$ sink of
$D[s_i,x_{i,1},x_{i,2},y_{i,1},y_{i,2}]$ with the source of 
$D[s_{i+1},x_{i+1,1},x_{i+1,2},y_{i+1,1},y_{i+1,2}]$. 
The remaining $True$ sinks are contacted into a single $True$ sink
and the remaining $False$ sinks are contracted into a single $False$ sink. 

In order to proceed, we need to upgrade our notation.
We denote the set $\{s_1, \dots, s_n\}$ by $S_n$ and the set
$\{s_i, \dots, s_n\}$ by $S^i_n$. 
Further on, we denote the sets $\{x_{1,1},x_{1,2}, \dots, x_{n,1},x_{n,2}\}$
and $\{y_{1,1},y_{1,2}, \dots, y_{n,1},y_{n,2}\}$ by $X_n$ and $Y_n$ respectively
and the sets $\{x_{i,1},x_{i,2}, \dots, x_{n,1},x_{n,2}\}$
and $\{y_{i,1},y_{i,2}, \dots, y_{n,1},y_{n,2}\}$ by $X^i_n$ and $Y^i_n$ respectively.
Next $\wedge_{i \in [n]} f_0(s_i,x_{i,1},x_{i,2},y_{i,1},y_{i,2})$
is denoted by $f_n(S_n,X_n,Y_n)$.
Finally, $\wedge_{j \in \{i, \dots, n\}} f_0(s_j,x_{j,1},x_{j,2},y_{j,1},y_{j,2})$
is denoted by $f^i_n(S^i_n,X^i_n,Y^i_n)$.

\begin{lemma} \label{lem:corwhole}
$f(B_n)=f_n(S_n,X_n,Y_n)$.
\end{lemma}

\begin{proof}
Let $u[1], \dots, u[n]$ be the respective sources of\\
$D^-[s_1,x_{1,1},x_{1,2},y_{1,1},y_{1,2}], \dots 
D^-[s_{n-1},x_{n-1,1},x_{n-1,2},y_{n-1,1},y_{n-1,2}]$
and $D[s_n,x_{n,1},x_{n,2},y_{n,1},y_{n,2}]$.

We prove by induction on $i$ going from $n$ down to $1$
that 
$f([B_n]_{u[i]})=f^i_n(S^i_n,X^i_n,Y^i_n)$. 
Since $[B_n]_{u[1]}=B_n$, this will imply the lemma. 
For $i=n$, this is just Lemma \ref{lem:corblock}.
So, we assume that $i<n$. 

For $a \in \{0,1\}$, let $w_a$ be the source
of $D[x_{i,1},x_{i,2},a]$. It is immediate from the induction assumption 
that

\begin{equation} \label{eq:indup1}
f([B_n]_{w_a})=f(D[x_{i,1},x_{i,2},a])\wedge f^{i+1}_n(S^{i+1},X^{i+1}_n,Y^{i+1}_n)
\end{equation}

Further on, for each $a \in \{0,1\}$, let 
$v_a$ be the child of $u[i]$ such that 
$(u[i],v_a)$ is labelled with $a$. 
It follows by construction and from \eqref{eq:indup1}
that 

\begin{equation} \label{eq:indup2}
f([B_n]_{v_a})=f(D[x_{i,1},x_{i_2},a])\wedge f(D[y_{i,1},y_{i,2},a])\wedge 
f^{i+1}_{n}(S^{i+1}_n,X^{i+1}_n,Y^{i+1}_n)
\end{equation}

It then follows from \eqref{eq:indup2} and by construction that
\begin{equation} \label{eq:indup3}
f([B_n]_{u[i]}=(\bigvee_{a \in \{0,1\}} s_i^a \wedge f(D[x_{i,1},x_{i,2},a])\wedge f(D[y_{i,1},y_{i,2},a])) \wedge f^{i+1}_{n}(S^{i+1}_n,X^{i+1}_n,Y^{i+1}_n)
\end{equation}

Combining \eqref{eq:basicblock}, \eqref{eq:indup2}, and \eqref{eq:indup3}
we observe that 

\begin{equation} \label{eq:indup4}
f([B_n]_{u[i]}=f_0(s_i,x_{i,1},x_{i,2},y_{i,1},y_{i,2}) \wedge 
f^{i+1}_{n}(S^{i+1}_n,X^{i+1}_n,Y^{i+1}_n)
\end{equation}

It  only remains to observe that 
$f_0(s_i,x_{i,1},x_{i,2},y_{i,1},y_{i,2}) \wedge f^{i+1}_{n}(S^{i+1}_n,X^{i+1}_n,Y^{i+1}_n)=
f^{i}_{n}(S^{i}_n,X^{i}_n,Y^{i}_n)$. 
\end{proof}



In order to proceed, we need an easily verifiable lemma
addressing the situation where 
an \textsc{obdd} is placed on top of $\wedge_d$-\textsc{fbdd}. 

\begin{lemma} \label{lem:auxembed2}
Let $\pi$ be a linear order. 
Let $B_1$ be an \textsc{obdd} with two sinks respecting $\pi$. 
Further on, let $B_2$ be a $\wedge_d$-\textsc{fbdd} embeddable into $\pi$. 
Assume that all of $\var(B_1)$ occur in $\pi$ before all of 
$\var(B_2)$. 
Let $B$ be a $\wedge_d$-\textsc{fbdd} obtained by identifying the positive sink
of $B_1$ with the source of $B_2$ and contracting the negative sinks of 
$B_1$ and $B_2$.
Then $B$  is embeddable into $\pi$ with $ir_{\pi}(B)=ir_{\pi}(B_2)$. 
\end{lemma}

Let $\pi_n=s_1,x_{1,1},x_{1,2} \dots s_n,x_{n,1},x_{n,2},y_{n,1},y_{n,2}, \dots y_{1,1},y_{1,2}$
and let $\pi^i_n=s_i,x_{i,1},x_{i,2} \dots s_n,x_{n,1},x_{n,2},y_{n,1},y_{n,2}, \dots y_{i,1},y_{i,2}$

\begin{lemma} \label{lem:mainembed}
$B_n$ is embeddable in $\pi_n$ with irregularity index $n$. 
\end{lemma}

\begin{proof}
Let $u[1], \dots, u[n]$ be as in the proof of 
Lemma \ref{lem:corwhole}.
For each $i$ from $n$ down to $1$ we prove that 
$[B_n]_{u[i]}$ is embeddable in $\pi_n$ with the irregularity index $n-i+1$. 
The lemma will then follow as
$[B_n]_{u[1]}=B_n$. 

Assume first that $i=n$. 
For each $a \in \{0,1\}$,
let $w_{x,a}$ be the source of 
$D[x_{n,1},x_{n,2},a]$ and 
let $w_{y,a}$ be the source  of
$D[y_{n,1},y_{n,2},a]$. 
We observe that since $(x_{n,1},x_{n,2})$
and $(y_{n,1},y_{n,2})$ are both intervals of $\pi_n$, 
both $[B_n]_{w_{x,a}}$ and $[B_n]_{w_{y,a}}$ 
are embeddable  into $\pi_n$ with irregularity index $0$
by Lemma \ref{lem:auxembed1}.
Next, let $v_a$ be the $\wedge_d$-node whose 
children are $w_{x,a}$ and $w_{y,a}$. 
Since $(x_{n,1},x_{n,2})$
and $(y_{n,1},y_{n,2})$  are disjoint intervals of $\pi_n$,
we conclude that $[B_n]_{v_a}$ is embeddable into $\pi_n$
with irregularity index $1$. 
We node that $\var([B_n]_{v_a})$ form a subinterval of $\pi_n$
occurring after $s_n$ hence a subinterval of $\pi_n^{>s_n}$.
We conclude that, by construction $[B_n]_{u[n]}$ is embeddable 
into $\pi_n$ with irregularity index $1$. 

Assume now that $i<n$. 
We observe first that the order of $\var([B_n]_{u[i+1]})]$
under $\pi$ is $\pi_n^{i+1}$, 
clearly an interval of $\pi_n$.
For each $a \in \{0,1\}$, we now reuse the notation
for $w_{x,a}$ to the source of 
$D[x_{i,1},x_{i,2},a]$ and 
let $w_{y,a}$ be the source  of
$D[y_{i,1},y_{i,2},a]$.
Since $(x_{i,1},x_{i,2})$ occurs in $\pi_n$
before $\pi^{i+1}_n$ we argue by construction, the 
induction assumption and Lemma \ref{lem:auxembed2}
that $[B_n]_{w_{x,a}}$ is embeddable into $\pi_n$ with irregularity index  $n-i$. 
Further on, we reuse $v_a$ to be the $\wedge_d$ node connecting
$w_{x,a}$ and $w_{y,a}$ . 
We observe that since $(y_{i,1},y_{i,2})$ occurs in $\pi_n$
after $(x_{i,1},x_{i_2})+\pi_{i+1}$
$[B_n]_{v_a}$ is embeddable into $\pi_n$
with irregularity index $n-i+1$.
We further note that $Var([B_n]_{v_a})$
form an interval of $\pi_n$ occurring after $s_i$
and hence an interval of $\pi^{>s_i}_n$. 
Therefore, by construction $[B_n]_{u[i]}$ is embeddable
into $\pi_n$ with the irregularity index $n-i+1$
as required.
\end{proof}

\begin{lemma} \label{lem:lowerapp}
Let $B$ be an \textsc{obdd} representing $f_n=f_n(S_n,X_n,Y_n)$ and respecting $\pi_n$. 
Then $|B| \geq 2^n$.
\end{lemma}

\begin{proof}
Let ${\bf S}$ be the set of all the assignments to 
$S_n \cup X_n$. 
that map each $x_{i,1}$ to $0$ and each $x_{i,2}$ to $1$. 
\begin{claim} \label{clm:lowerapp}
For any two ${\bf a}_1,{\bf a}_2 \in {\bf S}$, 
$f_n|_{{\bf a}_1} \neq f_n|_{{\bf a}_2}$. 
\end{claim}

Since $\pi_n$ places all the variables in ${\bf S}$ before all
the variables of $\var(f_n) \setminus {\bf S}$, 
it follows from the claim and a well known facts about 
\textsc{obdd}s that $|B| \geq |{\bf S}|$. 
By construction, $|{\bf S}|=2^n$ so the lemma follows. 
It this remains to prove the claim.

Since ${\bf a}_1 \neq {\bf a}_2$, there is $i \in [n]$
such that say ${\bf a}_1(s_i)=0$ whereas ${\bf a}_1(s_1)=1$. 
It follows that there is no satisfying assignment for 
$f_n|_{{\bf a}_1}$ that assigns both $y_{i,1}$ and $y_{i,2}$
with zeroes while there is such an assignment for 
$f_n|_{{\bf a}_2}$. 
\end{proof}


\begin{proof} (Of Theorem \ref{th:loweremb})
For each $n$ we consider the function $f_n$ and the order $\pi_n$ of its variables. 
We note that, by construction, $B_n$ is of size linear in $n$. 
Furthermore, it follows from combination of Lemmas \ref{lem:mainembed}
and \ref{lem:lowerapp}  that $f(B_n)=f_n$ and $B_n$ respects $\pi_n$
with irregularity index $n$. 
On the other hand, it follows from Lemma \ref{lem:lowerapp}
that if $f_n$ is represented by an \textsc{obdd} respecting $\pi_n$
then such an \textsc{obdd} must be of size at least $2^n$. 
\end{proof}


\section{The effect of structuredness on Decision \textsc{dnnf}} \label{sec:struct}

\textsc{dnnf},
being a generalization of \textsc{fbdd}, does not enable efficient $Apply$ (conjunction) operation. 
A restriction on \textsc{dnnf} that enables carrying $Apply$ efficiently is \emph{structuredness} \cite{StructDNNF}. 
The purpose of this section is to define two different forms of structuredness for
Decision \textsc{dnnf}s and discuss the power of these models.  

Recall that a \textsc{dnnf} is just a deMorgan circuit with all the $\wedge$ gates being 
decomposable. 
Assume that both $\wedge$ and $\vee$ gates have fan-in $2$. 
To define the structuredness, we need to define first a \emph{variable tree}
or a \emph{vtree}.

\begin{definition}
A \emph{vtree} $VT$ over a set $V$ of variables is a rooted binary tree 
with every non-leaf node having two children and whose leaves
are in a bijective correspondence with $V$. 
The function $\var$ naturally extends to vtrees. 
In particular, for each $x \in V(VT)$, $\var(VT_x)$ is the set of variables 
labelling the leaves of $VT_x$. 
\end{definition}

\begin{definition}
Let $D$ be a \textsc{dnnf}
A $\wedge_d$-node $u$ of $D$ 
with inputs $u_1$ and $u_2$ 
\emph{respects} a vtree $VT$ if there is a node $x$ of $VT$ with children $x_1$ and $x_2$ such that
$\var(D_{u_1}) \subseteq \var(VT_{x_1})$ and $\var(D_{u_2}) \subseteq \var(VT_{x_2})$.
A \textsc{dnnf} $D$ over $V$ \emph{respects} $VT$ if each $\wedge_d$-node of $D$ respects $VT$. 
A \textsc{dnnf} $D$ is \emph{structured} if it respects a \emph{vtree}. 
\end{definition}

To apply the structuredness to Decision \textsc{dnnf}, we need to be aware of two equivalent 
definitions of the model: as a special case of \textsc{dnnf} and as a generalization of \textsc{fbdd}. 
So far, we have used the latter definition which we referred to as $\wedge_d$-\textsc{fbdd}. 
The former definition states that a Decision \textsc{dnnf} is a \textsc{dnnf} where each 
$\vee$ node $u$ obeys the following restriction.
Let $u_0$ and $u_1$ be the inputs of $u$. Then they both are $\wedge_d$-nodes
and there is a variable $x \in \var(D)$ such that one of inputs of $u_0$ is an
input gate labelled with $\neg x$ and one of the inputs of $u_1$ is an input gate labelled with $x$. 
Then a \emph{structured} decision \textsc{dnnf} is just a $\wedge_d$-\textsc{fbdd} 
whose representation as a Decision \textsc{dnnf}
respects  some \emph{vtree}. 
As the \textsc{fpt} upper bound of \cite{DesDNNF}, in fact, applies to 
structured Decision \textsc{dnnf}s, the following holds.

\begin{theorem} \label{decdnnfstructfpt} 
A \textsc{cnf} $\varphi$ of primal treewidth at most $k$ can be represented as a 
structured decision \textsc{dnnf} of size at most $2^k \cdot |\varphi|$. 
\end{theorem}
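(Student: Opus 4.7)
The plan is to adapt the standard tree-decomposition-based compilation of Darwiche~\cite{DarwicheJACM} (as refined in \cite{DesDNNF} for the structured setting) to obtain a structured Decision \textsc{dnnf}. First, take a tree decomposition $(T,\mathbf{B})$ of the primal graph of $\varphi$ of width at most $k$, and convert it to a \emph{nice} rooted binary decomposition with $O(|\var(\varphi)|)$ nodes, where every node is of type leaf, introduce, forget, or join. Assign each clause $C$ of $\varphi$ to some node $t_C$ with $\var(C)\subseteq \mathbf{B}(t_C)$; such a node exists by the containment property of tree decompositions.

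Next, I would construct the vtree $VT$ in parallel with $T$. At a join node $t$ with children $t_1,t_2$, the vtree branches into a left subtree holding exactly the variables forgotten in the subtree of $T$ rooted at $t_1$ and a right subtree holding those forgotten in the subtree rooted at $t_2$; variables of $\mathbf{B}(t)$ that are forgotten at or above $t$ are arranged as a right-leaning chain sitting above this branching. Along introduce and forget nodes $VT$ continues linearly. This guarantees that whenever two variable sets need to be combined by a $\wedge_d$ gate (namely, at join points or when separating a fully-determined clause from the rest), the corresponding partition already appears as a node of $VT$.

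For the \textsc{dnnf} itself I would build, for every $t\in V(T)$ and every assignment $\alpha$ to $\mathbf{B}(t)$, a gate $g_{t,\alpha}$ representing the restriction by $\alpha$ of the conjunction of all clauses assigned to the subtree of $T$ rooted at $t$, with the variables of $\mathbf{B}(t)$ eliminated. Leaves evaluate directly to a constant; introduce nodes reuse the child gate; a forget node removing $x$ is realized by a decision node on $x$ whose two children are $g_{t',\alpha\cup\{x=0\}}$ and $g_{t',\alpha\cup\{x=1\}}$, and any clause whose variables have all now been assigned is $\wedge_d$'ed in at this point; a join node $t$ with children $t_1,t_2$ simply sets $g_{t,\alpha}=g_{t_1,\alpha}\wedge_d g_{t_2,\alpha}$, which is decomposable because by the connectivity property the variables used outside $\mathbf{B}(t)$ in the two subtrees are disjoint, while $\mathbf{B}(t)$ is fixed by $\alpha$. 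The number of gates is $O(2^{k+1}\cdot|T|)=O(2^k\cdot|\varphi|)$, giving the claimed bound.

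The main obstacle is ensuring that every $\wedge_d$ introduced by the construction respects $VT$. The join-node conjunctions match a vtree branching by design of $VT$, but the delicate part is the $\wedge_d$s that insert a newly-satisfied clause: these must be placed high enough in the sub-\textsc{dnnf} that their variable partition corresponds to a genuine node of $VT$ (specifically, somewhere on the right-leaning spine above the relevant join). This forces the order in which forgotten variables are branched on within a sub-\textsc{dnnf} to follow the vtree spine, so the recursion on $(T,\mathbf{B})$ and the traversal of $VT$ must be carried out in lockstep; once this alignment is fixed, correctness, decomposability, read-onceness (each variable is branched on at most once since it is forgotten at a unique node), and the size bound all fall out routinely.
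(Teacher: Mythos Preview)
The paper does not actually prove this theorem: it simply states it as a consequence of the construction in \cite{DesDNNF}, noting that ``the \textsc{fpt} upper bound of \cite{DesDNNF}, in fact, applies to structured Decision \textsc{dnnf}s''. Your sketch is essentially the standard tree-decomposition compilation from \cite{DarwicheJACM,DesDNNF} that the paper is citing, so you are reproducing the referenced argument rather than diverging from the paper.

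One caveat worth tightening in your write-up: the size bound $2^k\cdot|\varphi|$ is stated in terms of $|\varphi|$, not $|\var(\varphi)|$, so you should be explicit that the nice tree decomposition has $O(|\varphi|)$ nodes (or otherwise account for the clause-insertion gates in the count). Also, your handling of clauses at forget nodes --- $\wedge_d$'ing in a fully-assigned clause --- needs care: at that point the clause evaluates to a constant, so the $\wedge_d$ is with a constant and trivially respects any vtree; you do not actually need the ``right-leaning spine'' machinery for those particular conjunctions, only for the join nodes. With that simplification the structuredness argument becomes cleaner.
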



Yet the model is quite restrictive and can be easily `fooled' 
into being \textsc{xp}-sized even with one long clause.
In particular, the following has been established in \cite{dnnf2017}.
\begin{theorem} \label{strongstruct1}
Let $G$ be a graph without isolated vertices. 
Let $\varphi^*(G)$ be the \textsc{cnf} obtained from 
$\varphi(G)$ by adding a single clause $\neg V(G)=\{\neg v| v \in V(G)\}$. 
Then the size of structured Decision \textsc{dnnf}
representing $\varphi^*(G)$ is at least $\Omega(\textsc{fbdd}(\varphi(G)))$. 

In particular, there is an infinite set of positive integers $k$
for each of them there is a class ${\bf G}_k$ of graphs of treewidth at most $k$
(and hence the $itw(\varphi^*(G)) \leq k+1$) and
the structured Decision \textsc{dnnf} representation of $\varphi^*(G)$ being of size at least.
$n^{\Omega(k)}$ where $n=|V(G)|$.
\end{theorem}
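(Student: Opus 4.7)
The plan is to reduce to the \textsc{fbdd} lower bound for $\varphi(G)$ on graphs of bounded treewidth from \cite{RazgonAlgo}. Since $\varphi^*(G)$ and $\varphi(G)$ agree on every assignment except the all-ones one, the sizes $\textsc{fbdd}(\varphi(G))$ and $\textsc{fbdd}(\varphi^*(G))$ are comparable up to additive $O(|V(G)|)$: from an \textsc{fbdd} for $\varphi^*(G)$ one obtains an \textsc{fbdd} for $\varphi(G)$ by duplicating the at most $n$ nodes along the unique computation path of the all-ones input and redirecting that path to the accept sink, with read-once preserved because the duplicates are used only on this single path. Hence it suffices to show that any structured Decision \textsc{dnnf} $D$ representing $\varphi^*(G)$ satisfies $|D| \geq \Omega(\textsc{fbdd}(\varphi^*(G)))$.

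The core intuition is that the long clause $\neg V$ cannot be factored across any vtree split $V = V_1 \sqcup V_2$: although $\neg V = \neg V_1 \vee \neg V_2$ formally, neither disjunct is enforceable on one side alone, so every $\wedge_d$ node of $D$ respecting such a split must coordinate which side witnesses the negation. This inhibits decomposable conjunction from compressing the representation below \textsc{fbdd} size. I would formalize this by adapting the unbreakable-set methodology of Section \ref{sec:obddapp} to the vtree. For each vtree node $x$, define a set of fooling assignments ${\bf g}$ to $V \setminus \var(VT_x)$ that set exactly one variable to $0$ and the rest to $1$; assign, via a structured analogue of Lemma \ref{lem:assignpref2}, a distinguished node $u({\bf g})$ of $D$ respecting $x$ to each such ${\bf g}$; and show that distinct ${\bf g}$ produce distinct $u({\bf g})$. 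Summing the sizes of the fooling sets across the vtree nodes then produces the claimed bound.

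The main obstacle is establishing the structured analogue of Lemma \ref{lem:assignpref2}: one must show that, after restriction by a fooling ${\bf g}$, the relevant sub-\textsc{dnnf} of $D$ exhibits an \textsc{fbdd}-like decomposition indexed by nodes respecting $x$, in which unbreakable subsets of $\var(VT_x)$ genuinely cannot be split. The key ingredient is the indecomposability of $\neg V$: any $\wedge_d$ node attempting to partition such an unbreakable set across its two inputs would, as explained above, be forced to split $\neg V$ as well, which is impossible. Once the analogue is in place, the distinctness of $u({\bf g}_1)$ and $u({\bf g}_2)$ for distinct fooling ${\bf g}_1, {\bf g}_2$ follows by an argument parallel to Lemma \ref{lem:neatcross4}, giving $|D| \geq \Omega(\textsc{fbdd}(\varphi^*(G))) \geq \Omega(\textsc{fbdd}(\varphi(G)))$, and the ``in particular'' claim then follows directly from the graph family of \cite{RazgonAlgo}.
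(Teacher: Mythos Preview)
The paper does not prove Theorem~\ref{strongstruct1}; it is quoted as a result established in \cite{dnnf2017}, so there is no in-paper argument to compare against. I will therefore assess your proposal on its own terms and against what the \emph{form} of the bound suggests the intended proof must be.

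Your first reduction step is fine: $\varphi(G)$ and $\varphi^*(G)$ differ on a single assignment, and an \textsc{fbdd} for one yields an \textsc{fbdd} for the other with $O(n)$ extra nodes. The real task is therefore to show that every structured Decision \textsc{dnnf} for $\varphi^*(G)$ has size $\Omega(\textsc{fbdd}(\varphi^*(G)))$.

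Here your proposal has a genuine quantitative gap. For each vtree node $x$ you build a fooling set of assignments to $V\setminus\var(VT_x)$ that set \emph{exactly one} variable to $0$; that set has size at most $|V\setminus\var(VT_x)|\le n$. The vtree has $2n-1$ nodes, so even granting injectivity of $u(\cdot)$ within each vtree node, your total count is $O(n^2)$. But the target bound is $\Omega(\textsc{fbdd}(\varphi(G)))$, which for the graphs of \cite{RazgonAlgo} is $n^{\Omega(k)}$---arbitrarily far above $n^2$. The fooling-set machinery of Section~\ref{sec:obddapp} achieves exponential bounds only because its fooling sets have size $2^{\Omega(|M|)}$; yours are linear. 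You also do not argue that the nodes $u({\bf g})$ attached to \emph{different} vtree nodes are distinct, so even the $O(n^2)$ count may double-count. Finally, nothing in the outline connects ``sum of fooling-set sizes over vtree nodes'' to the quantity $\textsc{fbdd}(\varphi(G))$ at all.

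The shape of the stated bound---$\Omega(\textsc{fbdd}(\varphi(G)))$ rather than an explicit combinatorial expression---strongly indicates that the intended argument is a \emph{simulation}, not a fooling-set count: one shows that the long clause $\neg V$ renders the explicit $\wedge_d$ nodes of any structured Decision \textsc{dnnf} for $\varphi^*(G)$ ineffective, so that the whole object can be rewritten as an \textsc{fbdd} of comparable size. Your own ``core intuition'' paragraph already isolates the key fact (no nontrivial vtree split can factor $\neg V$), but you then abandon it in favour of the fooling-set route. The right continuation is to push that intuition into a structural statement about the $\wedge_d$ nodes themselves and extract an \textsc{fbdd} directly.
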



In order to get an additional insight into the ability of  
$\wedge_d$-\textsc{fbdd} to efficiently represent \textsc{cnf}s of  
bounded incidence treewidth, we introduce an alternative
definition of structuredness, this time based on $\wedge_d$-\textsc{fbdd}. 

\begin{definition}
A $\wedge_d$-\textsc{fbdd}  $B$ \emph{respects} a \emph{vtree} $VT$
if each $\wedge_d$-node of $B$ respects $VT$. 
That is, unlike in the case of the `proper' structuredness, we do not impose
any restrictions on the decision nodes of $B$.
We say that $B$ is \emph{structured} if $B$ respects 
a \emph{vtree} $VT$. 
\end{definition}

\begin{remark}
Note that, even though Decision \textsc{dnnf} and $\wedge_d$-\textsc{fbdd} are the same model,
their structured versions are not. 
In  particular, a structured $\wedge_d$-\textsc{fbdd} is a generalization of \textsc{fbdd} 
as it does not impose any restriction on a $\wedge_d$-\textsc{fbdd} in the
absence of $\wedge_d$ nodes. As a result the property of efficient conjunction
held for structured \textsc{dnnf} does not hold for
structured $\wedge_d$-\textsc{fbdd}s. 
\end{remark}

We consider structured $\wedge_d$-\textsc{fbdd} a very interesting model for further study because 
it is a powerful special case of $\wedge_d$-\textsc{fbdd} which may lead to 
insight into size lower bounds of $\wedge_d$-\textsc{fbdd}s for \textsc{cnf}
of bounded incidence treewidth. In particular, we do not know whether such 
\textsc{cnf}s can be represented by \textsc{fpt}-sized 
structured $\wedge_d$-\textsc{fbdd}s. We know, however, 
that the instances that we used so far for deriving \textsc{xp}
lower bounds admit an \textsc{fpt}-sized representation as structured $\wedge_d$-\textsc{fbdd}s.
In fact, a much more general result can be established.  

\begin{theorem} \label{theor:pkcnf}
Let $\varphi$ be a \textsc{cnf} that can be turned into a \textsc{cnf}
of primal treewidth at most $k$ by removal of at most $p$ clauses.
Then $\varphi$ can be represented as a structured 
$\wedge_d$-\textsc{fbdd} of size at most $O(n^{p} \cdot 2^k)$
where $n=|\var(\varphi)|$. 
\end{theorem}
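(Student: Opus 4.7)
The plan is to reduce to Theorem \ref{decdnnfstructfpt} by enumerating the ``satisfaction patterns'' of the at most $p$ clauses whose removal witnesses bounded primal treewidth. Let $C_1, \ldots, C_{p'}$ (with $p' \leq p$) be those clauses and write $\varphi = \varphi' \wedge C_1 \wedge \cdots \wedge C_{p'}$ where $ptw(\varphi') \leq k$. Fix an order $\ell_{i,1}, \ldots, \ell_{i,n_i}$ on the literals of each $C_i$ and call a tuple $\vec{j} = (j_1, \ldots, j_{p'})$ with $j_i \in \{1, \ldots, n_i\}$ a \emph{pattern}. Each pattern induces the partial assignment $\alpha(\vec{j})$ that falsifies $\ell_{i,1}, \ldots, \ell_{i,j_i - 1}$ and satisfies $\ell_{i,j_i}$ for every $i$. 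Whenever $\alpha(\vec{j})$ is consistent, the pattern records which literal of each $C_i$ is the first to be true, so
$$\mathcal{S}(\varphi) \;=\; \bigsqcup_{\vec{j}} \bigl\{ \beta \supseteq \alpha(\vec{j}) : \beta \models \varphi'|_{\alpha(\vec{j})} \bigr\},$$
a disjoint union of at most $n^p$ pieces, each over a \textsc{cnf} $\varphi'|_{\alpha(\vec{j})}$ of primal treewidth at most $k$.

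Next the target structured $\wedge_d$-\textsc{fbdd} is assembled in two stages. The \emph{processing stage} sequentially determines $\vec{j}$: starting at the root, for $i = 1, \dots, p'$ in turn, we scan $\ell_{i,1}, \ell_{i,2}, \dots$ and for each one either branch deterministically (if $\var(\ell_{i,r})$ is already committed along the current path) or insert a fresh decision node on $\var(\ell_{i,r})$. At the first $\ell_{i,r}$ to evaluate to $1$ we commit $j_i = r$ and proceed to $C_{i+1}$; if all $n_i$ literals evaluate to $0$ we route to the $0$-sink. This produces a structure of $O(p \cdot n^p)$ nodes with no $\wedge_d$ gates, and by construction each variable is queried at most once along any root-to-sink path. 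At each of the resulting ``pattern nodes'' $v_{\vec{j}}$ I graft a structured Decision \textsc{dnnf} $D_{\vec{j}}$ for $\varphi'|_{\alpha(\vec{j})}$, supplied by Theorem \ref{decdnnfstructfpt} with size at most $2^k \cdot |\varphi'|$. Since $\var(D_{\vec{j}}) \subseteq \var(\varphi') \setminus \var(\alpha(\vec{j}))$, the concatenation stays read-once; the few variables of $\var(\varphi)$ absent from both stages are appended as trivial decision nodes so as to match $\var(B) = \var(\varphi)$.

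For structuredness, I fix a vtree $VT$ over $\var(\varphi)$ obtained by binarising a width-$k$ tree decomposition of the primal graph of $\varphi'$ (extended arbitrarily on variables outside $\var(\varphi')$). The Darwiche-style construction behind Theorem \ref{decdnnfstructfpt} is guided by such a decomposition, and the same $VT$ can be made to be respected by every $D_{\vec{j}}$ uniformly, because restriction by $\alpha(\vec{j})$ only deletes variables from bags and from the $\wedge_d$ inputs, preserving the containments $\var(D_{u_1}) \subseteq \var(VT_{x_1})$ and $\var(D_{u_2}) \subseteq \var(VT_{x_2})$ required by the definition of ``respects''. The processing stage contains no $\wedge_d$ nodes and so imposes no further constraint, hence the resulting object is a structured $\wedge_d$-\textsc{fbdd} of size $O(n^p \cdot 2^k)$ (absorbing polynomial factors in $|\varphi'|$ and $p$).

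The main technical obstacle is precisely the final step: verifying that a single vtree can be made to serve uniformly for all restrictions $\varphi'|_{\alpha(\vec{j})}$. This requires a careful inspection of the construction underlying Theorem \ref{decdnnfstructfpt} to check that the vtree derived from a tree decomposition of $\varphi'$ remains compatible after arbitrary assignments to a subset of its variables. The intuition is that a restriction can only collapse or remove subtrees of the vtree and never forces a variable to change its ancestors, but this has to be spelled out explicitly to make the proof rigorous.
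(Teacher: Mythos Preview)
Your approach is essentially the same as the paper's: the paper builds a decision tree $DT({\bf C})$ of size $O(n^p)$ for the $p$ long clauses (your ``processing stage''), grafts at each $true$ leaf a structured $\wedge_d$-\textsc{fbdd} for the restricted bounded-primal-treewidth part via Theorem~\ref{decdnnfstructfpt}, and observes that all $\wedge_d$ nodes lie in the grafted pieces and can be made to respect a single vtree. The only difference is cosmetic---the paper states the uniform-vtree fact as an unproved Claim~\ref{clm:samestruct}, whereas you sketch the reason (restriction only shrinks the variable sets at $\wedge_d$ inputs, preserving the required containments), which is indeed the right justification.
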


In the rest of this section, we provide a proof of Theorem \ref{theor:pkcnf}.

Let ${\bf g}$ be an assignment to a subset of variables 
of a \textsc{cnf} $\psi$. Let $\psi[{\bf g}]$ be the \textsc{cnf} obtained 
from $\psi$ by removal of all the clauses satisfied by ${\bf g}$
and removal of the occurrences of $\var({\bf g})$ from the 
remaining clauses. 

We note that $\psi[{\bf g}]$ is not necessarily the same function as $\psi|_{\bf g}$
as the set of the variables of the former may be a strict subset
of the set of variables of the latter. 
Yet, it is not hard to demonstrate the following. 

\begin{proposition}\label{clm:twred31}
If $\var(\psi[{\bf g}])=\var(\psi|_{\bf g})$
then $\mathcal{S}(\psi[{\bf g}])=\mathcal{S}(\psi|_{\bf g})$. 
Otherwise, 
$\mathcal{S}(\psi|_{\bf g})=\mathcal{S}(\psi[{\bf g}]) \times Y^{\{0,1\}}$ 
where $Y=\var(\psi|_{\bf g}) \setminus \var(\psi[{\bf g}])$.
\end{proposition}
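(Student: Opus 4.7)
The plan is to prove the proposition by establishing a clause-by-clause correspondence between $\psi$ and $\psi[{\bf g}]$, and then reading off both cases from that correspondence. The starting observation is that by definition of functional restriction, $\var(\psi|_{\bf g})=\var(\psi)\setminus\var({\bf g})$, while $\var(\psi[{\bf g}])\subseteq\var(\psi)\setminus\var({\bf g})$ consists precisely of those variables of $\psi$ outside $\var({\bf g})$ that continue to occur in some clause not satisfied by ${\bf g}$. Hence the difference $Y=\var(\psi|_{\bf g})\setminus\var(\psi[{\bf g}])$ consists of variables appearing in $\psi$ only inside clauses already satisfied by ${\bf g}$.

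The technical heart of the argument is a single lemma: for every assignment ${\bf a}$ over $\var(\psi)\setminus\var({\bf g})$, the assignment ${\bf g}\cup{\bf a}$ satisfies $\psi$ if and only if $\mathrm{Proj}({\bf a},\var(\psi[{\bf g}]))$ satisfies $\psi[{\bf g}]$. I would prove this clause by clause. For each clause $C$ of $\psi$, there are two cases. If ${\bf g}$ already satisfies $C$, then $C$ is discarded in forming $\psi[{\bf g}]$, and ${\bf g}\cup{\bf a}$ automatically satisfies $C$ — so $C$ contributes no constraint on either side. Otherwise, every literal of $C$ whose variable lies in $\var({\bf g})$ is falsified by ${\bf g}$; consequently ${\bf g}\cup{\bf a}$ satisfies $C$ iff some literal of $C$ whose variable lies outside $\var({\bf g})$ is satisfied by ${\bf a}$, which is exactly the statement that the corresponding reduced clause in $\psi[{\bf g}]$ (obtained by deleting the literals over $\var({\bf g})$) is satisfied by the relevant projection of ${\bf a}$.

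Once this lemma is in hand, the two cases of the proposition are immediate. In the case $\var(\psi[{\bf g}])=\var(\psi|_{\bf g})$, the projection appearing in the lemma is the identity, so $\mathcal{S}(\psi[{\bf g}])=\mathcal{S}(\psi|_{\bf g})$. In the general case, the satisfaction of $\psi[{\bf g}]$ by $\mathrm{Proj}({\bf a},\var(\psi[{\bf g}]))$ is completely insensitive to how ${\bf a}$ assigns the variables in $Y$, which gives $\mathcal{S}(\psi|_{\bf g})=\mathcal{S}(\psi[{\bf g}])\times Y^{\{0,1\}}$ directly.

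There is no real obstacle, only one edge case worth a sentence of care: if some clause $C$ of $\psi$ has all its variables in $\var({\bf g})$ and is falsified by ${\bf g}$, then the reduction produces an empty clause in $\psi[{\bf g}]$, so $\mathcal{S}(\psi[{\bf g}])=\emptyset$; but in that same situation $\psi|_{\bf g}$ is identically $0$, so $\mathcal{S}(\psi|_{\bf g})=\emptyset$ as well, and the claimed equality holds trivially on both sides. With this edge case noted, the clause-by-clause lemma and the two-case readoff complete the proof.
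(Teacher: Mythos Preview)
Your proof is correct. The paper does not actually supply a proof of this proposition; it only introduces it with the phrase ``it is not hard to demonstrate the following'' and then states it without argument. Your clause-by-clause lemma together with the projection read-off is precisely the straightforward verification the paper is alluding to, and your handling of the empty-clause edge case is appropriate.
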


In order to prove Theorem \ref{theor:pkcnf},
we identify a set ${\bf C}$ of at most $p$ 'long' clauses
whose removal from $\varphi$ results in a \textsc{cnf} of primal treewidth
at most $k$. An important part of the resulting structured $\wedge_d$-\textsc{fbdd}
is an $O(n^p)$-sized decision tree representation of ${\bf C}$.  
We are now going to define a decision tree and to demonstrate the possibility of 
such a representation. 

\begin{definition}
A decision tree is a rooted binary tree where 
each leaf node is labeled by either $True$ or $False$
and each non-leaf node is labelled with a variable. 
Moreover, for each non-leaf node one outgoing edge is labelled with $1$,
the other with $0$. A decision tree must be read-once, that is the same 
variable does not label two nodes along a directed path. 
Then each directed path $P$ corresponds to the assignment ${\bf a}(P)$ 
in the same way as we defined it for $\wedge_d$-\textsc{fbdd}. 

For a decision tree $DT$, $\var(DT)$ is the set of variables labeling
the non-leaf nodes of $DT$. We say that $DT$ \emph{represents} 
a function $f$ is $\var(DT) \subseteq \var(f)$ and the following two statements hold. 
\begin{enumerate}
\item Let $P$ be a root-leaf path of $DT$ ending with a node labeled by $True$. 
Then $f({\bf g})=1$ for each ${\bf g}$ such that $\var({\bf g})=\var(f)$
and ${\bf a}(P) \subseteq {\bf g}$.  
\item Let $P$ be a root-leaf path of $DT$ ending with a node labeled by $False$. 
Then $f({\bf g})=0 $ for each ${\bf g}$ such that $\var({\bf g})=\var(f)$
and ${\bf a}(P) \subseteq {\bf g}$.  
\end{enumerate}
\end{definition}

\begin{lemma} \label{lm:dtxp}
A \textsc{cnf}
of $p$ clauses can be represented as a decision tree $DT(\varphi)$ of
size $O(n^p)$ where $n=|\var(\varphi)|$. 
\end{lemma}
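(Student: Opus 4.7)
The plan is to induct on $p$, building $DT(\varphi)$ by processing one clause at a time. The base case $p=0$ is trivial: an empty CNF is the constant-$1$ function, so a single leaf labelled $true$ suffices.

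For the inductive step, I would pick any clause $C = \ell_1 \vee \dots \vee \ell_k$ of $\varphi$ (noting $k \leq n$ since a well-formed clause has distinct variables) and construct a \emph{failure chain} of $k$ decision nodes labelled with $\var(\ell_1), \dots, \var(\ell_k)$ in order. At the $i$-th node, the outgoing edge whose label makes $\ell_i$ \emph{false} continues to the $(i+1)$-th node of the chain, while the edge whose label makes $\ell_i$ \emph{true} enters a decision tree $DT_i$ built recursively for the CNF $\varphi_i := (\varphi \setminus \{C\})[\{\ell_i\}]$, i.e.\ the remaining clauses restricted by the current partial assignment. The failing edge out of the $k$-th node leads to a $false$ leaf.

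To verify correctness I would walk along an arbitrary root-leaf path $P$. Either $P$ runs through the whole chain, in which case ${\bf a}(P)$ falsifies every literal of $C$ and hence falsifies $\varphi$, matching the $false$ label; or $P$ exits the chain at step $i$, in which case ${\bf a}(P)$ satisfies $C$ via $\ell_i$, and by the inductive hypothesis $DT_i$ correctly represents $\varphi_i$, so together with Proposition \ref{clm:twred31} the leaf label matches $\varphi({\bf g})$ for every extension ${\bf g}$ of ${\bf a}(P)$ to $\var(\varphi)$. For the size, writing $T(p)$ for the worst-case bound over CNFs with $p$ clauses and at most $n$ variables, the construction yields $T(p) \leq k \cdot T(p-1) + k + 1 \leq n \cdot T(p-1) + n + 1$ with $T(0)=1$, which solves to $T(p) = O(n^p)$.

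The only subtle point that I would need to check carefully is read-onceness: each recursive subtree $DT_i$ must avoid re-querying any variable already appearing on the path from the root to $DT_i$'s root. This is automatic from the construction because (a) the $k$ chain variables are distinct, and (b) the restriction operation used to define $\varphi_i$ removes the already-assigned variables from the recursive subproblem, so no recursive query can ever repeat a variable already seen on the current path. Once this observation is in place the remainder of the argument is routine.
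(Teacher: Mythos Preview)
Your approach is essentially the paper's: build a chain of decision nodes along one clause $C$, terminate in a $false$ leaf if every literal of $C$ is falsified, and at each side-exit recurse on the residual \textsc{cnf} with one fewer clause.

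One point needs correction. You define $\varphi_i := (\varphi \setminus \{C\})[\{\ell_i\}]$, restricting only by the single literal $\ell_i$, but then describe this as ``restricted by the current partial assignment'' and rely on that description in your read-onceness argument (b). These do not agree: at the $i$-th exit the chain has already fixed $\var(\ell_1),\dots,\var(\ell_{i-1})$ (to their falsifying values), and restricting only by $\{\ell_i\}$ does not remove those variables from the subproblem, so $DT_i$ may well re-query them and violate read-onceness. The paper restricts by the full path assignment ${\bf g}_i$ (covering all of $\var(\ell_1),\dots,\var(\ell_i)$), which is precisely what your informal description and your argument (b) actually need. With $\varphi_i := \varphi[{\bf g}_i]$ in place of your formula, the proof is complete and coincides with the paper's.
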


\begin{proof}
If $\varphi$ has no clauses then the resulting decision tree consists of a single $true$
node. If $\varphi$ has an empty clause then the decision tree consists of a single
$false$ node. 

Otherwise, pick a clause $C$ with variables $x_1, \dots, x_r$.
We define $DT(\varphi)$ as having a root-leaf path $P$ consisting of vertices
$u_1, \dots, u_{r+1}$ where $u_1, \dots, u_r$ are respectively labelled with $x_1, \dots, x_r$. 
Furthermore for each $1 \leq i \leq r$, the outgoing edge of $u_i$ is labelled with $0$ if $x_i \in C$
and with $1$ if $\neg x_i \in C$. Clearly, ${\bf a}(P)$ falsifies $C$,
hence we label $u_{r+1}$ with $False$. 

Next, for each $1 \leq i \leq r$, we introduce a new neighbour $w_i$ of $u_i$
and, of course, label the edge $(u_i,w_i)$ with the assignment of $x_i$ as in $C$. 
Let $P_i$ be the path from the root to $w_i$ and let ${\bf g}_i={\bf g}(P_i)$ 
Clearly, ${\bf g}_i$ satisfies $C$. Therefore $\varphi[{{\bf g}_i}]$ can be expressed as a \textsc{cnf}
of at most $p-1$ clauses. Make each $w_i$ the root of $DT(\varphi[{{\bf g}_i}])$.
Clearly, the resulting tree represents $\varphi$.  
By the induction assumption, the trees $DT(\varphi[{{\bf g}_i}])$ are of size 
at most $n^{p-1}$
and there are at most $|C| \leq n$ such trees. Hence the lemma follows.  
\end{proof}

\begin{proof} {\bf (of Theorem \ref{theor:pkcnf})}
Let ${\bf C}$ be at most $p$ clauses of $\varphi$ whose 
removal makes the remaining \textsc{cnf} to be of primal
treewidth at most $k$. 

Let $DT({\bf C})$ be the decision tree of size $O(n^p)$ 
representing ${\bf C}$ guaranteed to exist by Lemma \ref{lm:dtxp}.
Let $\psi=\varphi \setminus {\bf C}$. 
Since structured Decision \textsc{dnnf} is a special 
case of structured $\wedge_d$-\textsc{fbdd},
Theorem \ref{decdnnfstructfpt} holds for structured $\wedge_d$-\textsc{dnnf}.
In fact, it is not hard to establish the following strengthening. 

\begin{claim} \label{clm:samestruct}
There is a vtree $VT$ such that the following holds.  
Let $P$ be a root-leaf path of $DT({\bf C})$.
Then there is a structured $\wedge_d$-\textsc{fbdd} $B_P$ respecting $VT$ and representing $\psi[{\bf a}(P)]$  
and of size $O(2^k \cdot |var(\psi)|)$.
\end{claim}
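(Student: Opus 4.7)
The plan is to build a single vtree $VT$ from a tree decomposition of the primal graph of $\psi$, and then apply the standard tree-decomposition-based DNNF construction (used to prove Theorem \ref{decdnnfstructfpt}) to each restriction $\psi[{\bf a}(P)]$ independently. The crucial observation is that the primal graph of $\psi[{\bf a}(P)]$ is a subgraph of the primal graph of $\psi$ (removing assignments and the clauses they satisfy can only delete edges), so the same tree decomposition, and hence the same vtree, is compatible with every $\psi[{\bf a}(P)]$.

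First, I would take a tree decomposition $(T, {\bf B})$ of the primal graph of $\psi$ of width at most $k$, which exists by hypothesis, normalise it into a rooted binary form with $O(|\var(\psi)|)$ bags, and convert it into a vtree $VT$ over $\var(\psi)$ by attaching each variable $v$ as a leaf below a node of $T$ whose bag contains $v$; this is done in the standard fashion so that for every $t \in V(T)$ there is an internal node $x_t \in V(VT)$ whose leaves are exactly the variables appearing in the subtree of $T$ rooted at $t$.

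Second, for each root-leaf path $P$ of $DT({\bf C})$ I would construct $B_P$ by the standard dynamic-programming recursion on $T$. At each bag $t$, one maintains a table indexed by the at most $2^{k+1}$ assignments to ${\bf B}(t)$ that are locally consistent with the clauses of $\psi[{\bf a}(P)]$ whose variable set is contained in ${\bf B}(t)$, each entry being a sub-$\wedge_d$-\textsc{fbdd} representing the corresponding restricted function on the variables appearing in the subtree below $t$. Moving upward, sub-$\wedge_d$-\textsc{fbdd}s coming from sibling children are combined with $\wedge_d$ nodes (whose variable sets live in the subtrees of $VT$ rooted at the children of $x_t$, so they respect $VT$), and Shannon-decomposition decision nodes are introduced for the variables that first appear at $t$. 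This produces $O(2^k)$ nodes per bag and hence total size $O(2^k \cdot |\var(\psi)|)$; read-onceness follows from the connectivity property of tree decompositions, since each variable is introduced at a unique node of $T$.

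The main obstacle is making the single vtree $VT$ serve all paths $P$ simultaneously, given that $\var(\psi[{\bf a}(P)])$ may be a strict subset of $\var(\psi)$. This is handled by noting that respecting a vtree only requires $\var(B_{u_1}) \subseteq \var(VT_{x_1})$ and $\var(B_{u_2}) \subseteq \var(VT_{x_2})$ at each $\wedge_d$ node, so a structured $\wedge_d$-\textsc{fbdd} whose variables form a subset of $\var(VT)$ automatically respects $VT$. The other point requiring care is that the construction actually produces $\wedge_d$-\textsc{fbdd}s (i.e. Decision DNNFs) rather than general structured DNNFs; this follows because the only $\vee$ nodes introduced are Shannon decompositions on single variables, matching the $\wedge_d$-\textsc{fbdd} definition.
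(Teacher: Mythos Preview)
Your proposal is correct and is precisely the natural elaboration of what the paper leaves implicit: the paper does not prove the claim at all, merely stating it as a ``not hard'' strengthening of Theorem~\ref{decdnnfstructfpt} (the \cite{DesDNNF} construction). Your key observation---that the primal graph of $\psi[{\bf a}(P)]$ is a subgraph of the primal graph of $\psi$, so a single tree decomposition (and hence a single vtree) serves all restrictions---is exactly the point needed to justify the uniform-$VT$ strengthening, and your handling of the variable-subset issue via the subset clauses in the definition of ``respects'' is the right way to close the argument.
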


We construct a structured $\wedge_d$-\textsc{fbdd} $B$ representing $\varphi$
of size at most $O(n^p2^k \cdot |\var(\varphi))$ in the following three stages. 
\begin{enumerate}
\item For each root-leaf path $P$ of $DT({\bf C})$ that ends at a $True$ sink,
remove the $True$ label off the sink and identify the sink with the 
source of $B_P$ as per Claim \ref{clm:samestruct}. 
\item Contract all the $True$ sinks of the resulting \textsc{dag} into 
a single $True$ sink and all the $False$ sinks into a single $False$ sink. 
\item Let $B'$ be the \textsc{dag} obtained as a result of the above two stages. 
Let $X=\var(\varphi) \setminus \var(B')$. 
If $X=\emptyset$ we set $B=B'$. 
Otherwise, let $x_1, \dots, x_a$ be the variables of $X$.
Then $B$ is obtained from $B'$ by introducing new nodes $u_1, \dots, u_a$
and setting $u_{a+1}$ to be the source of $B'$. For each $1 \leq i \leq a$, 
label $u_i$ with $x_i$ and introduce two edges from $u_{i}$ to $u_{i+1}$,
one labelled with $1$, the oth er labelled with $0$.
\end{enumerate}

It is not hard to see that $B$ is indeed a $\wedge_d$-\textsc{fbdd}
representing $\varphi$ and with the required 
upper bound on its size. We further note that each $\wedge_d$ node 
of $B$ is a $\wedge_d$ node of some $B_P$ and hence respects $VT$
by construction. 
 
\end{proof}

\section{Conclusion} \label{sec:concl}
The main purpose of this paper is obtaining better 
understanding of the power of $\wedge_d$-\textsc{fbdd}
in terms of the representation of \textsc{cnf}s of
bounded incidence treewidth. We believe that the next important step forward
is understanding the complexity of structured $\wedge_d$-\textsc{fbdd}
representing \textsc{cnf}s of bounded incidence treewidth. 
Recall that Theorem \ref{theor:pkcnf} present an  upper bound that 
is \textsc{fpt} in $k$ but \textsc{xp} in $p$. 
A natural question for further consideration is whether we can improve the 
upper bound so that it is \textsc{fpt}.
We conjecture that this is not the case, a formal statement of the conjecture 
is provided below. 

\begin{conjecture} \label{conj:structured}
There are constants $c$ and $d$ such that the following holds. 
For each sufficiently large $k$, there is an infinite class ${\bf \Phi}_k$
of \textsc{cnf}s $\varphi$ so  that $\varphi$ can be turned into a \textsc{cnf}
of primal treewidth $ck$ by removal of $k$ clauses and such that
a representation of $\varphi$ as a structured $\wedge_d$-\textsc{fbdd}
is of size at least $n^{d \cdot k}$ where $n=|Var(\varphi)|$. 
\end{conjecture}

We believe a recent result \cite{Razgonimbal} will be useful for resolving Conjecture \ref{conj:structured}.
In \cite{Razgonimbal}, we consider $\wedge_d$-\textsc{fbdd} with \emph{imbalanced} gates.
Roughly speaking, each $\wedge_d$-gate has only one input depending on more than
$n^{\alpha}$ inputs for some constant $0<\alpha<1$ fixed  for a specific $\wedge_d$-\textsc{fbdd}.
We establish a lower bound $n^{\Omega(1-\alpha) \cdot k}$ for \textsc{cnf} of \emph{primal treewidth}. 
Intuitively, this means that Conjecture \ref{conj:structured} can be considered only for balanced
\emph{vtrees}. In particular, we believe that if an \textsc{xp} lower bound is obtained for balanced \emph{vtrees}
the lower bound can be combined with the lower bound in \cite{Razgonimbal} to form a `win-win' argument. 

We believe that the \textsc{xp} lower bound for $\wedge_d$-\textsc{obdd} has potential to be
`transferred  to the area of proof complexity. Indeed, a similar `primal vs incidence treewidth' dilemma
is known for the resolution proof system: it is easy to demonstrate existence of \textsc{fpt}
sized resolution refutation parameterized by the \emph{primal} treewidth of the input \textsc{cnf}. However, 
the existence of \textsc{fpt} sized resolution refutation parameterized by the incidence treewidth is a well known open
question. Importantly, the same dilemma persists for the \emph{regular resolution}.

It is well known that a regular resolution refutation of a 
\textsc{cnf} $\varphi$ can be represented as a read-once 
branching program whose sinks are labelled with clauses of $\varphi$.
The constraint is that the assignment carried by each source-sink path 
must falsify the clause associated with the sink. 
Let us say that the a regular resolution refutation is \emph{oblivious}
if the corresponding read-once branching program respects a linear order of its
variables. It is known that although being similar to \textsc{fbdd}, regular 
resolution has a greater intuitive similarity to $\wedge_d$-\textsc{fbdd}:
in  order to refute a conjunction of two variable-disjoint \textsc{cnf}s, it is enough to refute
just one of them. Continuing this similarity to the oblivious case, we observe an intuitive
similarity between $\wedge_d$-\textsc{obdd} and oblivious regular resolution. 
The \textsc{xp} lower bound for the former naturally leads to an open question whether a similar lower
bound exists for the latter. 

\begin{openq}
What is the complexity of oblivious resolution refutation parmeterized by the incidence treewidth
of the input \textsc{cnf}? In particular, can an \textsc{xp} lower bound be established? 
\end{openq}

In the rest of the section, we will consider possible directions of further 
research stemming from our results related to the Apply operation. 
Recall, that we demonstrated a possibility of efficient Apply operation for two $\wedge_d$-\textsc{obdd}
embeddable into the same order if they are both `close' to an \textsc{obdd}  respecting the same order. 
The \textsc{obdd} can be seen as a `common denominator' of two $\wedge_d$-\textsc{obdd}s. 
A natural direction is to consider closeness of two models to each other rather than to some third model. 
This leads us to stating the following `open-ended' open question. 

\begin{openq} \label{quest:diffand1}
Introduce a parameter measuring closeness of two $\wedge_d$-\textsc{obdd}s $B_1$ and $B_2$ respecting the same
order but \emph{not} necessarily embeddable  into it so that $f(B_1) \wedge f(B_2)$ can be represented as a $\wedge_d$-\textsc{obdd}
of size $|B_1| \times |B_2|$ to the power proportional to the value of the parameter.
\end{openq}

Question \ref{quest:diffand1} naturally leads to a similar question being asked about structured \textsc{dnnf}. 
\begin{openq} \label{quest:structparam}
Introduce a parameter $k=k(T_1,T_2)$ measuring closeness of two \emph{vtrees} $T_1$ and $T_2$
so that if $B_1$ and $B_2$ are two structured \textsc{dnnf}s respecting $T_1$ and $T_2$ respectively
then $f(B_1) \wedge f(B_2)$ can be represented as a structured \textsc{dnnf} of size $(|B_1| \cdot |B_2|)^{\Theta(k)}$. 
\end{openq}

It is known that the notion of embeddability, introduced in this paper in the context
of $\wedge_d$-\textsc{obdd}s is applicable to \textsc{dnnf}s and even to Boolean circuits in general \cite{CompatOrder}.
This leads us to the following open question consisting of two parts. 

\begin{openq}
\begin{enumerate}
\item Introduce the notion of  the \emph{irregularity index} of a \textsc{dnnf}
embeddable into a specific linear order. 
\item A nondeterministic read-once branching program is \emph{oblivious}
if there is a linear order of variables, so that along each source-sink path, the variables
are queried according to this order. We refer to such a branching program as a Nondeterministic \textsc{obdd}
(\textsc{nobdd}) (the model is discussed in Section 10.1 of \cite{WegBook}).
Let $B$ be a \textsc{dnnf} embeddable into an order $\pi$ with irregularity index $k$. 
Similarly to Theorem \ref{th:effemb}, can $B$ be simulated by a \textsc{nobdd} of size $|B|^{O(k)}$?   
\end{enumerate}
\end{openq}

It is known that the embeddability can be seen as a weaker form of structuredness \cite{CompatOrder}. 
The goal of the next conjecture is to make this connection `quantified'.

\begin{conjecture} \label{conj:structembed}
A structured \textsc{dnnf} is also a \textsc{dnnf} embeddable into a specific order with
the irregularity index at most $\log n$. 
\end{conjecture} 

Specifically, we believe that, for a \emph{vtree} $T$ a linear order $\pi$
of variables and the irregularity index $k$ of $T$ w.r.t. this order can 
be naturally defined so that $k \leq \log n$ where $n$ is the number of variables of $T$. 
Furthermore, we believe that if $B$ is a structured \textsc{dnnf} respecting $T$
then $B$ is embeddable into $\pi$ with irregularity index at most $k$. 

Conjecture \ref{conj:structembed} needs to be accompanied with demonstration of succinctness 
of embeddable \textsc{dnnf}s as compared to structured ones. This connection is formally specified
in the next two open questions. The first question seeks to establish succinctness  of embeddable
\textsc{dnnf}s as compared to structured ones. The second question is a refinement of the first 
one subject to a specific irregularity index.

\begin{openq}
Is there an  infinite set ${\bf F}$ of Boolean functions so that each $f \in {\bf F}$
can be represented by an embeddable \textsc{dnnf} of size polynomial in $|\var(f)|$
but requires a structured \textsc{dnnf} of size exponential in $|\var(f)|$ for its representation? 
\end{openq}

\begin{openq}
Does the following hold for each sufficiently large $k$. 
There is an infinite class ${\bf F}_k$ of Boolean functions
so that each $f \in {\bf F}_k$ is representable as an embeddable \textsc{dnnf}
with irregularity index $k$ and of size polynimial in $|\var(f)|$ but requires 
for its representation an exponential sized structured \textsc{dnnf} respecting a \emph{vtree}
with irregularity index at most $k$. 
\end{openq}

The final open question that we propose in this paper facilitates the primary purpose 
of the embeddability notion, namely efficient implementation of Apply.
The open question below can be seen as an upgrade of Open Question \ref{quest:structparam}.

\begin{openq} \label{quest:structparam2}
Introduce a parameter $k=k(B_1,B_2)$ measuring closeness of two embeddable \textsc{dnnf}s $B_1$ and $B_2$
so that $f(B_1) \wedge f(B_2)$ can be represented  as an
embeddable \textsc{dnnf} of size at most $(|B_1| \cdot |B_2|)^{\Theta(k)}$. 
In case  $B_1$ and $B_2$ are structured respecting trees $T_1$ and $T_2$, respectively,
the parameter should be equal $k(T_1,T_2)$ as in Open Question \ref{quest:structparam}.
\end{openq}

\section*{Acknowledgement}
We would like to thank Florent Capelli for useful feedback. 


\begin{thebibliography}{10}

\bibitem{CompatOrder}
Antoine Amarilli, Pierre Bourhis, Louis Jachiet and Stefan Mengel. 
\newblock A Circuit-Based Approach to Efficient Enumeration. 
\newblock In {\em ICALP}, pages 111:1--111:15, 2017.

\bibitem{DNNWlowertw1}
Antoine Amarilli, Florent Capelli, Mika{\"{e}}l Monet and Pierre Senellart. 
\newblock Connecting Knowledge Compilation Classes and Width Parameters.
\newblock {\em Theory Comput. Syst.}, 64:861--914, 2020.

\bibitem{BeameDNNF}
Paul Beame, Jerry Li, Sudeepa Roy, and Dan Suciu.
\newblock Lower bounds for exact model counting and applications in
  probabilistic databases.
\newblock In {\em Proceedings of the Twenty-Ninth Conference on Uncertainty in
  Artificial Intelligence, Bellevue, WA, USA, August 11-15, 2013}, 2013.

\bibitem{BovaPods}
Simone Bova and Stefan Szeider.
\newblock Circuit treewidth, sentential decision, and query compilation.
\newblock In {\em PODS}, pages 233--246, 2017.

\bibitem{dnnf2017}
Andrea Cal{\`{\i}}, Florent Capelli, and Igor Razgon.
\newblock Non-fpt lower bounds for structural restrictions of decision {DNNF}.
\newblock {\em CoRR}, abs/1708.07767, 2017.

\bibitem{ResIncOpen}
Marek Cygan, Fedor~V. Fomin, Danny Hermelin, and Magnus Wahlstr{\"{o}}m.
\newblock Randomization in parameterized complexity (dagstuhl seminar 17041).
\newblock {\em Dagstuhl Reports}, 7(1):103--128, 2017.

\bibitem{DarwicheJACM}
Adnan Darwiche.
\newblock Decomposable negation normal form.
\newblock {\em J. ACM}, 48(4):608--647, 2001.

\bibitem{SDD}
Adnan Darwiche.
\newblock {S}{D}{D}: A new canonical representation of propositional knowledge
  bases.
\newblock In {\em 22nd International Joint Conference on Artificial
  Intelligence ({IJCAI})}, pages 819--826, 2011.

\bibitem{Diestel3}
Reinhard Diestel.
\newblock {\em Graph Theory, 3d Edition}, volume 173 of {\em Graduate texts in
  mathematics}.
\newblock Springer, 2005.

\bibitem{HuangJair}
Jinbo Huang and Adnan Darwiche.
\newblock The language of search.
\newblock {\em J. Artif. Intell. Res.}, 29:191--219, 2007.

\bibitem{simwidthpaper}
Dong~Yeap Kang, O{-}joung Kwon, Torstein J.~F. Str{\o}mme, and Jan~Arne Telle.
\newblock A width parameter useful for chordal and co-comparability graphs.
\newblock {\em Theor. Comput. Sci.}, 704:1--17, 2017.

\bibitem{DNNWlowertw2}
Tuukka Korhonen.
\newblock Lower Bounds on Dynamic Programming for Maximum Weight Independent Set.
\newblock In {\em 48th International Colloquium on Automata, Languages, and Programming,
{ICALP})}, pages 87:1--87:14, 2021.

\bibitem{ANDOBDDuneven}
Yong Lai, Dayou Liu, and Minghao Yin.
\newblock New canonical representations by augmenting obdds with conjunctive
  decomposition.
\newblock {\em J. Artif. Intell. Res.}, 58:453--521, 2017.

\bibitem{DesDNNF}
Umut Oztok and Adnan Darwiche.
\newblock On compiling {CNF} into decision-dnnf.
\newblock In {\em Principles and Practice of Constraint Programming - 20th
  International Conference, ({CP})}, pages 42--57, 2014.

\bibitem{OztokCP}
Umut Oztok and Adnan Darwiche.
\newblock On compiling {CNF} into decision-dnnf.
\newblock In {\em Principles and Practice of Constraint Programming - 20th
  International Conference, {CP} 2014, Lyon, France, September 8-12, 2014.}, 
	pages 42--57, 2014.

\bibitem{StructDNNF}
Knot Pipatsrisawat and Adnan Darwiche. 
\newblock New Compilation Languages Based on Structured Decomposability. 
\newblock {\em Twenty-Third {AAAI} Conference on Artificial Intelligence,
{AAAI} 2008}, pages 517--522, 2008. 

\bibitem{RazgonKR}
Igor Razgon.
\newblock On {O}{B}{D}{D}s for {C}{N}{F}s of bounded treewidth.
\newblock In {\em Principles of Knowledge Representation and Reasoning({KR})},
  2014.

\bibitem{RazgonCP15}
Igor Razgon.
\newblock Quasipolynomial simulation of {DNNF} by a non-determinstic read-once
  branching program.
\newblock In {\em Principles and Practice of Constraint Programming - 21st
  International Conference, {CP} 2015, Cork, Ireland, August 31 - September 4,
  2015, Proceedings}, pages 367--375, 2015.

\bibitem{RazgonAlgo}
Igor Razgon.
\newblock On the read-once property of branching programs and cnfs of bounded
  treewidth.
\newblock {\em Algorithmica}, 75(2):277--294, 2016.

\bibitem{obddtcompsys}
Igor Razgon.
\newblock On oblivious branching programs with bounded repetition that cannot
  efficiently compute cnfs of bounded treewidth.
\newblock {\em Theory Comput. Syst.}, 61(3):755--776, 2017.

\bibitem{Razgonimbal}
Igor Razgon.
\newblock Decision DNNFs with imbalanced conjunction cannot efficiently represent CNFs of bounded width
\newblock {\em CoRR}, abs/2505.16012, 2025. 

\bibitem{SamerS10}
Marko Samer and Stefan Szeider.
\newblock Algorithms for propositional model counting.
\newblock {\em J. Discrete Algorithms}, 8(1):50--64, 2010.

\bibitem{satparam}
Stefan Szeider.
\newblock On fixed-parameter tractable parameterizations of {SAT}.
\newblock In {\em {SAT} 2003.}, pages 188--202, 2003.

\bibitem{VaThesis}
Martin Vatschelle.
\newblock {\em New width parameters of graphs}.
\newblock PhD thesis, Department of Informatics, University of Bergen, 2012.

\bibitem{WegBook}
Ingo Wegener.
\newblock {\em Branching Programs and Binary Decision Diagrams}.
\newblock SIAM Monographs on Discrete Mathematics and applications, 2000.

\bibitem{PCrestruct}
Honghua Zhang, Benjie Wang, Marcelo Arenas and Guy Van den Broeck. 
\newblock {\em Restructuring Tractable Probabilistic Circuits}. 
\newblock In {\em AISTATS}, pages 2566--2574, 2025.
\end{thebibliography}

\appendix

\section{Proof of Lemma \ref{lem:restr1}} \label{sec:lemrestr}
\begin{claim}
Let $X=(\var(B) \setminus \{x\}) \setminus \var(B')$. 
Then $\mathcal{S}(B)|_{\bf g}=\mathcal{S}(B') \times X^{\{0,1\}}$.
\end{claim}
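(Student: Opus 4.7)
The plan is to prove the two inclusions of $\mathcal{S}(B)|_{\bf g} = \mathcal{S}(B') \times X^{\{0,1\}}$ by translating between accepted-assignment witnesses of $B$ and of $B'$. First I would set up a node correspondence. Let $\phi$ be the map from the nodes of $B$ that survive step~1 (those reachable from the source of $B$ without traversing any removed edge) to nodes of $B'$: on a non-$x$-decision node, $\phi$ is the identity; on an $x$-decision node $u$, $\phi(u)$ is $\phi$ applied to the $i$-child of $u$. This is well defined because the read-once property forbids an $x$-decision node from being a proper descendant of another $x$-decision node, so any contraction chain has length~$1$. Every node of $B'$ is either a non-$x$ surviving node of $B$ or the $\phi$-image of a unique $x$-decision node, and the children of $\phi(u)$ in $B'$ are exactly the $\phi$-images of the children of $u$ (or of its $i$-child, if $u$ is an $x$-decision node).

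For the inclusion $\mathcal{S}(B') \times X^{\{0,1\}} \subseteq \mathcal{S}(B)|_{\bf g}$, I would take $\mathbf{a} \in \mathcal{S}(B')$ with witness $\alpha' \in \mathcal{A}(B')$ and any $\mathbf{b} \in X^{\{0,1\}}$, then lift $\alpha'$ along $\phi$ to an $\alpha \in \mathcal{A}(B)$: wherever $\alpha'$ enters a node $\phi(u)$ that came from contracting an $x$-decision node $u$, we prepend $u$ together with its $i$-edge and add $(x,i)$ to $\alpha$. Then $\alpha \subseteq \mathbf{a} \cup \{(x,i)\} \subseteq \mathbf{a} \cup \mathbf{b} \cup \{(x,i)\}$, so $\mathbf{a} \cup \mathbf{b} \cup \{(x,i)\} \in \mathcal{S}(B)$ and hence $\mathbf{a} \cup \mathbf{b} \in \mathcal{S}(B)|_{\bf g}$. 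For the converse inclusion, take $\mathbf{c} \in \mathcal{S}(B)|_{\bf g}$ with witness $\alpha \in \mathcal{A}(B)$, $\alpha \subseteq \mathbf{c} \cup \{(x,i)\}$. Since $\mathbf{c}$ does not assign $x$, every $x$-decision node used by $\alpha$ takes the $i$-branch, i.e., the branch retained by the contraction. Projecting $\alpha$ along $\phi$ (by deleting the pair $(x,i)$ and collapsing each $x$-decision node into its $\phi$-image) yields an $\alpha' \in \mathcal{A}(B')$ with $\alpha' \subseteq \mathbf{c}|_{\var(B')}$. Since $\var(B')$ and $X$ partition $\var(B) \setminus \{x\}$, and $\mathbf{c}|_X$ is unconstrained by $\alpha'$, we get $\mathbf{c} = \mathbf{c}|_{\var(B')} \cup \mathbf{c}|_X \in \mathcal{S}(B') \times X^{\{0,1\}}$.

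The main obstacle will be formalising the lift/project operations, since an accepted assignment of a $\wedge_d$-\textsc{fbdd} is carried by a sub-DAG rather than a single directed path as in an \textsc{obdd}. I would address this by a structural induction on the nodes of $B$ from the sinks up, proving the pointwise identity
\[
\mathcal{A}(B_u)|_{\bf g} \;=\; \mathcal{A}(B'_{\phi(u)}) \,\times\, Y_u^{\{0,1\}}
\]
for each surviving node $u$, where $Y_u = (\var(B_u) \setminus \{x\}) \setminus \var(B'_{\phi(u)})$ collects the variables that only appear inside the deleted subgraph rooted below $u$. The sink case is immediate; for a $\wedge_d$-node, decomposability ensures that $x$ appears in at most one child subtree, so the restriction distributes over the Cartesian product and the inductive hypotheses on the two children combine directly; for a decision node other than an $x$-node, the two branches combine by linearity of the union; for an $x$-decision node, one branch is erased and the other is identified with $\phi(u)$, matching the contraction. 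Specialising this pointwise identity to the source of $B$ gives exactly $\mathcal{A}(B)|_{\bf g} = \mathcal{A}(B') \times X^{\{0,1\}}$, from which the claim for $\mathcal{S}$ follows by completing each accepted assignment to an assignment on $\var(B) \setminus \{x\}$.
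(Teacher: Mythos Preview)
Your high-level lifting/projecting idea is reasonable, but the induction you propose to formalise it is wrong: the pointwise identity $\mathcal{A}(B_u)|_{\bf g}=\mathcal{A}(B'_{\phi(u)})\times Y_u^{\{0,1\}}$ fails already in tiny examples. The set $\mathcal{A}(\cdot)$ consists of \emph{partial} assignments, so variables that become ``free'' after deleting the $(1{-}i)$-branch simply disappear from the domain---they are not replaced by an $\{0,1\}$-factor. Concretely, take $B$ with source $r$ a $y$-node, $0$-child the $true$ sink, $1$-child an $x$-node $u$; let $u$'s $1$-child be the $true$ sink and $u$'s $0$-child a $z$-node $v$ with $0$-child $true$ and $1$-child $false$. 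For ${\bf g}=\{(x,1)\}$ one computes $\mathcal{A}(B)|_{\bf g}=\{\{(y,1)\}\}$, whereas $B'$ is just the $y$-node with both edges to $true$, so $\mathcal{A}(B')=\{\{(y,0)\},\{(y,1)\}\}$ and $X=\{z\}$, giving $\mathcal{A}(B')\times X^{\{0,1\}}$ four assignments on $\{y,z\}$. Even the weaker identity $\mathcal{A}(B_u)|_{\bf g}=\mathcal{A}(B'_{\phi(u)})$ fails at a non-$x$ decision node, because the restriction $\mathcal{H}|_{\bf g}$ by definition only keeps those ${\bf b}\in\mathcal{H}$ with ${\bf g}\subseteq{\bf b}$, so an accepted assignment that never touches an $x$-node (like $\{(y,0)\}$ above) is \emph{discarded} by $|_{\bf g}$ yet survives in $B'$.

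The repair is to run the same structural induction with $\mathcal{S}$ in place of $\mathcal{A}$: the identity $\mathcal{S}(B_u)|_{\bf g}=\mathcal{S}(B'_{\phi(u)})\times Y_u^{\{0,1\}}$ \emph{is} true, because $\mathcal{S}$ is uniform and the ``missing'' variables are absorbed into the completion step. That, however, is exactly the paper's route: its proof is an induction on $|B|$ that splits on the type of the source and invokes Lemma~\ref{lem:decrestr} for decision nodes and Lemma~\ref{lem:conrestr} for conjunction nodes, which are precisely the $\mathcal{S}$-level versions of your decision- and $\wedge_d$-cases. So once the $\mathcal{A}$-identity is abandoned, your plan collapses to the paper's argument. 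If you want to keep the witness-lifting flavour genuinely distinct, you would have to reason directly about the accepting sub-\textsc{dag}s (noting that read-once plus decomposability forces at most one $x$-node in any such sub-\textsc{dag}), rather than through a set-level identity on $\mathcal{A}$.
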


It is immediate from the essentiality assumption that $X=\emptyset$. 
Therefore, the statement of the lemma is immediate from the claim. 
It thus remains to prove the claim. The proof is by induction on $|B|$. 
For $|B|=1$, $\var(B)=\emptyset$, hence the claim is true in a vacuous way. 

For $|B|>1$, we assume first that the source $u$ is a decision node labelled 
by a variable $y$. Further on, assume first that $y=x$. 
Let $(u,v)$ be the outgoing edge of $u$ labelled with $i$. 
It is clear by construction that $B'=B_v$. 
But then the claim is immediate from Lemma \ref{lem:decrestr}. 
We thus assume that $y \neq x$.

Let ${\bf h} \in \mathcal{S}(B)|_{\bf g}$. 
Let ${\bf h}_0=Proj({\bf h},\var(B'))$. 
We need to demonstrate that ${\bf h}_0 \in \mathcal{S}(B')$. 


Let ${\bf a}={\bf h} \cup \{(x,i)\}$. 
Let $j$ be the assignment of $y$ by ${\bf a}$. 
Let ${\bf b}={\bf a} \setminus \{(y,j)\}$. 


Let $(u,v)$ be the outgoing edge of $u$ labelled by $j$. 
Let ${\bf b}_0=Proj({\bf b}, \var(B_v))$. 
By Lemma \ref{lem:decrestr}, ${\bf b}_0 \in \mathcal{S}(B_v)$.
Let ${\bf c}_0={\bf b}_0 \setminus \{(x,i)\}$. 
Let ${\bf c}_1=Proj({\bf c}_0,\var(B'_v))$. 
By the induction assumption,
${\bf c}_1 \in \mathcal{S}(B'_v)$.

Let ${\bf h}_1={\bf c}_1 \cup \{(y,j)\}$.
On the one hand ${\bf h}_1 \subseteq {\bf h}$. 
On the other hand $\var(B'_v) \subseteq \var(B')$ simply by construction. 
We conclude that ${\bf h}_1 \subseteq {\bf h}_0$. 
By Lemma \ref{lem:decrestr} applied to $B'$ and $B'_v$, we conclude 
that an arbitrary extension of ${\bf h}_1$ to $\var(B')$ is an 
element of $\mathcal{B'}$. It follows that ${\bf h}_0 \in \mathcal{S}(B')$. 


For the opposite direction, 
pick $j \in \{0,1\}$
Let $v$ be the outgoing edge of $u$ in $B'$ such that
$(u,v)$ is labelled with $j$.
Let ${\bf q}=\{(x,i),(y,j)\}$. 
Let $Y=(\var(B) \setminus \{x,y\}) \setminus \var(B'_v)$. 
Let ${\bf c} \in \mathcal{S}(B'_v)$ and let ${\bf y} \in Y^{\{0,1\}}$. 
It is sufficient to prove that ${\bf c} \cup {\bf y} \in \mathcal{S}(B)|_{\bf q}$. 

The set $Y$ can be weakly partitioned (meaning that some of partition classes may be empty) into 
$Y_0=(\var(B_v) \setminus \var({\bf g})) \setminus \var(B'_v)$
and $Y_1=(\var(B) \setminus \{y\}) \setminus var(B_v)$. 
Let ${\bf y}_0=Proj({\bf y},Y_0)$ and ${\bf y}_1=Proj({\bf y},Y_1)$. 
By the induction assumption, ${\bf c} \cup{\bf y}_0 \in \mathcal{S}(B_v)|_{\bf g}$. 
Let ${\bf c}_0={\bf c} \cup {\bf y}_0 \cup {\bf g}$. 
By Lemma \ref{lem:decrestr}, ${\bf c_0} \cup {\bf y}_1 \in \mathcal{S}(B)|_{\{(y,j)\}}$. 
We note that ${\bf c}_0 \cup {\bf y}_1={\bf c} \cup {\bf g}$. 
Hence ${\bf c} \in \mathcal{S}(B)|_{\bf q}$ as required. 

It remains to assume that $u$ is a conjunction node. 
Let $u_1$ and $u_2$ be the children of $u$. 
For the sake of brevity, we denote $B_{u_1}$ and $B_{u_2}$ 
by $B_1$ and $B_2$ respectively. 
Assume w.l.o.g. that $x \in \var(B_1)$.  
By Lemma \ref{lem:conrestr}

\begin{equation} \label{eq:conj31}
\mathcal{S}(B)|_{\bf g}=\mathcal{S}(B_1)|_{\bf g} \times \mathcal{S}(B_2)
\end{equation}

By applying the induction assumption to $B_1$, \eqref{eq:conj31}
is transformed into 

\begin{equation} \label{eq:conj32}
\mathcal{S}(B)|_{\bf g}=X^{\{0,1\}} \times \mathcal{S}(B'_1) \times \mathcal{S}(B_2)
\end{equation}

By construction, $B'$ is obtained by transforming $B_1$
int $B'_1$. Therefore, by Lemma \ref{lem:conrestr} used with the empty assignment

\begin{equation} \label{eq:conj33}
\mathcal{S}(B')=\mathcal{S}(B'_1) \times \mathcal{S}(B_2)
\end{equation}

The claim now follows from combination of \eqref{eq:conj32} and \eqref{eq:conj33}. 

The proof of the lemma is now complete.

 
\section{Proof of Lemma \ref{lem:assignpref2}} \label{sec:assignpref2}






\begin{lemma} \label{lem:decrestr}
Let $B$ be a $\wedge_d$-\textsc{obdd}. 
Let ${\bf g}$ be an assignment. 
Suppose that the source node $u$ of $B$ is a decision
node associated with a variable $x$. Let $i \in \{1,2\}$
be such that $(x,i) \in {\bf g}$. 
Let $(u,v)$ be the outgoing edge of $u$ labelled with $i$. 
Let ${\bf g}_0={\bf g} \setminus \{(x,i)\}$. 
Let $X=(\var(B) \setminus \var({\bf g})) \setminus \var(B_v)$. 
Then $\mathcal{S}(B)|_{\bf g}=\mathcal{S}(B_v)|_{{\bf g}_0} \times X^{\{0,1\}}$
\end{lemma}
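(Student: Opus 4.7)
The plan is to prove the lemma by directly unfolding the recursive definition of $\mathcal{A}(B)$ at the decision-node source, and then bookkeeping how ${\bf g}$ splits across the pieces of $\var(B)$ that are "seen" and "not seen" by the $v$-branch. First I would note that, since $u$ is a decision node labelled $x$, read-onceness forces $x \notin \var(B_v)$; setting $Y := \var(B) \setminus (\{x\} \cup \var(B_v))$ gives a disjoint partition $\var(B) = \{x\} \sqcup \var(B_v) \sqcup Y$. Variables in $Y$ are reachable from the source only through the sibling $u_{1-i}$, so they are "free" with respect to the $v$-branch.

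Next I would unpack the recursive definition $\mathcal{A}(B) = \mathcal{A}(B_{u_0}) \times \{\{(x,0)\}\} \,\cup\, \mathcal{A}(B_{u_1}) \times \{\{(x,1)\}\}$. An assignment ${\bf b} \in \{0,1\}^{\var(B)}$ with ${\bf b}(x) = i$ lies in $\mathcal{S}(B)$ exactly when it extends some ${\bf a} \cup \{(x,i)\}$ with ${\bf a} \in \mathcal{A}(B_v)$; using the decomposition of $\var(B)$, such a ${\bf b}$ factors uniquely as $\{(x,i)\} \cup {\bf d} \cup {\bf y}$, where ${\bf d}$ is the restriction of ${\bf b}$ to $\var(B_v)$ and ${\bf y}$ is its restriction to $Y$. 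The condition then reduces to ${\bf d} \in \mathcal{S}(B_v)$ with ${\bf y} \in \{0,1\}^Y$ arbitrary, so $\{{\bf b} \in \mathcal{S}(B) : {\bf b}(x) = i\} = \{\{(x,i)\}\} \times \mathcal{S}(B_v) \times \{0,1\}^Y$.

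Projecting ${\bf g}$ to $\var(B)$ if necessary, I would split ${\bf g}_0$ into ${\bf g}'_0 := Proj({\bf g}_0, \var(B_v))$ and ${\bf g}''_0 := Proj({\bf g}_0, Y)$ — this is well-defined because $x \notin \var({\bf g}_0)$. The condition ${\bf g} \subseteq {\bf b}$ becomes ${\bf g}'_0 \subseteq {\bf d}$ and ${\bf g}''_0 \subseteq {\bf y}$, so subtracting ${\bf g}$ from each such ${\bf b}$ yields exactly $\mathcal{S}(B)|_{\bf g} = \mathcal{S}(B_v)|_{{\bf g}'_0} \times (Y \setminus \var({\bf g}''_0))^{\{0,1\}}$.

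Finally, I would reconcile this with the lemma statement. Because restriction is defined via projection, $\mathcal{S}(B_v)|_{{\bf g}_0} = \mathcal{S}(B_v)|_{Proj({\bf g}_0, \var(B_v))} = \mathcal{S}(B_v)|_{{\bf g}'_0}$. And since $x \notin Y$, we have $\var({\bf g}) \cap Y = \var({\bf g}_0) \cap Y = \var({\bf g}''_0)$, so $X = (\var(B) \setminus \var({\bf g})) \setminus \var(B_v) = Y \setminus \var({\bf g}''_0)$, matching the second factor. There is no real obstacle: the whole argument is bookkeeping, the only conceptual ingredient being that read-onceness together with the decision-node form of the source lets one ignore the $u_{1-i}$ subgraph entirely apart from its contribution of "free" variables, which are precisely the elements of $X$.
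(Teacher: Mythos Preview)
Your proposal is correct and follows essentially the same approach as the paper: both arguments unfold the recursive definition of $\mathcal{A}(B)$ at the decision-node source, use read-onceness to ensure $x\notin\var(B_v)$, and then do bookkeeping to show that the $Y$-variables are unconstrained while the $\var(B_v)$-part must lie in $\mathcal{S}(B_v)$. The paper carries this out as two explicit element-chasing inclusions, whereas you first package the $x=i$ slice of $\mathcal{S}(B)$ as a product $\{\{(x,i)\}\}\times\mathcal{S}(B_v)\times\{0,1\}^Y$ and then restrict; this is a cleaner presentation of the same computation, not a different argument.
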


\begin{proof}
Let ${\bf h} \in \mathcal{S}(B)|_{\bf g}$. 
Let ${\bf a}={\bf g} \cup {\bf h}$. 
Then ${\bf a} \in \mathcal{S}(B)$. 

By construction, $(x,i) \in {\bf a}$. 
Let ${\bf a}_0={\bf a} \setminus \{(x,i)\}$. 
Clearly, $\var({\bf b}_0) \subseteq \var(B_v) \subseteq \var({\bf a}_0)$. 
Let ${\bf c}_0=Proj({\bf a}_0,\var(B_v))$. We conclude that 
${\bf c}_0 \in \mathcal{S}(B_v)$.  
Let ${\bf d}={\bf c}_0 \setminus {\bf g}_0$. It follows that
${\bf d} \in \mathcal{S}(B_v)|_{{\bf g}_0}$. 

We note that ${\bf h}={\bf a}_0 \setminus {\bf g}_0$ and
hence ${\bf d} \subseteq {\bf h}$. Let ${\bf x}={\bf h} \setminus {\bf d}$. 
Then $\var({\bf x})=(\var(B) \setminus \var(g)) \setminus (\var(B_v) \setminus \var({\bf g}_0)=
(\var(B) \setminus \var(g)) \setminus \var(B_v)=X$. 
We conclude that ${\bf h} \in  \mathcal{S}(B_v)|_{{\bf g}_0} \times X^{\{0,1\}}$.

Conversely, let ${\bf d} \in \mathcal{S}(B_v)|_{{\bf g}_0}$
and let ${\bf x} \in X^{\{0,1\}}$. 
Let ${\bf h}={\bf d} \cup {\bf x}$ and let ${\bf a}={\bf g} \cup {\bf h}$. 
We need to prove that ${\bf a} \in \mathcal{S}(B)$. 
Let ${\bf g}'_0=Proj({\bf g}_0,\var(B_v))$. 
Let ${\bf c}_0={\bf g}'_0 \cup {\bf d}$. 
Then ${\bf c}_0 \in \mathcal{S}(B_v)$. 
Consequently, there is ${\bf b}_0 \subseteq {\bf c}_0$ 
such that ${\bf b}_0 \in \mathcal{A}(B_v)$. 
Let ${\bf b}={\bf b}_0 \cup \{(x,i)\}$. By construction,
${\bf b} \in \mathcal{A}(B)$. 
As ${\bf b} \subseteq {\bf c}_0 \cup \{(x,i)\} \subseteq {\bf a}$,
we conclude that ${\bf a} \in \mathcal{S}(B)$. 
\end{proof}

\begin{lemma} \label{lem:conrestr}
Let $B$ be a $\wedge_d$-\textsc{obdd}. 
Let ${\bf g}$ be an assignment. 
Assume that the source
$u$ of $B$ is a conjunction node. 
Let $u_1$ and $u_2$ be children of $u$. 
Let $V=\var(B)$, $V_1=\var(B_{u_1})$ and $V_2=\var(B_{u_2})$. 
Further on, let ${\bf g}_1=Proj({\bf g},V_1)$, ${\bf g}_2=Proj({\bf g},V_2)$. 
Then $\mathcal{S}(B)|_{\bf g}=\mathcal{S}(B_{u_1})|_{\bf g_1} \times \mathcal{S}(B_{u_2})|_{\bf g_2}$. 
\end{lemma}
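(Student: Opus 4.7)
\textbf{Proof proposal for Lemma \ref{lem:conrestr}.}

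The plan is to reduce the statement to the base identity $\mathcal{S}(B)=\mathcal{S}(B_{u_1}) \times \mathcal{S}(B_{u_2})$ and then push the restriction by ${\bf g}$ through the Cartesian product. First I would invoke decomposability of the $\wedge_d$-node $u$ to record that $V_1 \cap V_2=\emptyset$ and $V=V_1 \cup V_2$, so that in particular every assignment ${\bf b}$ on $V$ splits uniquely as ${\bf b}=Proj({\bf b},V_1) \cup Proj({\bf b},V_2)$, and likewise ${\bf g}={\bf g}_1 \cup {\bf g}_2$. We may also assume w.l.o.g.\ that $\var({\bf g})\subseteq V$, since the definition of restriction reduces the general case to this one via $Proj({\bf g},V)$.

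Next I would establish the unrestricted identity $\mathcal{S}(B)=\mathcal{S}(B_{u_1}) \times \mathcal{S}(B_{u_2})$. This is a direct unrolling of the recursive definition of $\mathcal{A}(B)$ at a conjunction source: $\mathcal{A}(B)=\mathcal{A}(B_{u_1}) \times \mathcal{A}(B_{u_2})$, and any ${\bf a}\in\mathcal{A}(B)$ decomposes as ${\bf a}={\bf a}_1\cup {\bf a}_2$ with ${\bf a}_i\in\mathcal{A}(B_{u_i})$ and $\var({\bf a}_i)\subseteq V_i$. Hence for ${\bf b}$ with $\var({\bf b})=V$, the condition ${\bf a}\subseteq{\bf b}$ is equivalent to ${\bf a}_i\subseteq Proj({\bf b},V_i)$ for $i=1,2$, and the identity for $\mathcal{S}$ follows.

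With this in hand, the last step is purely combinatorial. An assignment ${\bf h}$ over $V\setminus \var({\bf g})$ lies in $\mathcal{S}(B)|_{\bf g}$ iff ${\bf g}\cup{\bf h}\in\mathcal{S}(B)$, i.e.\ iff $Proj({\bf g}\cup{\bf h},V_i)\in\mathcal{S}(B_{u_i})$ for $i\in\{1,2\}$. Since $V_1\cap V_2=\emptyset$, we have $Proj({\bf g}\cup{\bf h},V_i)={\bf g}_i\cup Proj({\bf h},V_i)$, so the condition is equivalent to $Proj({\bf h},V_i)\in\mathcal{S}(B_{u_i})|_{{\bf g}_i}$ for each $i$. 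Using once more that $V\setminus\var({\bf g})=(V_1\setminus\var({\bf g}_1))\sqcup(V_2\setminus\var({\bf g}_2))$, this is exactly the statement ${\bf h}\in\mathcal{S}(B_{u_1})|_{{\bf g}_1}\times\mathcal{S}(B_{u_2})|_{{\bf g}_2}$, proving both inclusions simultaneously.

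There is no real obstacle here; the only care needed is to keep track of the domains of the various assignments and to use decomposability at exactly the right spot, namely to justify splitting both ${\bf g}$ and any candidate extension ${\bf h}$ cleanly across $V_1$ and $V_2$. The argument is symmetric between the two directions of the equality, so a single projection-based computation suffices rather than the two-way unpacking used in Lemma \ref{lem:decrestr}.
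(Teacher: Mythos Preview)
Your proposal is correct and follows essentially the same approach as the paper: both arguments rely on decomposability to split assignments across $V_1$ and $V_2$ via projection, trace membership in $\mathcal{S}$ back to the existence of an accepting ${\bf a}\in\mathcal{A}(B)=\mathcal{A}(B_{u_1})\times\mathcal{A}(B_{u_2})$, and then strip off ${\bf g}_i$ on each side. The only difference is organizational: you first isolate the unrestricted identity $\mathcal{S}(B)=\mathcal{S}(B_{u_1})\times\mathcal{S}(B_{u_2})$ and then push the restriction through via a single chain of equivalences, whereas the paper runs the two inclusions separately and works directly with elements of $\mathcal{A}$ throughout---but this is presentation, not a different argument.
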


\begin{proof}
Let ${\bf h} \in \mathcal{S}(B)|_{\bf g}$. 
Let ${\bf a}={\bf g} \cup {\bf h}$. 
It follows that there is ${\bf b} \subseteq {\bf a}$
such that ${\bf b} \in \mathcal{A}(B)$.

For each $i \in \{1,2\}$, let ${\bf a}_i=Proj({\bf a},V_i)$
and let ${\bf b}_i=Proj({\bf b},V_i)$. 
By construction, for each $i \in \{1,2\}$, 
${\bf b}_i \in \mathcal{A}(B_{u_i})$
and hence ${\bf a}_i \in \mathcal{S}(B_{u_i})$. 
For each ${\bf a}_i$, let ${\bf h}_i={\bf a}_i \setminus{\bf g}_i$. 
Clearly, ${\bf h}={\bf h}_1 \cup {\bf h}_2$ and ${\bf h}_i \in \mathcal{S}(B_{u_i})|_{{\bf g}_i}$
for each $i \in \{1,2\}$. Hence, 
${\bf h} \in  \mathcal{S}(B_{u_1})|_{\bf g_1} \times \mathcal{S}(B_{u_2})|_{\bf g_2}$.

Conversely, let ${\bf h}_i \in \mathcal{S}(B_{u_i})|_{{\bf g}_i}$
for each $i \in \{1,2\}$.
Let ${\bf h}={\bf h}_1 \cup {\bf h}_2$ and let ${\bf a}={\bf g} \cup {\bf h}$. 
We need to demonstrate that ${\bf a} \in \mathcal{S}(B)$. 
For each $i \in \{1,2\}$, let ${\bf a_i}={\bf g}_i \cup {\bf h}_i$. 
By assumption, ${\bf a}_i \in \mathcal{S}(B_{u_i})$ for each $i \in \{1,2\}$
implying existence of ${\bf b}_1 \subseteq {\bf a}_1$ and ${\bf b}_2 \subseteq {\bf a}_2$
such that ${\bf b}_1 \in \mathcal{A}(B_{u_1})$ and ${\bf b}_2 \in \mathcal{A}(B_{u_2})$. 
Let ${\bf b}={\bf b}_1 \cup {\bf b}_2$. By construction, ${\bf b} \in \mathcal{A}(B)$. 
As clearly, ${\bf b} \subseteq {\bf a}$, we conclude that ${\bf a} \in \mathcal{S}(B)$.
\end{proof}

\begin{lemma} \label{lem:alignrestr}
Let $B$ be a $\wedge_d$-\textsc{obdd} respecting an order $\pi$. 
Let ${\bf g}$ be an assignment to a prefix of $\pi$. 
Suppose that the source node $u$ of $B$ is a decision
node associated with a variable $x$. Let $i \in \{0,1\}$
be such that $(x,i) \in {\bf g}$ 
Let $(u,v)$ be the outgoing edge of $u$ labelled with $i$. 
Let ${\bf g}_0={\bf g} \setminus \{(x,i)\}$. 
Then $B_v$ respects $\pi_0=\pi \setminus \{x\}$, ${\bf g}_0$ is
an assignment to a prefix of $\pi_0$ and 
$B[{\bf g}]$ is obtained from $B_v[{\bf g_0}]$ by adding the edge 
$(u,v)$ labelled with $i$
\end{lemma}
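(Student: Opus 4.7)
The plan is to verify the three conjuncts of the statement separately, since each follows from a short unfolding of the relevant definitions.

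First I would establish that $B_v$ obeys $\pi_0$. Every directed path of $B_v$ is also a directed path of $B$, so the relative ordering of decision-node labels along any path of $B_v$ already conforms to $\pi$. The only thing that can go wrong when we pass from $\pi$ to $\pi_0$ is that a decision node of $B_v$ might be labelled by $x$. But $u$ itself is a decision node labelled by $x$, and $u$ has a directed path to every node of $B_v$ (via $(u,v)$), so read-onceness of $B$ forbids any node of $B_v$ from being labelled $x$. Hence $\pi_0$ is still a linear order over a superset of $\var(B_v)$ that $B_v$ obeys.

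Next, the claim that ${\bf g}_0$ is an assignment to a prefix of $\pi_0$ is a direct set-theoretic observation: if $\var({\bf g})$ equals a prefix $\pi_p$ of $\pi$ containing $x$, then $\var({\bf g}_0) = \pi_p \setminus \{x\}$ is still a prefix of $\pi \setminus \{x\} = \pi_0$.

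The bulk of the work is the third claim, which I would prove by unfolding the two-step definition of alignment on both sides. In forming $B_0[{\bf g}]$ from $B$ we remove, at $u$, the outgoing edge labelled $1-i$, leaving only $(u,v)$; at every other decision node $w$, the removal rule depends only on whether the label of $w$ lies in $\var({\bf g})$, and since no node of $B_v$ is labelled $x$, the restrictions applied to the decision nodes of $B_v$ inside $B_0[{\bf g}]$ coincide exactly with those applied when forming $(B_v)_0[{\bf g}_0]$. For the reachability cleanup, observe that the unique out-neighbour of $u$ in $B_0[{\bf g}]$ is $v$, so the nodes reachable from $u$ in $B_0[{\bf g}]$ are exactly $\{u\}$ together with the nodes of $B_v$ reachable from $v$ after the same edge removals—i.e. the node set of $B_v[{\bf g}_0]$. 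Stitching these together shows $B[{\bf g}]$ is $B_v[{\bf g}_0]$ with the single extra edge $(u,v)$ labelled $i$, together with the incomplete source $u$.

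I do not expect any serious obstacle here; the entire argument is definition chasing. The only mildly delicate point is verifying that the edge-removal step of alignment truly has the same effect on the common subgraph, which is why isolating the read-once observation ($x \notin \var(B_v)$) at the very start is useful — it ensures that the two alignment constructions act identically on the nodes they share.
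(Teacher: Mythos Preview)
Your proposal is correct and follows essentially the same approach as the paper: both arguments verify the three conjuncts separately via definition chasing, with the key observation being that read-onceness guarantees $x \notin \var(B_v)$, so the edge-removal step of alignment acts identically on the shared subgraph and the reachability cleanup then reduces to reachability from $v$ inside $B_v$. Your write-up is in fact a bit more explicit than the paper's (which bundles the third part into a three-step algorithmic description and asserts the equivalence is ``not hard to see''), but the logical content is the same.
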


\begin{proof}
$B_v$ obeys $\pi$ but does not contain $x$ hence clearly $B_v$ respects $\pi_0$. 
Clearly, removal of $x$ from a prefix of $\pi$ results in a prefix of $\pi_0$. 
Therefore, ${\bf g}_0$ assigns a prefix of $\pi_0$. 

It is not hard to see that $B[{\bf g}]$ can be obtained as follows. 
\begin{enumerate}
\item Remove the outgoing edge of $u$ labelled with $2-i$. 
\item Remove all the nodes besides $u$ and those in $B_v$. 
\item Inside $B_v$ remove all the edges whose tails are labelled 
with variables of ${\bf g}$ but the labels on the edges are opposite
to their respective assignments in ${\bf g}$. Then remove all the nodes 
all the nodes of $B_v$ that are not reachable from $v$ in the resulting 
graph. 
\end{enumerate}
It is not hard to see that the graph resulting from the above algorithm
is the same as when we first compute $B_v[{\bf g}_0]$ and then add $(u,v)$. 
\end{proof}

\begin{corollary} \label{col:alignrestr}
With the notation and premises as of Lemma \ref{lem:alignrestr},
$L_B({\bf g})=L_{B_v}({\bf g}_0)$
\end{corollary}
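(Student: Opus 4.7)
The plan is to read Corollary~\ref{col:alignrestr} directly off the structural identity supplied by Lemma~\ref{lem:alignrestr}: $B[{\bf g}]$ is obtained from $B_v[{\bf g}_0]$ by prepending the single new source $u$ together with the edge $(u,v)$ labelled $i$. With this picture in hand, the task reduces to checking that the set of \emph{final nodes} of incomplete paths is invariant under this one-edge extension.

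First I would verify that $u$ is itself an incomplete decision node of $B[{\bf g}]$. Since $u$ is a decision node labelled by $x$ and $(x,i)\in{\bf g}$, the alignment procedure deletes the outgoing edge of $u$ labelled $1-i$, leaving $(u,v)$ as its unique outgoing edge. In particular $u$ is neither a complete decision node nor a sink, so $u$ cannot appear in $L_B({\bf g})$.

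Next I would argue that the local type of every node reachable from $v$ (incomplete decision, complete decision, $\wedge_d$, or sink) is identical in $B[{\bf g}]$ and in $B_v[{\bf g}_0]$. The key ingredient is read-onceness of $B$: no decision node strictly below $u$ is labelled by $x$, so once restricted to $\var(B_v)$ the two assignments ${\bf g}$ and ${\bf g}_0$ impose exactly the same constraints; consequently the alignment operation acts on $B_v$ in exactly the same way in both cases.

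Finally I would set up the obvious bijection between the incomplete paths of the two aligned graphs: every incomplete path in $B[{\bf g}]$ must start at the source $u$ by maximality, must then traverse the unique outgoing edge $(u,v)$ since $u$ is an incomplete decision node, and must continue as a maximal path with incomplete intermediate decision nodes inside the sub-\textsc{dag} rooted at $v$. Stripping off the initial $u$ yields an incomplete path in $B_v[{\bf g}_0]$, and prepending $u$ inverts the operation. Because $u$ is never a final node, the sets of final nodes of these matched paths coincide, giving $L_B({\bf g})=L_{B_v}({\bf g}_0)$. I do not foresee a genuine obstacle here—the corollary is essentially a rephrasing of Lemma~\ref{lem:alignrestr}, the only thing to be a bit careful with being the corner case in which $v$ is itself already a complete decision node or sink, where the matched incomplete paths degenerate to $u\to v$ and $v$ respectively, and the endpoints still agree.
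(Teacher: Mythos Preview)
Your proposal is correct and takes essentially the same approach as the paper: both arguments rest on the structural identity from Lemma~\ref{lem:alignrestr} and the observation that the single new node $u$ is incomplete, so the set of complete decision nodes---and hence the minimal ones---is unchanged. The paper phrases this directly via the ``minimal complete decision node'' characterisation of $L(\cdot)$, whereas you phrase it as a bijection of incomplete paths obtained by stripping/prepending $u$, but the content is the same.
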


\begin{proof}
By construction of $B[{\bf g}]$ as specified in Lemma \ref{lem:alignrestr}, 
it is immediate that $B[{\bf g}]$ and $B_v[{\bf g}_0]$ have the same set of
complete decision nodes. 
Moreover, as the construction does not introduce new paths between the 
complete decision nodes nor removes an existing one, 
minimal complete decision node of $B_v[{\bf g_0}]$ remains minimal in $B[{\bf g}]$
and no new minimal complete decision node appears in $B[{\bf g}]$. 
\end{proof}

We now introduce special notation 
for the set $X$ as in Lemma \ref{lem:assignpref2}. 
In particular, we let
$X_B({\bf g})=(\var(B) \setminus \var(g)) \setminus \bigcup_{w \in L_B({\bf g})} \var(B_w)$. 

\begin{corollary} \label{col:freerestr}
With the notation and premises as of Lemma \ref{lem:alignrestr},
$X_B({\bf g})=((\var(B) \setminus \var({\bf g})) \setminus \var(B_v)) \cup X_{B_v}({\bf g}_0)$. 
\end{corollary}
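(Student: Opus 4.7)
The plan is to reduce the statement to a set-theoretic manipulation that peels off the contribution from $\var(B_v)$ and the contribution from outside $\var(B_v)$, using the fact that Corollary \ref{col:alignrestr} already identifies $L_B({\bf g})$ with $L_{B_v}({\bf g}_0)$. Concretely, I would partition $\var(B) \setminus \var({\bf g})$ into the piece $A = (\var(B) \setminus \var({\bf g})) \setminus \var(B_v)$ and the piece $C = (\var(B) \setminus \var({\bf g})) \cap \var(B_v)$, and analyse how each intersects $\bigcup_{w \in L_B({\bf g})} \var(B_w)$.

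For $A$, the argument is immediate: every $w \in L_B({\bf g}) = L_{B_v}({\bf g}_0)$ lies inside $B_v$, so $\var(B_w) \subseteq \var(B_v)$, and therefore $A$ is disjoint from $\bigcup_w \var(B_w)$. Hence $A$ contributes to $X_B({\bf g})$ untouched.

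For $C$, I first rewrite it as $\var(B_v) \setminus \var({\bf g}_0)$. Here the key observation is that $x \notin \var(B_v)$: the source $u$ is labelled by $x$, so by the read-once property no decision node of $B_v$ can be labelled with $x$. Consequently $\var(B_v) \setminus \var({\bf g}) = \var(B_v) \setminus \var({\bf g}_0)$. Together with $\var(B_w) = \var((B_v)_w)$ for $w \in L_{B_v}({\bf g}_0)$ (since such $w$ is reachable from $v$), the contribution of $C$ becomes
\[
(\var(B_v) \setminus \var({\bf g}_0)) \setminus \bigcup_{w \in L_{B_v}({\bf g}_0)} \var((B_v)_w) = X_{B_v}({\bf g}_0).
\]
Combining both pieces yields the claimed identity.

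The main thing to be careful about is the possibility that $\var({\bf g})$ may include variables outside $\var(B)$, since $\pi$ is allowed to be defined over a superset of $\var(B)$. This is however harmless: we are always intersecting with $\var(B)$ or with $\var(B_v)$, so such extra variables drop out automatically. Beyond that caveat the proof is a short and entirely routine set calculation built on top of Corollary \ref{col:alignrestr}.
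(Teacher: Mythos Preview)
Your proposal is correct and follows essentially the same approach as the paper: both invoke Corollary \ref{col:alignrestr} to replace $L_B({\bf g})$ by $L_{B_v}({\bf g}_0)$ and then reduce to a short set-theoretic computation. Your version is in fact more explicit than the paper's (e.g., you spell out the read-onceness argument for $x \notin \var(B_v)$ and the identity $\var(B_w) = \var((B_v)_w)$), whereas the paper simply observes $X_{B_v}({\bf g}_0) \subseteq X_B({\bf g})$ and identifies the remainder as $(\var(B)\setminus\var(B_v))\setminus\var({\bf g})$.
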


\begin{proof}
By Corollary \ref{col:alignrestr}

\begin{equation} \label{eq:freeset1}
X_B({\bf g})=(\var(B) \setminus \var({\bf g})) \setminus \bigcup_{w \in L_{B_v}({\bf g}_0)} \var(B_w)
\end{equation}

As $\var(B_v) \setminus \var({\bf g}_0) \subseteq \var(B) \setminus \var({\bf g})$, we conclude that 
$X_{B_v}({\bf g}_0) \subseteq X_B({\bf g})$. But what are $X_B({\bf g}) \setminus X_{B_v}({\bf g}_0)$? 
They are all the variables of $\var(B) \setminus \var(B_v)$ but $\var({\bf g})$. 
Hence the corollary follows.

\end{proof}

\begin{lemma} \label{lem:alignconj}
Let $B$ be a $\wedge_d$-\textsc{obdd} respecting an order $\pi$. 
Let ${\bf g}$ be an assignment to a prefix of $\pi$. 
Suppose that the source $u$ of $B$ is a conjunction node
with children $u_1$ and $u_2$.
Let $V=\var(B)$, $V_1=\var(B_{u_1})$ and $V_2=\var(B_{u_2})$. 
Further on, let $\pi_1=\pi[V_1]$, $\pi_2=\pi[V_2]$
(where $\pi[V_i]$ is the order of $V_i$ with the elements ordered as in $\pi$), 
${\bf g}_1=Proj({\bf g},V_1)$, ${\bf g}_2=Proj({\bf g},V_2)$. 
Then, for each $i \in \{1.2\}$, 
$B_{u_i}$ is a $\wedge_d$-\textsc{obdd} respecting $\pi_i$
and ${\bf g}_i$ s an assignment over a prefix of $\pi_i$. 
Furthermore, 
$B[{\bf g}]$ is obtained from
$B_{u_1}[{\bf g}_1]$ and $B_{u_2}[{\bf g}_2]$ 
by adding a conjunction node $u$ with children $u_1$ and $u_2$.
\end{lemma}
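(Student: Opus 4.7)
The plan is to dispatch the three assertions of the lemma separately, exploiting decomposability to split everything cleanly along $u$.

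For the first assertion, $B_{u_i}$ is obtained as the subgraph of $B$ induced by $u_i$ and its descendants; read-onceness and decomposability are inherited, so $B_{u_i}$ is a $\wedge_d$-\textsc{fbdd}. Since $B$ obeys $\pi$, any directed path in $B_{u_i}$ obeys $\pi$ restricted to $V_i$, which is exactly $\pi_i$, so $B_{u_i}$ is a $\wedge_d$-\textsc{obdd} obeying $\pi_i$. For the second assertion, I would use that $\var({\bf g})$ is a prefix of $\pi$: if $w \in V_i$ precedes some $v \in \var({\bf g}) \cap V_i$ in $\pi_i$, then $w$ precedes $v$ in $\pi$, hence $w \in \var({\bf g})$, hence $w \in \var({\bf g}_i)$. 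Thus $\var({\bf g}_i) = \var({\bf g}) \cap V_i$ is a prefix of $\pi_i$.

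The main content is the third assertion. The key observation, which I would isolate as a preliminary claim, is that by decomposability $V_1 \cap V_2 = \emptyset$, so every decision node of $B$ that is reachable from $u$ is reachable from exactly one of $u_1$ and $u_2$ (the source $u$ is itself a $\wedge_d$ node, not a decision node). With this in hand I would trace both steps of the alignment construction through the decomposition. For the edge-removal step: each edge removed in $B$ has as its tail a decision node labelled by some $x \in \var({\bf g})$; that node lies in precisely one $B_{u_i}$, and then $x \in V_i \cap \var({\bf g}) = \var({\bf g}_i)$, so the removed edge is exactly the edge that would be removed in forming $B_{u_i}[{\bf g}_i]$. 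Conversely every edge removed in either $B_{u_i}[{\bf g}_i]$ is removed in $B[{\bf g}]$. So the pre-pruning graph $B_0[{\bf g}]$ decomposes, over the children of $u$, into $B_{u_1}$ and $B_{u_2}$ after their respective edge removals. For the reachability step: the out-neighbours of $u$ in $B_0[{\bf g}]$ are $u_1$ and $u_2$, so a node distinct from $u$ is reachable from $u$ in $B_0[{\bf g}]$ if and only if it is reachable either from $u_1$ in the edge-pruned $B_{u_1}$ or from $u_2$ in the edge-pruned $B_{u_2}$, which are exactly the non-source nodes of $B_{u_1}[{\bf g}_1]$ and $B_{u_2}[{\bf g}_2]$.

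The main obstacle is simply to argue cleanly that the edge removal and the subsequent reachability restriction commute with the split at $u$; this is where the disjointness $V_1 \cap V_2 = \emptyset$ does the real work, because it guarantees that no decision node (and therefore no edge removed in the construction of the alignment) can belong simultaneously to both $B_{u_1}$ and $B_{u_2}$. The remaining bookkeeping, identifying $B[{\bf g}]$ with the disjoint union of $B_{u_1}[{\bf g}_1]$ and $B_{u_2}[{\bf g}_2]$ joined under the single source $u$ with edges $(u,u_1)$ and $(u,u_2)$, is then direct.
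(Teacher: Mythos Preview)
Your proposal is correct and follows essentially the same approach as the paper: handle the first two assertions by restriction of $\pi$ and a prefix-closure argument, then for the third trace the two stages of the alignment construction (edge removal, then pruning unreachable nodes) through the split at $u$, using decomposability to ensure that each removed edge lives in exactly one $B_{u_i}$ and that reachability from the source factors through $u_1$ and $u_2$. Your write-up is in fact more explicit than the paper's on why decision nodes lie in exactly one branch; the only cosmetic point is that the phrase ``disjoint union'' at the end is slightly loose, since sinks (and in principle $\wedge_d$ nodes with no variables below them) may be shared between $B_{u_1}$ and $B_{u_2}$ --- but neither the lemma statement nor your argument actually relies on node-disjointness, only on the disjointness of decision nodes, which you have.
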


\begin{proof}
For each $i \in \{1,2\}$, $B_{u_i}$ respects $\pi$.
Hence, $B_{u_i}$ obeys the suborder of $\pi$ induced by $\var(B_{u_i})$. 
Assume that say $\var({\bf g}_1)$ do not form a prefix of $\pi_1$. 
Then there is $x \in \var({\bf g}_1)$ and $y \in \pi_1 \setminus \var({\bf g}_1)$
such that $y<_{\pi_1} x$. Then $y<_{\pi} x$ and $y \in \pi \setminus \var({\bf g})$
in contradiction to $\var({\bf g})$ being a prefix of $\pi$. 

It is not hard to see that, in the considered case, 
the second stage of obtaining the alignment can be 
reformulated as removal of nodes reachable from both $u_1$
and $u_2$. 
Now, let us obtain $B_{u_1}[{\bf g}_1] \cup B_{u_2}[{\bf g}_2]$ 
as follows. 
\begin{enumerate}
\item Carry out the edge removal in both $B_{u_1}$ and $B_{u_1}$. 
Clearly, this is equivalent to edges removal from $B$ to obtain $B[{\bf g}]$. 
\item Carry out node removal separately for $B_{u_1}[{\bf g_1}]$ and
$B_{u_2}[{\bf g_2}]$ and then perform the graph union operation.
Clearly, the nodes removed (apart from the source) will be precisely those 
that are not reachable from $u_1$ in $B_{u_1}$ and in $u_2$ in $B_{u_2}$
after the removal of edges.
\end{enumerate}
Then, adding the source as specified will result in $B[{\bf g}]$ as required. 
\end{proof}

\begin{corollary} \label{col:alignconj}
With the notation and premises as in Lemma \ref{lem:alignconj},
$L_B({\bf g})=L_{B_1}({\bf g}_1) \cup L_{B_2}({\bf g}_2)$.
\end{corollary}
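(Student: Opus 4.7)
The plan is to use Lemma \ref{lem:alignconj} as a structural description of $B[{\bf g}]$ and then verify the two inclusions $L_B({\bf g}) \subseteq L_{B_1}({\bf g}_1) \cup L_{B_2}({\bf g}_2)$ and $L_{B_1}({\bf g}_1) \cup L_{B_2}({\bf g}_2) \subseteq L_B({\bf g})$ by reducing completeness and minimality from a global property in $B[{\bf g}]$ to a local property in each $B_{u_i}[{\bf g}_i]$. The whole argument relies on two structural observations: (i) by the decomposability of $u$ together with the read-onceness of $B$, the vertex sets of $B_{u_1}$ and $B_{u_2}$ can share \emph{only} sink nodes (no decision node lies in both); and (ii) the only new node that the construction of Lemma \ref{lem:alignconj} adds is $u$ itself, which is a conjunction node and therefore irrelevant to $L_B({\bf g})$.

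For the inclusion $L_B({\bf g}) \subseteq L_{B_1}({\bf g}_1) \cup L_{B_2}({\bf g}_2)$, first I would take any $v \in L_B({\bf g})$. Since $v$ is a decision node and $u$ is not, $v$ lies in $B_{u_1}$ or in $B_{u_2}$; by observation (i) the choice is unambiguous, say $v \in V(B_{u_1})$. Completeness of $v$ is a local condition (the existence of two outgoing edges in the alignment) and the edges incident to $v$ in $B[{\bf g}]$ and in $B_{u_1}[{\bf g}_1]$ coincide by Lemma \ref{lem:alignconj}, so $v$ is also complete in $B_{u_1}[{\bf g}_1]$. For minimality, any proper decision-node ancestor $w$ of $v$ in $B_{u_1}[{\bf g}_1]$ is again a proper decision-node ancestor of $v$ in $B[{\bf g}]$ (the added source $u$ and the copy of $B_{u_2}[{\bf g}_2]$ do not introduce new paths into $B_{u_1}[{\bf g}_1]$), so if any such $w$ were complete it would contradict the minimality of $v$ in $B[{\bf g}]$. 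Hence $v \in L_{B_1}({\bf g}_1)$.

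For the opposite inclusion, I would take $v \in L_{B_1}({\bf g}_1)$ (the case $v \in L_{B_2}({\bf g}_2)$ is symmetric). The same locality argument shows $v$ is complete in $B[{\bf g}]$. For minimality, I need to rule out the existence of a complete decision node $w$ in $B[{\bf g}]$ that is a proper ancestor of $v$. Such a $w$ cannot be $u$ (it is a conjunction node), and since it reaches a decision node in $B_{u_1}$ via a directed path whose internal nodes must carry variables of $\var(B_{u_1})$ (the relevant sinks have no out-edges and so cannot be passed through), decomposability forces $w$ to lie inside $V(B_{u_1})$. Then $w$ witnesses non-minimality of $v$ in $B_{u_1}[{\bf g}_1]$, a contradiction; thus $v \in L_B({\bf g})$.

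The only delicate point, and the one I would check most carefully, is the ``ancestor paths stay on one side'' claim used in both directions. It is not quite automatic because $B_{u_1}$ and $B_{u_2}$ may share sinks; the resolution is that sinks have no outgoing edges and are not decision nodes, so no path between two decision nodes of $B[{\bf g}]$ can transit from one side to the other. Once this is stated explicitly, both inclusions follow by routine case analysis and the corollary is immediate.
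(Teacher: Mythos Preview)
Your proposal is correct and follows essentially the same approach as the paper's proof: both arguments use Lemma~\ref{lem:alignconj} to identify the complete decision nodes of $B[{\bf g}]$ with the union of those in the two pieces, then check minimality in each direction by observing that no directed path between decision nodes can cross from one side to the other (the paper phrases this simply as ``due to the decomposability of $u$''). Your treatment is in fact more careful than the paper's, since you explicitly address the possibility of shared sinks and explain why they cannot serve as transit points; the paper's proof leaves this implicit.
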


\begin{proof}
It is clear from the construction of $B[{\bf g}]$ by
Lemma \ref{lem:alignconj} that the set of complete
decision nodes of $B[{\bf g}]$ is the union of the sets
of such nodes of $B_1[{\bf g}_1]$ and $B_2[{\bf g}_2]$.
Let $w \in L_B({\bf g})$. Assume w.l.o.g. that $w$ is a node
of $B_1[{\bf g}_1]$ as the minimality cannot be destroyed 
by moving from a graph to a subgraph, we conclude that 
$w \in L_{B_1}({\bf g}_1)$. 

Conversely, let $w \in L_{B_1}({\bf g}_1) \cup L_{B_2}({\bf g}_2)$.
We assume w.l.o.g. that $w \in L_{B_1}({\bf g}_1)$ 
We note that the construction of $B[{\bf g}]$ by Lemma \ref{lem:alignconj}
does not add new edges between vertices of $B_1[{\bf g}_1]$
and a path from a decision node of $B_2[{\bf g}_2]$ to a decision node
of $B_1[{\bf g}_1]$ is impossible due to the decomposability of $u$. 
Therefore, we conclude that $w \in L_B({\bf g})$.
\end{proof}

\begin{corollary} \label{col:freeconj}
With the notation and premises as in Lemma \ref{lem:alignconj},
$X_B({\bf g})=X_{B_1}({\bf g}_1) \cup X_{B_2}({\bf g}_2)$
\end{corollary}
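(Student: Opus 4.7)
The plan is to unfold the definition of $X_B({\bf g})$, apply Corollary \ref{col:alignconj} to replace $L_B({\bf g})$, and then use the disjointness structure imposed by decomposability of the source conjunction node $u$ to split the set operations cleanly between the $V_1$-side and the $V_2$-side.

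More concretely, I would first observe that, because $u$ is a decomposable conjunction node with children $u_1$ and $u_2$, we have a disjoint union $\var(B) = V_1 \sqcup V_2$, and correspondingly $\var({\bf g}) = \var({\bf g}_1) \sqcup \var({\bf g}_2)$ (recall ${\bf g}_i = Proj({\bf g},V_i)$). So
\[
\var(B) \setminus \var({\bf g}) \;=\; (V_1 \setminus \var({\bf g}_1)) \;\sqcup\; (V_2 \setminus \var({\bf g}_2)).
\]
Next, by Corollary \ref{col:alignconj}, $L_B({\bf g}) = L_{B_1}({\bf g}_1) \cup L_{B_2}({\bf g}_2)$, and for any $w \in L_{B_i}({\bf g}_i)$ the subgraph $B_w$ sits entirely inside $B_{u_i}$, so $\var(B_w) \subseteq V_i$. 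Hence the big union in the definition of $X_B({\bf g})$ splits as
\[
\bigcup_{w \in L_B({\bf g})} \var(B_w) \;=\; \Bigl(\bigcup_{w \in L_{B_1}({\bf g}_1)} \var(B_w)\Bigr) \;\sqcup\; \Bigl(\bigcup_{w \in L_{B_2}({\bf g}_2)} \var(B_w)\Bigr),
\]
with the left piece contained in $V_1$ and the right piece contained in $V_2$.

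From there I would apply the elementary set identity: if $A_1,A_2$ are disjoint and $C_i \subseteq A_i$, then $(A_1 \cup A_2) \setminus (C_1 \cup C_2) = (A_1 \setminus C_1) \cup (A_2 \setminus C_2)$. Instantiating this with $A_i = V_i \setminus \var({\bf g}_i)$ and $C_i = \bigcup_{w \in L_{B_i}({\bf g}_i)} \var(B_w)$ yields exactly $X_B({\bf g}) = X_{B_1}({\bf g}_1) \cup X_{B_2}({\bf g}_2)$ by the definition of each $X_{B_i}({\bf g}_i)$.

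I do not anticipate a substantial obstacle, since the heavy lifting was done in Lemma \ref{lem:alignconj} and Corollary \ref{col:alignconj}. The only point that needs care is verifying the containment $C_i \subseteq A_i$, which amounts to noting that the variable labelling any $w \in L_{B_i}({\bf g}_i)$, and hence every decision-node label below $w$ (by Lemma \ref{lem:assignpref1}(1), applied within $B_{u_i}$), lies outside $\var({\bf g}_i)$; combined with $\var(B_w) \subseteq V_i$ this gives $\var(B_w) \subseteq V_i \setminus \var({\bf g}_i) = A_i$, as required.
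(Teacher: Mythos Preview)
Your proposal is correct and follows essentially the same approach as the paper: both arguments use decomposability of $u$ to split $\var(B)$ into $V_1 \sqcup V_2$, invoke Corollary~\ref{col:alignconj} to split $L_B({\bf g})$, and then reduce to elementary set algebra. The paper phrases the computation by intersecting $X_B({\bf g})$ with each $V_i$ and identifying the result with $X_{B_i}({\bf g}_i)$, whereas you package the same calculation via a single set identity; your extra verification that $C_i \subseteq A_i$ is a bit more than strictly needed (the weaker $C_i \subseteq V_i$, immediate from decomposability, already suffices), but it does no harm.
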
 

\begin{proof}
Due to the decomposability of $u$, $X_B({\bf g})$ is the disjoint
union of $X_1=X_B({\bf g}) \cap \var(B_1)$ and $X_2=X_B({\bf g}) \cap \var(B_2)$. 
We demonstrate that for each $i \in \{1,2\}$
$X_i=X_{B_i}({\bf g}_i)$. The argument is symmetric for both possible values of $i$,
so we assume w.l.o.g that $i=1$. 

By definition $X_1=Y_0 \setminus Y_1$ where 
$Y_0=(\var(B) \setminus \var({\bf g}))  \cap \var(B_1)=\var(B_1) \setminus \var({\bf g}_1)$
and $Y_1=(\bigcup_{w \in L_B({\bf g})} \var(B_w)) \cap \var(B_1)$
By decomposability of $u$ and Corollary \ref{col:alignconj}, 
$Y_1=\bigcup_{w \in L_{B_1}({\bf g}_1)} \var(B_w)$. 
Substituting the obtained expressions fr $Y_0$ and $Y_1$
into $Y_0 \setminus Y_1$, we obtain the definition of $X_{B_1}({\bf g}_1)$.
\end{proof}


\begin{proof} {\bf (of Lemma \ref{lem:assignpref2})} 
By induction of $|\var(B)|$. 
Suppose that $|\var(B)|=0$.
Then $B$ consists of a singe sink node.
The sink cannot be a $False$ node because otherwise, we get a contradiction
with $|\mathcal{S}(B)|_{\bf g} \neq \emptyset$. 
Thus the sink is a $True$ node and hence $\mathcal{S}(B)=\{\emptyset\}$, 
any alignment of $B$ is $B$ itself and for any assignment ${\bf g}$,
$\mathcal{S}(B)|_{\bf g}=\mathcal{S}(B)$. 
Clearly, $X_B({\bf g})=\emptyset$ and 
$\mathcal{S}(B)|_{\bf g}=\emptyset^{0,1}$ as required by the lemma. 

Assume now that $\var(B)|>0$ 
and $u$ be the source node of $B$.

Assume that $u$ is a decision node. 
Let $x$ be the variable associated with $u$. 
If $\var({\bf g}) \cap \var(B)=\emptyset$ then 
$B[{\bf g}]=B=B_u$,
that is $\mathcal{S}(B)|_{\bf g}=\mathcal{S}(B_u)$. 
As $u$ is the only minimal decision node of $B$,
the lemma holds in this case. 

Otherwise, we observe that $x \in \var({\bf g})$. 
Indeed, let $y \in \var({\bf g}) \cap \var(B)$. 
This means that $B$ has a decision node $w$ 
associated with a variable $y$. 
If $w=u$ then $y=x$ and we are done. 
Otherwise, as $u$ is the source node $B$ has a path from $u$ 
to $w$. It follows that $x$ precedes $y$ in $\pi$. 
As $\pi_0$ is a prefix including $y$, it must also include $x$. 

Let $(x,i)$ be an element of ${\bf g}$.
Let $(u,v)$ be the outgoing edge of $u$ labelled with $i$. 
Let ${\bf g}_0={\bf g} \setminus \{x,i\}$. 
Let $Y=(\var(B) \setminus \var({\bf g})) \setminus \var(B_v)$. 
We note that by Lemma \ref{lem:decrestr},
$\mathcal{S}(B)|_{\bf g} \neq \emptyset$ implies that $\mathcal{S}(B_v)({\bf g}_0) \neq \emptyset$. 
Therefore, we may apply the induction assumption to $B_v[{\bf g}_0]$.

Assume that $L_{B_v}({\bf g}_0) \neq \emptyset$. 
Combining Lemma \ref{lem:decrestr} and the induction assumption,
we observe that

\begin{equation} \label{eq:align1}
\mathcal{S}(B)|_{\bf g}=\prod_{w \in L_{B_v}({\bf g}_0)} \mathcal{S}(B_w) \times (X_{B_v}({\bf g}_0) \cup Y)^{0,1}
\end{equation} 
The lemma is immediate from 
application of Corollaries \ref{col:alignrestr} and \ref {col:freerestr}
to \eqref{eq:align1}. 

If $L_{B_v}({\bf g}_0)=\emptyset$ then the same reasoning applies with
$\mathcal{S}(B)|_{\bf g}=(X_{B_v}({\bf g}_0) \cup Y)^{0,1}$ used instead
\eqref{eq:align1}.

It remains to assume that $u$ is a conjunction node.
Let $u_1$ and $u_2$ be the children of $u$. 
For $i \in \{1,2\}$, let $B_i=B_{u_i}$, $V_i=\var(B) \cap \var(B_i)$,
${\bf g}_i=Proj({\bf g},V_i)$, $\pi_i=\pi[V_i]$. 
By Lemma \ref{lem:conrestr},  for each $i \in \{1,2\}$,  
$B_i$ respects $\pi_i$, ${\bf g}_i$ is an assignment over a prefix
of $\pi_i$, and $\mathcal{S}(B_i)|_{{\bf g}_i} \neq \emptyset$. 
Therefore, we can apply the induction assumption to each $B_i[{\bf g_i}]$. 

Assume that both $L_{B_1}({\bf g}_1)$ and $L_{B_2}({\bf g}_2)$ are non-empty. 
Therefore, by Lemma \ref{lem:conrestr} combined with the induction
assumption, we conclude that

\begin{equation} \label{eq:align2}
\mathcal{S}(B)=\prod_{i \in \{1,2\}} (\prod_{w \in L_{B_i}({\bf g}_i)} (\mathcal{S}(B_w)) \times 
X_{B_i}({\bf g}_i)^{0,1})
\end{equation}

The lemma is immediate by applying Corollary \ref{col:alignconj}
and Corollary \ref{col:freeconj} to \eqref{eq:align2}. 

Assume that, say $L_{B_1}({\bf g}_1) \neq \emptyset$ while
$L_{B_2}({\bf g}_2)=\emptyset$. Then the same reasoning applies 
with $\mathcal{S}(B)=\prod_{w \in L_{B_1}({\bf g}_1)} (\mathcal{S}(B_w)) \times 
X_{B_1}({\bf g}_1)^{0,1} \times X_{B_2}({\bf g}_2)$ being used instead 
\eqref{eq:align2}. 
Finally, if $L_{B_1}({\bf g}_1)=L_{B_2}({\bf g}_2)=\emptyset$ then 
we use the same argumentation with 
$\mathcal{S}(B)=X_{B_1}({\bf g}_1) \times  X_{B_2}({\bf g}_2)$ 
instead \eqref{eq:align2}. 
\end{proof}
\end{document}